%% file: main_v3.tex
\documentclass[aps,prl,reprint,superscriptaddress,nofootinbib]{revtex4-2}

\usepackage[T1]{fontenc}
\usepackage[utf8]{inputenc}
\usepackage{amsmath,amssymb,amsthm,mathtools}
\usepackage{braket}
\usepackage{bm}
\usepackage{hyperref}
\usepackage{cleveref}
\usepackage{enumitem}
\usepackage{xcolor}
\usepackage{graphicx}
\usepackage{tikz}
\usepackage{quantikz}
\usepackage{orcidlink}
\usepackage{appendix}

\graphicspath{{./images/},{./imagesAppendix/}}

\definecolor{THc}{rgb}{0.9,0.3,0.2}

\hypersetup{colorlinks=true,linkcolor=blue,citecolor=blue,urlcolor=blue}

\newtheorem{theorem}{Theorem}
\newtheorem{proposition}{Proposition}
\newtheorem{lemma}{Lemma}

\newtheorem{definition}{Definition}

\newcommand{\ii}{\mathrm{i}}
\newcommand{\Tr}{\operatorname{tr}}

\newcommand{\E}{\mathbb{E}}
\newcommand{\cG}{\mathcal{G}}

\newcommand{\FAF}{\operatorname{FAF}}

\newcommand{\Dtr}{D_{\operatorname{tr}}}
\newcommand{\id}{\mathbb{I}}

\usepackage[normalem]{ulem}

\newcommand{\idg}[1]{{\bfseries #1)}}

\newcommand{\prlsection}[1]{{\em {#1}.---}}

\newcommand{\SM}{SM}

\begin{document}

\title{Practical Tests and Witnesses of Fermionic non-Gaussianity}

\author{Tobias Haug~\orcidlink{0000-0003-2707-9962}}
\email{tobias.haug@u.nus.edu}
\affiliation{Quantum Research Center, Technology Innovation Institute, Abu Dhabi, UAE}

\author{Xhek Turkeshi~\orcidlink{0000-0003-1093-3771}}
\affiliation{Institut f\"ur Theoretische Physik, Universit\"at zu K\"oln, Z\"ulpicher Strasse 77, 50937 K\"oln, Germany}

\author{Piotr Sierant~\orcidlink{0000-0001-9219-7274}}
\affiliation{Barcelona Supercomputing Center Plaça Eusebi G\"uell, 1-3 08034, Barcelona, Spain}

\date{\today}

\begin{abstract}
Fermionic Gaussian states describe free fermions and underlie the mean-field picture of matter, from metals to superconductors; they are also efficiently simulable on classical computers. 
Departures from Gaussianity — the correlations produced by interactions — are therefore what make a fermionic system hard to simulate classically and useful for quantum computation, analogous to the role of magic in stabilizer-based quantum computation. Yet detecting and quantifying such non-Gaussianity at scale has remained challenging. 
Here we introduce practical tests and witnesses of fermionic non-Gaussianity built on fermionic antiflatness, a measure derived from the two-point covariance matrix.
We estimate it with two protocols — a two-copy Bell measurement and a single-copy scheme using commuting Majorana bilinears — that determine whether a state is Gaussian or far from it at lower measurement cost than existing approaches, using only operations native to fault-tolerant hardware. 
For mixed states, a purity-corrected witness certifies non-Gaussianity and remains robust under strong noise; running it on the IQM quantum processor, we find that noise can both reduce and enhance non-Gaussianity. 
Finally, we show that preparing pseudorandom fermionic states requires extensive non-Gaussianity. Together, these tools enable the study and certification of non-Gaussian fermionic resources on present-day quantum devices.
% Detecting when a quantum state leaves the efficiently simulable fermionic Gaussian regime is a central task for benchmarking quantum devices and certifying fermionic magic resources. 
% We develop practical tests and witnesses based on fermionic antiflatness (FAF), a covariance-matrix-based measure of non-Gaussianity. 
% For $n$-qubit states, we estimate FAF using two complementary protocols: two-copy Bell measurements and a single-copy scheme based on commuting matchings of Majorana bilinears.
% These yield testers that distinguish pure Gaussian states from states $\epsilon$-far from the Gaussian set, using $O(n^2/\epsilon^2)$ two-copy Bell measurements or $O(n^3/\epsilon^4)$ single-copy measurements, improving the state of the art in the dependence on both $n$ and $\epsilon$ for fixed-copy protocols.
% For mixed states, we introduce a purity-corrected FAF witness that certifies non-Gaussianity and is highly robust to noise. With our witness, we demonstrate on the IQM quantum computer that noise can both reduce and enhance non-Gaussianity. 
% Finally, by examining pseudo non-Gaussianity, we show that the cryptographic task of pseudorandom-state generation requires extensive fermionic non-Gaussianity.
% Together, these results provide experimentally accessible tools for detecting, witnessing, and quantifying non-Gaussian fermionic resources.
\end{abstract}

\maketitle

 \let\oldaddcontentsline\addcontentsline% Store \addcontentsline
\renewcommand{\addcontentsline}[3]{}% Make 

Fermionic Gaussian states~\cite{knill2001fermionic, Jozsa08, bravyi2004lagrangian} are a fundamental class of quantum states that are fully characterized by the two-fermion covariance matrix~\cite{Negele18quantum}, and hence can be efficiently simulated on a classical computer. 
They play a central role across quantum information, condensed matter~\cite{Peschel09, Amico08, Surace22} and high-energy physics~\cite{Sala18, Zohar15, Kelman24}.
In quantum information, they arise naturally as the states generated by
matchgate circuits~\cite{valiant2001quantum,Terhal02}, underlying simulation, certification and benchmarking protocols~\cite{Gluza18, Wan23,zhao2021fermionic, majsak2025simple, Heyraud2025, Sierant2026theory, Denzler24, Bittel2025pac}.
Supplementing matchgate circuits with beyond-Gaussian operations enables quantum advantage~\cite{oszmaniec2022fermion}, and non-Gaussian fermionic resources~\cite{hebenstreit2019all, hebenstreit2020computational, Lumia24, Gottlieb05, Reardonsmith24flo, Debertolis25natural} play a role analogous to magic resources in stabilizer computation~\cite{bravyi2005universal, veitch2014resource, Leone22sre}.

In condensed-matter theory, fermionic Gaussian states are the cornerstone of mean-field theory: the closest Gaussian state to a generic many-body wavefunction is the Hartree-Fock or Bardeen-Cooper-Schrieffer ansatz~\cite{FetterWalecka03,Negele18quantum,AltlandSimons10,deGennes99,Schrieffer18}, while in high-energy physics they underpin weak-coupling perturbation theory~\cite{Weinberg95}. In both cases non-Gaussianity arises from interactions. Capturing it in strongly correlated regimes has driven a hierarchy of non-Gaussian variational ans\"atze---Gutzwiller-projected wavefunctions~\cite{vollhardt1984almost, bunemann1998gutzwiller}, slave-boson saddle-points~\cite{kotliar1986slaveboson}, Jastrow-Slater states~\cite{capello2005variational,becca2017quantum}, dynamical mean-field theory~\cite{georges1996dynamical}, and fermionic non-Gaussian variational families~\cite{shi2018variational,hackl2020geometry}---all building non-Gaussian correlations on a Gaussian backbone.

A central problem across all these communities is how to scalably probe non-Gaussianity in a given fermionic state. This is part of a broader program of quantifying how much of a given resource is present in a quantum system~~\cite{buhrman2008quantum,montanaro2013survey}. While for coherence, entanglement, and magic a rich toolbox of low-cost tests and witnesses has been developed~~\cite{baumgratz2014quantifying,soleimanifar2022testing,gross2021schur,haug2026efficient,tarabunga2025quantifying,hinsche2024singlecopy}, progress for fermionic non-Gaussianity has been considerably more limited. Recent proposals include fermionic tomography~\cite{bittel2025optimal,mele2025efficient}, fermionic convolution~\cite{lyu2024convolution,coffman2025measuring} and random purifications~\cite{walter2025random}. A further step in this direction is fermionic antiflatness (FAF), a family of covariance-matrix-based quantities that vanish on pure Gaussian states and probe fermionic non-Gaussianity in many-body systems~\cite{faf2025}.

Here, we propose practical methods to test and witness fermionic magic resources (see Fig.~\ref{fig:sketch}).
We show that the FAF of $n$-qubit states can be estimated with 
%\old{$O(n^3)$ sampling complexity, either from} 
$O(n^2)$ sampling complexity from simple two-copy Bell measurements or from a single-copy scheme that groups commuting Majorana bilinears.
Building on this, we test whether a pure state is Gaussian or $\epsilon$-far from it in trace distance using 
%\old{$O(n^2/\epsilon^2)$} 
$O(n/\epsilon^2)$ two-copy Bell measurements or $O(n^3/\epsilon^4)$ single-copy measurements---the lowest two-copy and single-copy cost so far (Tab.~\ref{tab:testcomplexity})---using only Clifford operations suited to fault-tolerant hardware.
%Our tests offer the lowest proven cost so far (see Tab.~\ref{tab:testcomplexity}), and are based on Clifford operations suited for fault-tolerant quantum computers.
\begin{figure}[t!]
	\centering	
	\includegraphics[width=0.48\textwidth]{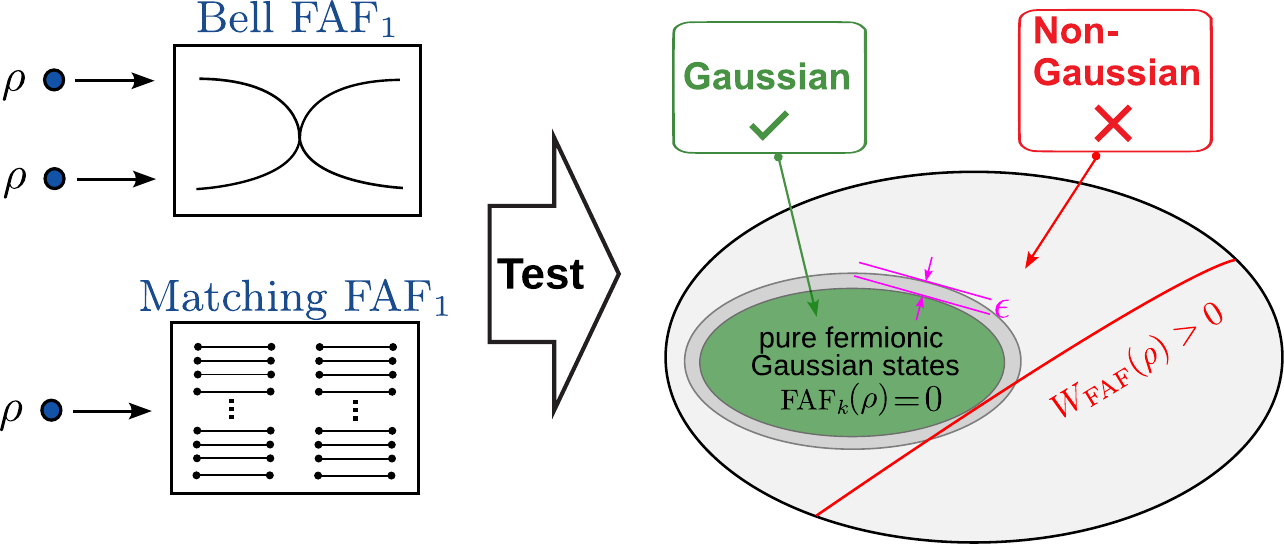}
	\caption{To test whether a given state $\rho$ is fermionic Gaussian or not, we use two-copy Bell measurements or single-copy scheme based on grouping matching bilinears to estimate fermionic anti-flatness (FAF).
	}
	\label{fig:sketch}
\end{figure}
For mixed states, a purity correction turns the FAF into a robust witness of non-Gaussianity that remains valid even under very strong depolarizing noise. We implement it on the IQM Garnet quantum computer, demonstrating that noise can both induce and suppress non-Gaussianity. Finally, we turn to quantum cryptography and show that pseudorandomness is not cheap in non-Gaussianity: pseudorandom states---indistinguishable from Haar-random ones by any polynomial-time quantum algorithm---exhibit near-maximal FAF and require $\Omega(n)$ local non-Gaussian gates.
%Our work demonstrates FAF as a practical tool to quantify the non-Gaussian resources of quantum devices.

\begin{table}[t!]
\centering
\setlength{\tabcolsep}{3.5pt}
\renewcommand{\arraystretch}{0.95}
\begin{tabular}{lcc}
\hline
Protocol & Samples & Copies \\
\hline
Bell $\FAF_1$ [this work]
&
%\old{$O(n^2/\epsilon^{2})$} 
$O(n/\epsilon^{2})$
&
2
\\
Matching $\FAF_1$ [this work]
&
$O(n^3/\epsilon^{4})$
&
1
\\
Random purification~\cite{walter2025random}
&
$O(n^2/\epsilon^{2})$
&
$O(n^2/\epsilon^2)$
\\
MG shadows/joint measurement~\cite{Wan23,zhao2021fermionic,majsak2025simple}
&
$O(n^4/\epsilon^{4})$
&
1
\\
Covariance spectrum~\cite{bittel2025optimal,mele2025efficient}
&
$O(n^5/\epsilon^{4})$
&
1
\\
Convolution~\cite{lyu2024convolution,coffman2025measuring}
&
\text{unknown}
&
3
\\
\hline
\end{tabular}
\caption{Pure-state Gaussianity testers: sample complexity and number of state copies in measurement round.  Here $\epsilon$ is the promised
trace-distance separation from the Gaussian set $\mathcal{G}_n$.  The displayed scalings keep only the leading polynomial
dependence on $n$ and $\epsilon^{-1}$; logarithmic factors in $n$ and
in the failure probability $\delta^{-1}$ are omitted. 
}
\label{tab:testcomplexity}
\end{table}

\prlsection{Quantifying fermionic non-Gaussianity}
We consider $n$ fermionic modes with $2n$ Majorana operators
$
\gamma_1,\ldots,\gamma_{2n}$, with anticommutation relation
$\{\gamma_a,\gamma_b\}=2\delta_{ab}\id$.
Jordan-Wigner mapping~\cite{Jordan1928} relates Majorana operators 
to the Pauli operators for the $j$th qubit $\{X_j, Y_j,Z_j\}$ via $\gamma_{2j-1}=\Big(\prod_{k<j} Z_k\Big)X_j$ and $\gamma_{2j}=
\Big(\prod_{k<j} Z_k\Big)Y_j $.
A (generally mixed) fermionic Gaussian state (FGS) takes the form 
\begin{equation}
\rho_G = \frac{\exp[-\tfrac{i}{2}\sum_{a<b}K_{ab}\gamma_a\gamma_b]}{\mathrm{tr}\left(\exp[-\tfrac{i}{2}\sum_{a<b}K_{ab}\gamma_a\gamma_b]\right)}
\label{eq:rhoG} 
\end{equation} 
with antisymmetric $K=-K^T\in\mathbb R^{2n\times 2n}$. 
The antisymmetric covariance matrix $\Gamma_\rho$ is given by
\begin{equation}
   (\Gamma_\rho)_{ab}=-i\frac{1}{2}\Tr(\rho[\gamma_a,\gamma_b]), 
\end{equation}
where singular values of $\Gamma_\rho$ are $\nu_j$, $j=1,\dots,n$ with each $\nu_j$ being twice degenerate and $0\le\nu_j\le1$. For Gaussian states, all higher-order Majorana correlation functions are
determined by \(\Gamma_{\rho_G}\) via Wick's theorem~\cite{Negele18quantum}.
Pure FGSs are characterized by
$\Gamma^2=-\id_{2n}$,
 or equivalently, $\nu_1=\cdots=\nu_n=1$.
We denote by $\cG_n$ the set of pure FGSs. 
Free fermionic unitaries are those generated by quadratic Majorana Hamiltonians,
\begin{equation}
    U_G=\exp\!\Big(-\frac{1}{4}\sum_{a,b}h_{ab}\gamma_a\gamma_b\Big),\quad h=-h^T\in\mathbb R^{2n\times 2n}.
\end{equation}

Now, to characterize Gaussianity, we consider the $k$th-order fermionic antiflatness (FAF)~\cite{faf2025} for $k\geq1$
\begin{equation}
\FAF_k(\rho)
=
n-\frac{1}{2}\Tr\!\left[(-\Gamma_\rho^2)^k\right]
=
n-\sum_{j=1}^n \nu_j^{2k}.
\label{eq:FAFdef}
\end{equation}
For pure states, \(\mathrm{FAF}_k=0\) iff the state is FGS, else \(\mathrm{FAF}_k>0\).
It is invariant under free fermionic unitaries $U_G$ via $\FAF_k(U_G\rho U^\dagger_G)=\FAF_k(\rho)$ and  $\FAF_k\leq n$.

We also consider another measure of fermionic non-Gaussianity, namely 
the distance to the closest FGS
\begin{equation}
    \epsilon_G(\psi)=\!
\min_{\phi_G\in\cG_n}\!
\Dtr\!\left(\psi,\phi_G\right)=\sqrt{1-\max_{\phi_G\in\cG_n}|\braket{\phi_G|\psi}|^2},
\label{eq:eGdef}
\end{equation}
where $\Dtr(\rho,\sigma)=\frac{1}{2}\Vert \rho-\sigma \Vert_1$ is the trace distance. 
The distance $\epsilon_G$ has a direct condensed-matter interpretation: it is the infidelity to the closest fermionic Gaussian state, i.e. the state-level analogue of the Hartree--Fock variational error.  Thus $\epsilon_G(\psi)$, and equivalently $\FAF_1(\psi)$ through the bounds below, quantifies the ``non-Hartree--Fockness'' of $\ket{\psi}$: the genuinely non-Gaussian many-body correlations that cannot be reproduced by any free-fermionic state. The two quantities are related as follows:
\begin{lemma}[Distance-to-FAF bound]
\label{lem:distance-faf}
For every pure state $\ket{\psi}$
\begin{equation}
\label{eq:faf-fidelity-bound-both}
\frac{1}{4n}\FAF_1(\psi)\le\epsilon_G^2(\psi)\le\frac{1}{2}\,\FAF_1(\psi).
\end{equation}
\end{lemma}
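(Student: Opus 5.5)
Both inequalities reduce to statements about how far the covariance matrix $\Gamma_\psi$ is from that of a pure FGS. Write $\FAF_1(\psi)=\sum_j(1-\nu_j^2)$, and let $F=\max_{\phi_G\in\cG_n}|\braket{\phi_G|\psi}|^2$, so that $\epsilon_G^2=1-F$.

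\emph{Upper bound} $\epsilon_G^2\le\frac12\FAF_1$. We exhibit a good Gaussian state by hand. Bring $\Gamma_\psi$ to canonical form with a free unitary $U_G$ (an orthogonal rotation of the Majoranas). In that frame $\Gamma=\bigoplus_j \nu_j\begin{pmatrix}0&1\\-1&0\end{pmatrix}$, so $\langle \ii\gamma_{2j-1}\gamma_{2j}\rangle=\nu_j$ up to sign conventions; the signs can be fixed by choosing $U_G$. Take $\phi_G$ to be the Gaussian state with $\ii\gamma_{2j-1}\gamma_{2j}=+1$ for all $j$, i.e. the computational basis state $\ket{0\cdots0}$ after Jordan--Wigner, with $Z_j=-\ii\gamma_{2j-1}\gamma_{2j}$ up to sign. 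Its projector is $\prod_j P_j$ with $P_j=(1+\ii\gamma_{2j-1}\gamma_{2j})/2$. A union bound on the complementary projectors $1-P_j$ gives
\[
F\ \ge\ \langle\psi|\textstyle\prod_j P_j|\psi\rangle\ \ge\ 1-\sum_j\bigl(1-\langle P_j\rangle\bigr)=1-\sum_j\frac{1-\nu_j}{2}.
\]
Since $1-\nu_j\le 1-\nu_j^2$ for $0\le\nu_j\le1$, this yields $\epsilon_G^2\le\frac12\sum_j(1-\nu_j^2)=\frac12\FAF_1$.

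\emph{Lower bound} $\FAF_1\le 4n\,\epsilon_G^2$. Let $\phi$ be the optimal Gaussian state, with covariance $\Gamma_\phi$ satisfying $\Gamma_\phi^2=-\id$. First bound each entry of the covariance difference by the trace distance. Each entry is the expectation of the observable $\ii\gamma_a\gamma_b$, which has norm $1$, so
\[
|(\Gamma_\psi-\Gamma_\phi)_{ab}|\le 2\Dtr=2\epsilon_G.
\]
Entrywise control alone would cost a factor $n^2$, so we need something sharper. The cleaner route uses $\FAF_1=\frac12\Tr(\id+\Gamma_\psi^2)$, which is the same as $\frac12\bigl(2n-\|\Gamma_\psi\|_F^2\bigr)$. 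Choose the frame in which $\Gamma_\phi$ is canonical, so $\phi=\ket{0\cdots0}$ and $F=|\braket{0|\psi}|^2$. The variational characterization of the singular values gives $\sum_j\nu_j\ge\sum_j\langle\ii\gamma_{2j-1}\gamma_{2j}\rangle_\psi$, and
\[
\sum_j\langle\ii\gamma_{2j-1}\gamma_{2j}\rangle_\psi=\sum_j\langle Z_j\rangle_\psi\ge n-2\langle\hat N\rangle_\psi,
\]
where $\hat N$ counts the flipped qubits. Expanding $\ket\psi=\sqrt F\ket0+\sqrt{1-F}\ket{\perp}$ and using that $\ket{\perp}$ has at most $n$ excitations gives $\langle\hat N\rangle\le n(1-F)$. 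Then
\[
\sum_j(1-\nu_j^2)\le 2\sum_j(1-\nu_j)\le 4n(1-F)=4n\,\epsilon_G^2.
\]

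The step that needs care is the inequality $\sum_j\nu_j\ge\sum_j\langle \ii\gamma_{2j-1}\gamma_{2j}\rangle$. It follows because the left side is the trace norm of $\Gamma$, halved to account for the double degeneracy, which dominates the pairing against any canonical complex structure. Equivalently, $\frac12\Tr(\Gamma_\psi J)\le\frac12\|\Gamma_\psi\|_1$ for orthogonal $J$. The remaining points are only sign and convention bookkeeping in the Jordan--Wigner correspondence.
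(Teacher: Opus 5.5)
Your proof is correct. The upper bound follows the paper's argument: canonical modes, the Fock state they define, the union bound $I-\prod_j P_j\le\sum_j(1-P_j)$ (the paper's $I-\ket{\Omega}\bra{\Omega}\le\sum_j n_j$), and then $1-\nu_j\le 1-\nu_j^2$.

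For the lower bound you share the paper's setup. You rotate the optimal Gaussian state to the vacuum, and $\langle\hat N\rangle_\psi\le n\,\epsilon_G^2$ holds because $\hat N$ annihilates the vacuum, so the cross terms vanish. Where you differ is in how you link $\FAF_1$ to the pair entries $\Gamma_{2j-1,2j}=1-2p_j$.

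\begin{itemize}
\item The paper writes $\FAF_1=n-\sum_{a<b}\Gamma_{ab}^2$ and discards every entry except the $n$ pair entries. This gives $\FAF_1\le 4\sum_j p_j(1-p_j)\le 4\sum_j p_j$. It needs nothing beyond dropping nonnegative terms, and it keeps the factor $1-p_j$.
\item You go through $\sum_j(1-\nu_j)$, using $1-\nu^2\le 2(1-\nu)$ and the trace-norm duality $\sum_j\nu_j=\tfrac12\|\Gamma_\psi\|_1\ge\tfrac12\Tr(\Gamma_\psi J)=\sum_j\Gamma_{2j-1,2j}$. This needs a standard trace (H\"older/von Neumann) inequality and loses the $1-p_j$ factor.
\end{itemize}

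In exchange, your route gives the symmetric sandwich $\epsilon_G^2\le\tfrac12\sum_j(1-\nu_j)\le n\,\epsilon_G^2$. This shows that both directions of the lemma are really statements about the ``trace-norm'' antiflatness $\sum_j(1-\nu_j)$. Passing to $\FAF_1$ then costs only the factor $2$ in $1-\nu\le 1-\nu^2\le 2(1-\nu)$.

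One small bookkeeping point, which the paper also glosses over. Parity-preserving free unitaries implement only $SO(2n)$, so for odd-parity states you cannot make all signs positive with $U_G$ alone, contrary to your remark that ``the signs can be fixed by choosing $U_G$''. Either add a reflection by a single Majorana $\gamma_a$, which also preserves $\mathcal G_n$, $\epsilon_G$ and $\FAF_1$, or simply take the Fock state whose occupations match the signs of the pair entries.
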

The proof is found in \SM{}~\ref{app:faf-fidelity}.
For the upper bound on $\epsilon_G^2$, we put the covariance matrix in canonical form and compare $\ket{\psi}$ with the Gaussian vacuum $\ket{\Omega}$ of the corresponding modes.  A union bound over occupations
gives
$
    1-|\langle\Omega|\psi\rangle|^2
    \le
    \frac12\sum_j(1-\nu_j^2)
    =
    \FAF_1/2.
$ 
Conversely, expanding $\ket{\psi}$ around a closest Gaussian state, chosen as the vacuum by a Gaussian change of basis, gives total occupation at most $n\epsilon_G^2(\psi)$.  The associated parity expectations are covariance matrix entries, yielding $\FAF_1\le4n\epsilon_G^2(\psi)$.

Both dependences in Eq.~\eqref{eq:faf-fidelity-bound-both} are essentially optimal: the delocalized cat state $\sqrt{1-\epsilon^2}\ket{0}^{\otimes n}+\epsilon\ket{1}^{\otimes n}$ (even $n\ge4$) saturates the $1/n$ factor, with $\FAF_1/(n\epsilon_G^2)\to4$ as $\epsilon\to0$, whereas a localized defect $\tfrac{1}{\sqrt2}(\ket{0000}+\ket{1111})\otimes\ket{0}^{\otimes(n-4)}$ has $\FAF_1=4$ and $\epsilon_G^2=1/2$ independently of $n$, ruling out any uniform bound $\epsilon_G^2\le C\,\FAF_1/n$ (see \SM{}~\ref{app:faf-fidelity}).

\prlsection{Two-copy Gaussianity testing}
While computing $\epsilon_G$ requires solving an optimization problem~\eqref{eq:eGdef}, $\FAF_k$ is especially attractive as it can be measured directly via the covariance matrix $\Gamma_\rho$~\cite{faf2025}, which can be achieved by direct measurement of the Majorana operators rewritten as Pauli operators,  via classical matchgate (MG) shadows~\cite{Wan23,zhao2021fermionic}, joint measurements of correlators~\cite{majsak2025simple}, or fermionic tomography~\cite{bittel2025optimal,mele2025efficient}, with sample and copy complexity given in Tab.~\ref{tab:testcomplexity}, see \SM{}~\ref{app:tests} for details.

We now propose a quantum protocol based on Bell measurements which enables more efficient $\FAF_1$ estimation and fermionic Gaussianity testing.
We define
$
G_a=\gamma_a\otimes\gamma_a,
$
for $a=1,\ldots,2n$.
The $G_a$ commute and are Hermitian involutions, allowing us to write
\begin{equation}
\label{eq:bellobs}
\widehat F_1
=
\frac{1}{2}
\Big(\sum_{a=1}^{2n}G_a\Big)^2.
\end{equation}

\begin{theorem}[FAF estimator]
\label{thm:bell-estimator}
For every state $\rho$,
\begin{equation}
\Tr\!\left[\widehat F_1\,\rho^{\otimes2}\right]=\FAF_1(\rho)
\end{equation}
where spectrum of $\widehat F_1$ is
\begin{equation}
\label{eq:spectrum}
\operatorname{spec}(\widehat F_1)
=
\left\{2q^2:q=0,1,\ldots,n
\right\},
\end{equation}
with largest eigenvalue $\lambda_{\max}=2n^2$.
\end{theorem}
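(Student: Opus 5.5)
The plan is to expand the square in Eq.~\eqref{eq:bellobs} directly and use the factorization of expectation values on $\rho^{\otimes2}$. For the spectrum, I would use the fact that the $G_a$ form a complete set of commuting Pauli-type observables on $2n$ qubits.

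\textbf{Expectation value.} First I check that the $G_a$ commute. For $a\neq b$, moving $\gamma_b\otimes\gamma_b$ past $\gamma_a\otimes\gamma_a$ produces a factor $(-1)^2=1$. Each $G_a$ also satisfies $G_a^2=\gamma_a^2\otimes\gamma_a^2=\id$. Hence
\begin{equation*}
\Big(\sum_a G_a\Big)^2 = 2n\,\id + \sum_{a\neq b} \gamma_a\gamma_b\otimes\gamma_a\gamma_b .
\end{equation*}
Taking the expectation in $\rho^{\otimes 2}$ gives $\Tr[(\gamma_a\gamma_b\otimes\gamma_a\gamma_b)\rho^{\otimes2}]=(\Tr\rho\gamma_a\gamma_b)^2$. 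For $a\neq b$ we have $\gamma_a\gamma_b=\tfrac12[\gamma_a,\gamma_b]$, so $\Tr(\rho\gamma_a\gamma_b)=i(\Gamma_\rho)_{ab}$ and its square is $-(\Gamma_\rho)_{ab}^2$. Summing over $a\neq b$ (the diagonal of $\Gamma_\rho$ vanishes) gives $-\|\Gamma_\rho\|_F^2$. Because $\Gamma_\rho$ is real antisymmetric, this equals $\Tr(\Gamma_\rho^2)=-\Tr(-\Gamma_\rho^2)$. Dividing by $2$ then yields $n-\tfrac12\Tr(-\Gamma_\rho^2)=\FAF_1(\rho)$ by Eq.~\eqref{eq:FAFdef}.

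\textbf{Spectrum.} Since the $G_a$ are commuting Hermitian involutions, they are simultaneously diagonalizable with joint eigenvalues $s_a\in\{\pm1\}$. On a joint eigenspace, $\sum_a G_a$ acts as $2n-2m$, where $m$ is the number of $s_a=-1$. Writing $2n-2m=2p$ with $p\in\{-n,\dots,n\}$, the operator $\widehat F_1$ acts as $2p^2$. This shows $\operatorname{spec}(\widehat F_1)\subseteq\{2q^2:q=0,\dots,n\}$.

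The main point is equality, i.e., that every $q$ actually occurs. Under Jordan--Wigner, each $\gamma_a$ is a Pauli string, and the $2n$ strings are independent: no nontrivial product of distinct $\gamma_a$ is proportional to $\id$, since they generate the full Clifford algebra of dimension $4^n$. Therefore the $2n$ commuting operators $G_a$ on $2n$ qubits generate a stabilizer group of full rank. I would check that this group does not contain $-\id$: any product of distinct $G_a$ equals $\pm P\otimes P$ with $P$ a non-identity Pauli, and this is not proportional to the identity.

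A full-rank stabilizer group without $-\id$ has every sign pattern $(s_a)$ realized by exactly one joint eigenvector, namely a Bell-type basis. Hence every $m\in\{0,\dots,2n\}$ occurs, so every $|p|=q\in\{0,\dots,n\}$ occurs, and the largest eigenvalue is $2n^2$, attained at $m=0$. The step I expect to need the most care is this independence and no-$-\id$ argument; the alternative route, an explicit count of the dimension of each joint eigenspace, should be unnecessary.
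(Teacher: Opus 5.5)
Your proof is correct and follows the paper's End Matter argument: you expand $K^2=2n\,\id+\sum_{a\neq b}\gamma_a\gamma_b\otimes\gamma_a\gamma_b$ and use $\Tr(\rho\gamma_a\gamma_b)=i(\Gamma_\rho)_{ab}$ to get the expectation value, then read off $\tfrac12(\sum_a g_a)^2\in\{2q^2\}$ from the joint $\pm1$ eigenvalues to get the spectrum. Your independence/no-$(-\id)$ stabilizer argument, showing that every value $2q^2$ is actually attained, goes beyond the paper, which only proves the inclusion; the same conclusion also follows from Proposition~\ref{prop:bell-map}, because the map $(u,v)\mapsto g$ there is a bijection from $\{0,1\}^{2n}$ onto $\{\pm1\}^{2n}$.
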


Operationally, one applies a Bell transformation between the two copies, measures the commuting eigenvalues $g_a=\pm1$ of $G_a$, outputs $X=\frac{1}{2}\left(\sum_{a=1}^{2n} g_a\right)^2$, and averages over measurements with $\E[X]=\FAF_1(\rho)$:

% PRL: Fig.~2 (Bell-measurement circuit) removed to shorten the Letter.
% \begin{figure}[t!]
% \centering
% 	\includegraphics[width=0.24\textwidth]{Bell.pdf}
% \caption{Bell-measurement for the two-copy $\FAF_1$ estimator, shown for $n=2$ modes per copy of $\rho$.
% The measured bits are post-processed according to \Cref{prop:bell-map} to obtain the single-shot eigenvalue $\lambda_q=2q^2$.
% }
% \label{fig:bell-circuit-paper}
% \end{figure}

\begin{proposition}[Bell measurement of FAF]
\label{prop:bell-map}
Let $(u_j,v_j)\in\{0,1\}^2$ be the two measured bits from the Bell measurement on mode pair $j$, with $u_j$ the bit on the first copy and $v_j$ the bit on the second copy. Define
\begin{equation}
\begin{split}
   g_{2j-1}&= (-1)^{u_j+\sum_{k<j} v_k},\\ g_{2j} &= -(-1)^{u_j+v_j+\sum_{k<j} v_k}, 
\end{split}
\end{equation}
and $q=\frac12\sum_{a=1}^{2n}g_a$.
Then, the measured Bell sample lies in the $\widehat F_1$ eigenspace with eigenvalue
\begin{equation}
\lambda(q)=\frac{1}{2}\Bigl(\sum_{a=1}^{2n}g_a\Bigr)^2=2q^2
\end{equation}
and the FAF estimator is obtained directly from the Bell bit string. 
\end{proposition}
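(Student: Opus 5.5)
The plan is to reduce the statement to the standard fact that a Bell measurement on a pair of qubits diagonalizes $X\otimes X$ and $Z\otimes Z$ simultaneously. I would then read off the eigenvalue of each $G_a$ from the measured bits via the Jordan--Wigner strings. Finally, I would evaluate $\widehat F_1$ on the resulting joint eigenstate using Eq.~\eqref{eq:bellobs}.

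\textbf{Step 1 (Jordan--Wigner form of $G_a$).} Apply the Jordan--Wigner mapping to each copy separately. Pair qubit $j$ of the first copy with qubit $j$ of the second copy, and write $X_jX_j'$ for $X_j\otimes X_j$, and similarly for $Y$ and $Z$. Then
\begin{equation*}
G_{2j-1}=\Big(\prod_{k<j}Z_kZ_k'\Big)X_jX_j',\qquad G_{2j}=\Big(\prod_{k<j}Z_kZ_k'\Big)Y_jY_j'.
\end{equation*}
Here I use that $\gamma_a\otimes\gamma_a$ factorizes into single-qubit Pauli products, each a tensor product of the same Pauli on both copies. On each pair, $Y_jY_j'=-(X_jX_j')(Z_jZ_j')$ because $XZ=-\ii Y$. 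The operators $X_jX_j'$ and $Z_jZ_j'$ commute, and every $G_a$ is a product of these $2n$ commuting pair operators. Hence all $G_a$ are diagonal in the product Bell basis.

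\textbf{Step 2 (bits to eigenvalues).} Take the Bell measurement on pair $j$ to be a CNOT from the first-copy qubit to the second-copy qubit, then a Hadamard on the first-copy qubit, then computational-basis readout. A direct check on the four Bell states (or Heisenberg conjugation of $XX$ and $ZZ$ through the circuit, which maps them to $Z\otimes\id$ and $\id\otimes Z$) shows the following. The outcome $u_j$ on the first copy gives the eigenvalue $(-1)^{u_j}$ of $X_jX_j'$, and the outcome $v_j$ on the second copy gives the eigenvalue $(-1)^{v_j}$ of $Z_jZ_j'$. Substituting into Step~1 gives $g_{2j-1}=(-1)^{u_j+\sum_{k<j}v_k}$ and $g_{2j}=-(-1)^{u_j+v_j+\sum_{k<j}v_k}$, as claimed. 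The post-measurement state is a joint eigenvector of all $G_a$ with eigenvalues $g_a$, and the outcome probabilities are the Born weights of $\rho^{\otimes 2}$ on these joint eigenspaces.

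\textbf{Step 3 (eigenvalue and estimator).} On that joint eigenvector, Eq.~\eqref{eq:bellobs} gives $\widehat F_1=\frac12(\sum_a g_a)^2$. Since $\sum_a g_a$ is a sum of $2n$ signs, it is even, so $q=\frac12\sum_a g_a$ is an integer in $[-n,n]$ and $\lambda=2q^2$, consistent with the spectrum in \Cref{thm:bell-estimator}. Averaging $X=2q^2$ over the Born distribution gives $\Tr[\widehat F_1\rho^{\otimes2}]$, which equals $\FAF_1(\rho)$ by \Cref{thm:bell-estimator}. Thus the estimator is computed directly from the bit string.

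The only delicate step is the sign and convention bookkeeping in Step~2. The labels $u_j$ and $v_j$ must match the chosen Bell circuit. The minus sign in $g_{2j}$ comes from $YY=-XX\,ZZ$. The parity string must contribute only the $Z_kZ_k'$ eigenvalues $(-1)^{v_k}$. I would verify this on the four Bell states of a single pair before assembling the strings.
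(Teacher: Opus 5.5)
Your proposal is correct and follows essentially the same route as the paper's End Matter argument: the CNOT-plus-Hadamard circuit reads out $X_jX_j'\mapsto(-1)^{u_j}$ and $Z_jZ_j'\mapsto(-1)^{v_j}$, the relation $Y_jY_j'=-(X_jX_j')(Z_jZ_j')$ supplies the minus sign, and combining these with the Jordan--Wigner strings gives the $g_a$ and hence $\lambda=\tfrac12(\sum_a g_a)^2=2q^2$. Your Heisenberg-picture check of the circuit and your derivation of the $YY$ sign fill in details that the paper states without proof.
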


Since one Bell shot gives \(X\in[0,2n^2]\) with \(\mathbb E[X]=\mathrm{FAF}_1(\rho)\), 
%\old{we have $X^2\le 2n^2X$.  Together with $\FAF_1(\rho)\le n$, this gives $\operatorname{Var}(X)\le 2n^3$.  Hence averaging $N$ independent Bell outcomes estimates $\FAF_1$ with root-mean-square error at most $(2n^3/N)^{1/2}$.}  
the crude bound is $\operatorname{Var}(X)\le 2n^3$. In \SM{}~\ref{app:fourth-moment} we prove the linear fourth-moment bound
$\E[X^2]=\Tr[\widehat F_1^{\,2}\rho^{\otimes2}]\le 46\,n\,\FAF_1(\rho)$,
valid for \emph{every} (pure or mixed) state $\rho$. Together with $\FAF_1(\rho)\le n$, this gives $\operatorname{Var}(X)\le 46\,n^2$, and averaging $N$ independent Bell outcomes estimates $\FAF_1$ with root-mean-square error at most $(46\,n^2/N)^{1/2}$.
Thus additive accuracy $\eta$ with failure probability $\delta$ is achieved with
\begin{equation}
    N_{\rm shots}^{\rm Bell}
    =
    O\!\left(
        \frac{n^2}{\eta^{2}}\log\frac1\delta
    \right).
\label{eq:nshots}
\end{equation}
two-copy Bell shots, using standard median-of-means amplification.

The Bell estimator also gives a particularly easy method for fermionic Gaussianity testing~\cite{buhrman2008quantum}. Rather than reconstructing the state, the goal of property testing is to decide whether a pure state belongs to the Gaussian manifold or is separated from it by a prescribed trace distance~\cite{bittel2025optimal,mele2025efficient}.
Formally, an $(\epsilon,\delta)$-tester accepts every pure Gaussian state with probability at least $1-\delta$, and rejects every pure state whose trace distance from the Gaussian manifold is at least $\epsilon$ with probability at least $1-\delta$.
Our test is one-sided: a pure Gaussian state is accepted with probability one, because the Bell observable has deterministic zero outcome. Non-Gaussian states are rejected whenever at least one Bell shot lands outside the zero eigenspace:
\begin{theorem}[Bell Gaussianity test]
\label{thm:bell-tester}
Suppose $\ket{\psi}$ is either a pure Gaussian or satisfies $\epsilon_G(\psi)\ge\epsilon$. Then,
\begin{equation}
\label{eq:bell-test-samples}
N_{\rm shots}^{\mathrm{Bell}}=O\left(
\frac{n }{\epsilon^2}\log\frac1\delta\right)
\end{equation}
two-copy Bell measurements suffice to reject every $\epsilon$-far state with probability at least $1-\delta$, while accepting pure Gaussian states with probability $1$.
\end{theorem}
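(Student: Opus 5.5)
The tester runs $N$ independent two-copy Bell measurements, computes $q$ from each bit string via Proposition~\ref{prop:bell-map}, and accepts iff every shot gives $q=0$. By Theorem~\ref{thm:bell-estimator}, $\widehat F_1\ge0$ with spectrum $\{2q^2\}$, and $\Tr[\widehat F_1\psi^{\otimes2}]=\FAF_1(\psi)=0$ for a pure Gaussian $\ket{\psi}$. Hence $\psi^{\otimes 2}$ is supported entirely on the zero eigenspace of $\widehat F_1$: a positive operator has zero expectation on a pure state only if it annihilates that state. Every shot therefore returns $q=0$, so Gaussian states are accepted with probability one.

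For soundness, let $P_0$ be the projector onto the $q=0$ eigenspace and $p=1-\Tr[P_0\psi^{\otimes 2}]$ the single-shot rejection probability. Because the nonzero eigenvalues of $\widehat F_1$ lie in $[2,2n^2]$, the naive bound $p\ge \FAF_1/(2n^2)$ is too weak. The key step is instead a second-moment (Paley--Zygmund-type) argument using the linear fourth-moment bound $\E[X^2]\le 46\,n\,\FAF_1(\psi)$ from \SM{}~\ref{app:fourth-moment}, which we take as given. Writing $X=X\mathbf 1_{X>0}$, Cauchy--Schwarz gives
\begin{equation}
\FAF_1(\psi)^2=\E[X\mathbf 1_{X>0}]^2\le \E[X^2]\,\Pr[X>0]\le 46\,n\,\FAF_1(\psi)\,p,
\end{equation}
so $p\ge \FAF_1(\psi)/(46n)$. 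Lemma~\ref{lem:distance-faf} then gives $\FAF_1(\psi)\ge 2\epsilon_G^2(\psi)\ge 2\epsilon^2$, and therefore $p\ge \epsilon^2/(23n)$.

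Since the shots are independent, an $\epsilon$-far state is accepted with probability at most $(1-p)^N\le e^{-pN}$. Choosing $N=\lceil 23n\epsilon^{-2}\ln(1/\delta)\rceil$ makes this at most $\delta$, which proves Eq.~\eqref{eq:bell-test-samples}.

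The main obstacle is the fourth-moment inequality. Without it, only the $1/n^2$ rejection rate above is available, giving $O(n^2/\epsilon^2)$ shots, which is the older bound. Everything else is routine once that inequality is assumed. If it had to be proved here, I would expand $(\sum_a G_a)^4$ into terms $\Tr[\gamma_a\gamma_b\gamma_c\gamma_d\rho]^2$, use Majorana anticommutation to group the products by index coincidences, and control the fully distinct four-index sum by relating it to $\Gamma_\rho$ in a way that vanishes when $\nu_j=1$ for all $j$. That last part is the delicate part.
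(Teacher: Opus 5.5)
Your proof is correct and is essentially the paper's argument. Acceptance with probability one follows because $\widehat F_1\succeq 0$ has zero expectation on $\psi^{\otimes 2}$ for pure Gaussian $\ket{\psi}$, and soundness follows from the Cauchy--Schwarz bound $\Pr[X\neq 0]\ge \E[X]^2/\E[X^2]\ge \FAF_1(\psi)/(46n)\ge \epsilon^2/(23n)$, using the fourth-moment theorem of \SM{}~\ref{app:fourth-moment} and Lemma~\ref{lem:distance-faf}; this yields the same $N=\lceil 23n\epsilon^{-2}\log(1/\delta)\rceil$ as the paper.
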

Indeed, if $\epsilon_G(\psi)\ge\epsilon$, then $\FAF_1\ge 2\epsilon^2$. 
%Since %$X\le 2n^2$,
%\[
%\Pr[X\neq0]\ge \frac{\E[X]}{2n^2}\ge \frac{\epsilon^2}{n^2}.
%\]
Since $X\geq 0$, using the fourth-moment bound of \SM{}~\ref{app:fourth-moment}, we get
\begin{equation*}
\Pr[X\neq0]\ \ge\ \frac{\E[X]^2}{\E[X^2]}\ \ge\ \frac{\E[X]}{46\,n}\ \ge\ \frac{\epsilon^2}{23\,n},
\end{equation*}
so $N=\lceil 23\,n\,\epsilon^{-2}\log(1/\delta)\rceil$ rounds suffice.

\prlsection{Single-copy Gaussianity testing}
Bell measurements require coherent control over two copies of $\rho$. Assuming we have only access to single-copy measurements, we propose another protocol with improved performance compared to previous schemes~\cite{majsak2025simple, bittel2025optimal, mele2025efficient, lyu2024convolution,coffman2025measuring}.
Our single-copy protocol builds on the fact that $\mathrm{FAF}_1(\rho)=n-\sum_{a<b}\langle B_{ab}\rangle_\rho^2$, with $B_{ab}=-i\gamma_a\gamma_b$ can be  estimated purely by squared bilinear expectation values.  The bilinears $B_{ab}$ can be partitioned into $2n-1$ measurement settings, each containing $n$ mutually commuting observables.  
In End Matter we show that measuring each setting on independent single copies and averaging products of outcomes from distinct shots enables additive estimation of $\mathrm{FAF}_1$ to precision $\eta$ with number of single-copy shots given by
\begin{equation}
    N_{\rm shots}^{\rm single}
    =
    O\!\left(
        \frac{n^3}{\eta^{2}}\log\frac1\delta
    \right),
\label{eq:nshotsSINGLE}
\end{equation}
%exhibiting the same scaling as 
a factor $n$ larger than the two-copy case~\eqref{eq:nshots}.

For Gaussianity testing, we find that single-copy testing requires more samples than Bell testing:
\begin{theorem}[Single-copy Gaussianity test]
\label{thm:single-copy-tester}
Suppose $\ket{\psi}$ is either a pure Gaussian state or satisfies
$\epsilon_G(\psi)\ge\epsilon$.  There is a single-copy measurement protocol
using
\begin{equation}
\label{eq:single-copy-test-samples}
    N_{\rm shots}^{\rm single}
    =
    O\!\left(
        \frac{n^3}{\epsilon^{4}}\log\frac1\delta
    \right)
\end{equation}
single-copy shots to reject every $\epsilon$-far state with probability at
least $1-\delta$, while accepting every pure Gaussian state with probability
at least $1-\delta$.
\end{theorem}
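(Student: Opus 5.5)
The plan is to reduce testing to additive estimation of $\FAF_1$ with the single-copy estimator of Eq.~\eqref{eq:nshotsSINGLE}, and then to use Lemma~\ref{lem:distance-faf} to translate the trace-distance promise into an $\FAF_1$ gap. For pure Gaussian $\ket{\psi}$ we have $\FAF_1(\psi)=0$. If instead $\epsilon_G(\psi)\ge\epsilon$, the upper bound $\epsilon_G^2\le\frac12\FAF_1$ gives $\FAF_1(\psi)\ge 2\epsilon^2$. The test therefore estimates $\FAF_1$ to additive precision $\eta=\epsilon^2/2$, accepts if the estimate is below $\epsilon^2$, and rejects otherwise. Substituting $\eta=\epsilon^2/2$ into Eq.~\eqref{eq:nshotsSINGLE} yields $O(n^3\epsilon^{-4}\log(1/\delta))$ shots. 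Both error types are then controlled by the failure probability $\delta$ of the estimator, which is two-sided; this is why Gaussian states are accepted only with probability $1-\delta$, not with probability one.

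The substance lies in establishing Eq.~\eqref{eq:nshotsSINGLE}, which I will do in the following order.

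\textbf{Settings.} Partition the $n(2n-1)$ bilinears $B_{ab}$ into $2n-1$ perfect matchings of the complete graph $K_{2n}$, using the standard round-robin 1-factorization. Within one matching the bilinears act on disjoint Majorana pairs, so they commute. A free-fermion (Clifford) Gaussian unitary maps them simultaneously to $Z_1,\dots,Z_n$, so one shot returns $\pm1$ outcomes $s_{ab}$ for all $n$ pairs in that matching.

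\textbf{Unbiased estimator.} For each setting $m$ take $N_m$ independent shots. Estimate $\langle B_{ab}\rangle^2$ by the U-statistic $\frac{1}{N_m(N_m-1)}\sum_{t\ne t'}s^{(t)}_{ab}s^{(t')}_{ab}$, which is unbiased because products are formed only across independent shots. Summing over pairs and settings and subtracting from $n$ gives an unbiased estimator $\widehat{\FAF}_1$.

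\textbf{Variance.} Write the per-setting estimator as $\sum_{(a,b)\in m}\bigl[(\bar s_{ab})^2-\text{diagonal correction}\bigr]$, with $\bar s_{ab}$ the empirical mean. The standard U-statistic variance formula gives
\[
\operatorname{Var}\le \frac{4}{N_m}\operatorname{Var}\Bigl(\sum_{(a,b)\in m}\langle B_{ab}\rangle s_{ab}\Bigr)+\frac{2}{N_m^2}\sum_{(a,b),(c,d)\in m}\operatorname{Cov}(s_{ab},s_{cd})^2 .
\]
The first term is bounded by $\frac{4}{N_m}\sum_{(a,b)\in m}\langle B_{ab}\rangle^2\cdot n\le 4n^2/N_m$ via Cauchy--Schwarz, since each $s_{ab}\in\{\pm1\}$. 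The second term is at most $2n^2/N_m^2$. Taking $N_m=N/(2n-1)$ and summing the $2n-1$ independent settings gives total variance $O(n^3/N)$. Median-of-means then yields $N=O(n^3\eta^{-2}\log(1/\delta))$.

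\textbf{Main obstacle.} The step I expect to be hardest is the linear term. A naive bound gives $n^2$ per setting, so $O(n^3)$ overall, and that is exactly what is needed. Any sharpening, such as exploiting $\sum_{a<b}\langle B_{ab}\rangle^2\le n$ across settings to approach the two-copy scaling, is not needed here. I will therefore keep the crude Cauchy--Schwarz bound and confirm only that the cross-shot quadratic term contributes a lower-order $n^3/N^2$ piece. For the test, this is dominated once $N\gtrsim n^3/\epsilon^4$.
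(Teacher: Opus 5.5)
Your reduction to additive $\FAF_1$ estimation at $\eta=\Theta(\epsilon^2)$ via Lemma~\ref{lem:distance-faf}, the round-robin layers and the cross-shot U-statistic all coincide with the paper's proof. The gap is in the variance accounting, at exactly the step you declared unnecessary to sharpen. With your crude per-setting bound, the linear term of setting $m$ is $4n^2/N_m$. Each of the $2n-1$ independent settings receives only $N_m=N/(2n-1)$ shots, so the total is
\[
\sum_{m=1}^{2n-1}\frac{4n^2}{N_m}=\frac{4n^2(2n-1)^2}{N}=\Theta\!\left(\frac{n^4}{N}\right),
\]
not $O(n^3/N)$. You effectively treated the per-setting shot count as the total budget. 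This gives only $N=O(n^4\eta^{-2}\log(1/\delta))$ and an $O(n^4\epsilon^{-4})$ tester. That is no better than sampling one bilinear at a time or using matchgate shadows, so the claimed $n^3$ scaling does not follow from your argument.

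The missing ingredient is the cross-setting constraint you dismissed. Keep your intermediate bound $\frac{4}{N_m}\,nS_m$ with $S_m=\sum_{e\in m}\langle B_e\rangle^2$; this is the paper's $\zeta_1\le nS_\ell$. Then sum over settings using $\sum_m S_m=\sum_{a<b}\langle B_{ab}\rangle^2=\sum_j\nu_j^2\le n$, which holds because physical covariance singular values are at most one. The linear contribution becomes at most $4n(2n-1)N^{-1}\sum_m S_m\le 4n^2(2n-1)/N=O(n^3/N)$, as in the paper.

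The per-layer bound $S_m\le n$ is individually tight: take a Gaussian state whose canonical pairing is layer $m$. But then every other layer has $S_{m'}=0$, so applying the bound to all $2n-1$ layers overcounts by a factor $2n-1$. Relatedly, your quadratic term sums to $\sum_m 2n^2/N_m^2=O(n^5/N^2)$ rather than $n^3/N^2$. This is harmless: it only requires $N\gtrsim n^{5/2}/\eta$ (the second term of Eq.~\eqref{eq:Ntot-eta}), which is dominated by $n^3/\eta^2$ at $\eta=\Theta(\epsilon^2)$.
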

The theorem follows by estimating $\mathrm{FAF}_1$ to additive accuracy $\eta=\Theta(\epsilon^2)$ with our single-copy protocol and using Lemma~\ref{lem:distance-faf}; see End Matter for details.
Our protocol provides a polynomial in $n$ improvement compared to previous single-copy schemes~\cite{Wan23,zhao2021fermionic,majsak2025simple, bittel2025optimal,mele2025efficient} by measuring $n$ commuting bilinears in each setting. Importantly, the Bell test remains parametrically stronger because it is one-sided: pure Gaussian states give the zero Bell outcome deterministically, thereby avoiding the additive-estimation $\epsilon^{-4}$ penalty.

\prlsection{Purity-corrected FAF witness}
While $\FAF_1$ is faithful for pure states, it is not by itself a mixed-state non-Gaussianity witness: mixed Gaussian states generally have $\FAF_1>0$.  
Here, we introduce the purity-corrected FAF witness, analogous to mixed-state witnesses for magic and entanglement~\cite{tarabunga2025quantifying,haug2026efficient}. Whenever this witness is positive, the state is certified to be non-Gaussian.

\begin{theorem}[FAF--purity witness]
\label{thm:purity-witness}
If an $n$-qubit state $\rho$ satisfies $W_{\FAF}(\rho)>0$ with
\begin{equation}
\label{eq:wfaf-def}
W_{\FAF}(\rho)=\FAF_1(\rho)-2n\left(1-\Tr(\rho^2)^{1/n}\right)
\end{equation}
then $\rho$ is not (mixed) FGS~\eqref{eq:rhoG}.
\end{theorem}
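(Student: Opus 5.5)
We prove the contrapositive: for every (mixed) FGS $\rho_G$ of the form \eqref{eq:rhoG} we show $W_{\FAF}(\rho_G)\le 0$, that is,
\begin{equation*}
\FAF_1(\rho_G)\le 2n\left(1-\Tr(\rho_G^2)^{1/n}\right).
\end{equation*}
Both sides are invariant under free fermionic unitaries $U_G$, since purity is unitarily invariant and $\FAF_1$ is invariant as noted after \eqref{eq:FAFdef}. We may therefore bring $K$ to its real canonical block form by an orthogonal (Gaussian) rotation. Then $\rho_G=\bigotimes_{j=1}^n \frac{1}{2}(\id+\nu_j Z_j)$ in the Jordan--Wigner picture, up to signs that we absorb into $\nu_j\in[0,1]$, with covariance singular values $\nu_j$. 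Limits such as $\nu_j=1$ (pure modes) are covered by continuity, or by allowing $K$ with infinite entries.

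In this product form both quantities are explicit. We have $\FAF_1(\rho_G)=\sum_j(1-\nu_j^2)$, and
\begin{equation*}
\Tr(\rho_G^2)=\prod_{j=1}^n \frac{1+\nu_j^2}{2}.
\end{equation*}
Set $p_j=(1+\nu_j^2)/2\in[1/2,1]$, so that $1-\nu_j^2=2(1-p_j)$. The desired inequality becomes
\begin{equation*}
2\sum_j(1-p_j)\le 2n\Bigl(1-\bigl(\textstyle\prod_j p_j\bigr)^{1/n}\Bigr),
\quad\text{i.e.}\quad
\Bigl(\prod_j p_j\Bigr)^{1/n}\le \frac{1}{n}\sum_j p_j.
\end{equation*}
This is exactly the AM--GM inequality, which completes the argument. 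Equality holds when all $\nu_j$ are equal, so the constant $2n$ is tight.

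The only step needing care is the reduction to canonical form. We must check that the Gaussian unitary implementing the orthogonal block-diagonalization of $K$ preserves $\Gamma$ up to orthogonal conjugation, so that $\Tr[(\Gamma^2)]$ is unchanged. This follows from $U_G\gamma_aU_G^\dagger=\sum_b O_{ab}\gamma_b$ with $O\in SO(2n)$; if $\det O=-1$ is needed, one extra Majorana reflection fixes it. We must also check that the canonical Gaussian state factorizes over modes with the stated single-mode purities. Both facts are standard, and after them the proof is the one-line AM--GM computation above.
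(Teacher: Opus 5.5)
Your proof is correct and is essentially the paper's own argument: reduce the Gaussian state to canonical form so that $\Tr(\rho^2)=\prod_j(1+\nu_j^2)/2$ and $\FAF_1=\sum_j(1-\nu_j^2)$, then apply AM--GM to the factors $(1+\nu_j^2)/2$. Your extra care about the canonical-form reduction and the limit $\nu_j=1$ makes explicit what the paper treats as standard.
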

For a Gaussian state with covariance singular values $\nu_j$ we have $\Tr(\rho^2)=\prod_{j=1}^n\frac{1+\nu_j^2}{2}$
and the bound follows immediately from the inequality of arithmetic and geometric means. 
The witness is bounded by $-n\le W_{\FAF}(\rho)\le n$ which follows from $0\le\FAF_1\le n$ and $2^{-n}\le \Tr(\rho^2)\le1$, see End Matter.  
Notably, $W_{\FAF}$ can be measured from the same two-copy data used for $\FAF_1$, because the qubit-wise Bell measurement diagonalizes both $\widehat F_1$ and the global swap operator $S$, defined by $S\ket{\alpha}\ket{\beta}=\ket{\beta}\ket{\alpha}$~\cite{bendersky2009general,garcia2013swap}.  From a Bell outcome in Proposition~\ref{prop:bell-map}, one computes the FAF sample $X$, with $\mathbb E[X]=\FAF_1(\rho)$, and the swap eigenvalue $X_{\rm pur}=(-1)^{\sum_j u_jv_j}$, with $\mathbb E[X_{\rm pur}]=\Tr(S\rho^{\otimes2})=\Tr(\rho^2)$.  Thus $W_{\FAF}$ is obtained by estimating these two means from the same measurement record and applying the nonlinear post-processing in Eq.~\eqref{eq:wfaf-def}.

Conveniently, our witness is highly robust to noise. Let us consider global depolarizing noise $\rho_p=(1-p)\ket{\psi}\!\bra{\psi}+p\frac{\id}{2^n}, 0\le p\le1$. Then, we find that if $\ket{\psi}$ is non-Gaussian, then $W_{\FAF}(\rho_p)>0$ for every $p<1$, correctly identifying $\rho_p$ as a non-Gaussian state (see \SM{}~\ref{app:depol-witness}).
We implement this witness on the IQM Garnet quantum processor for a matchgate circuit dressed with a tunable non-Gaussian gate, and find that hardware noise can both enhance and suppress non-Gaussianity in close agreement with a depolarizing-noise model (see End Matter and Fig.~\ref{fig:experiment}).

\prlsection{Pseudorandom non-Gaussianity}
The reach of non-Gaussianity probes extends beyond quantum information and many-body physics. Here we discuss an application to quantum cryptography, and in particular to the theory of pseudoresources: quantum states that contain only a small amount of a given resource yet appear, to any efficient observer, indistinguishable from states with an extensive amount. Typical Haar-random states have large quantum-resource content~\cite{Chitambar2019} and require exponentially many gates to prepare~\cite{knill1995approx}. Pseudorandom states provide a striking contrast: they can be generated efficiently, yet no polynomial-time quantum algorithm with polynomially many copies can distinguish them from Haar-random states with non-negligible advantage~\cite{ji2018pseudorandom,LaRacuente26,schuster2025random}. This computational indistinguishability has a resource-theoretic analogue: in pseudo-entanglement, pseudo-magic, and pseudo-coherence, states with only $g(n)=\omega(\log n)$ resources can mimic states with extensive $f(n)=\Theta(n)$ resource content~\cite{aaronson2022quantum,gu2023pseudomagic,haug2025pseudorandom}; see Tab.~\ref{tab:pseudo-resource-gaps}.

A popular construction is given by subset phase states~\cite{aaronson2022quantum},
\begin{equation}
\label{eq:subset-phase-main}
    \ket{\psi_{r,S}}
    =
    \frac{1}{\sqrt{2^q}}
    \sum_{x\in S}
    (-1)^{r(x)}
    \ket{x},
    \qquad
    |S|=2^q ,
\end{equation}
where $S\subseteq\{0,1\}^n$ and $r:S\to\{0,1\}$ is a phase function.  For $q=\omega(\log n)$ and quantum–secure pseudorandom functions $r$~\cite{zhandry2021construct}, these states are pseudorandom under standard cryptographic assumptions, while having only $\omega(\log n)$ entanglement, magic, and coherence.

\begin{table}[b]
\centering
\setlength{\tabcolsep}{3.5pt}
\renewcommand{\arraystretch}{0.95}
\begin{tabular}{lcc}
\hline
Pseudo-resource & $g(n)$ & $f(n)$ \\
\hline
Pseudoentanglement~\cite{aaronson2022quantum}
&
$\omega(\log n)$
&
$\Theta(n)$
\\
Pseudomagic~\cite{gu2023pseudomagic}
&
$\omega(\log n)$
&
$\Theta(n)$
\\
Pseudocoherence~\cite{haug2025pseudorandom}
&
$\omega(\log n)$
&
$\Theta(n)$
\\
Pseudo non-Gaussianity $\FAF_1$
&
$n-\operatorname{negl}(n)$
&
$n-\Theta(n^22^{-n})$
\\
\hline
\end{tabular}
\caption{Comparison of pseudo-resource gaps.  Here $g(n)$ denotes the resource
content of a pseudorandom ensemble that is computationally indistinguishable
from an ensemble with resource content $f(n)$.  For $\FAF_1$, both scalings are
nearly maximal, so any gap is negligible rather than extensive.  The notation
$\operatorname{negl}(n)$ denotes a function that decays faster than any inverse
polynomial.}
\label{tab:pseudo-resource-gaps}
\end{table}

We ask whether an analogous pseudo-resource shortcut exists for fermionic non-Gaussianity.  For $\FAF_1$, the answer is negative.  Subset phase states already have nearly maximal FAF for $q=\omega(\log n)$:
\begin{equation}
    \mathbb E[\FAF_1(\psi_{r,S})]
    \ge
    n-\frac{n(2n-1)}{2^q}
    =
    n-\operatorname{negl}(n),
\end{equation}
where $\text{negl}(n)$ denotes a negligible function which decays faster than any inverse polynomial in $n$, as we show in \SM{}~\ref{app:pseudo-nongaussianity}.  Haar-random states similarly satisfy $\FAF_1(\psi_{\rm Haar})=n-\operatorname{negl}(n)$~\cite{Turkeshi25pauli}.  More generally, any inverse-polynomial gap in $\FAF_1$ would be visible to our efficient $\FAF_1$ estimators, and hence would distinguish the ensemble from Haar random states.  Thus pseudorandomness forces Haar-like, nearly maximal $\FAF_1$; there is no pseudo-non-Gaussianity gap for this measure.

This result has an immediate consequence in near-term fault-tolerant computing, and in particular, in gate counting complexity. For circuits built from matchgates and local fermionic non-Gaussian gates~\cite{ReardonSmith24improved, Dias24classical, mele2025efficient, Paviglianiti26emergence}, preparing pseudorandom states requires at least $\Omega(n)$ non-Gaussian gates; see \SM{}~\ref{app:pseudo-nongaussianity}.  Hence, although pseudorandom states can be prepared in very small depth~\cite{schuster2025random}, their fermionic non-Gaussian gate cost must grow linearly with system size.  This mirrors the Clifford+$T$ setting, where pseudorandom-state preparation requires $\Omega(n)$ magic gates~\cite{grewal2024improved}.  The contrast with pseudo-entanglement, pseudo-magic, and pseudo-coherence stems from estimability: FAF is a low-degree correlation functional that remains efficiently measurable even at extensive values, whereas the corresponding entropic resource measures are not efficiently estimable in the same regime. 
We note that the exact value of the pseudoresource gap $f(n)$ vs $g(n)$ depends on the chosen resource measure~\cite{gu2023pseudomagic}; as such for other non-Gaussianity resource measures beyond FAF the pseudo non-Gaussianity behavior could be different.

\prlsection{Discussion}
We introduced practical witnesses and tests of fermionic magic resources built around two complementary protocols: a two-copy Bell measurement and a single-copy scheme based on commuting matchings of Majorana bilinears. Both require only Clifford operations and are thus naturally suited to early fault-tolerant devices, unlike fermionic convolution~\cite{lyu2024convolution,coffman2025measuring} or random purification~\cite{walter2025random}, which rely on costly non-Clifford resources~\cite{bravyi2005universal}. Matchgate-shadow~\cite{zhao2021fermionic,Wan23} and joint-measurement~\cite{majsak2025simple} schemes admit Clifford-compatible formulations, but at higher sampling complexity.

In parallel, the purity-corrected witness $W_{\rm FAF}$ certifies non-Gaussianity in mixed states, showing that noisy hardware can itself generate non-Gaussianity; under dephasing it can even persist in deep matchgate circuits (cf.\ \SM{}~\ref{app:numerics}). Compared with witnessing through violations of Wick's theorem~\cite{pachos2022quantifying,coffman2025measuring}, the FAF witness is more robust: we exhibit non-Gaussian states in which only one of exponentially many Wick terms is violated, yet the FAF witness always fires (see \SM{}~\ref{app:wick-vs-faf}).

Our work solves the question whether there is a gap in sampling complexity between testing and learning fermionic Gaussian states~\cite{walter2025random}. Indeed, any learning algorithm requires $\Omega(n^2)$ samples~\cite{walter2025random}, in contrast  testing is fundamentally easier with only $O(n)$ via our two-copy Bell measurement protocol.

While our algorithm has the best proven testing bound so far (Tab.~\ref{tab:testcomplexity} and \SM{}~\ref{app:tests}), its optimality remains an open problem. Curiously, there is a striking asymmetry with other resource theories:  Bell-measurement protocols allow efficient testing of entanglement~\cite{tarabunga2025quantifying,bendersky2009general,garcia2013swap}, magic~\cite{gross2021schur,haug2022scalable,haug2023efficient,haug2026efficient}, coherence~\cite{baumgratz2014quantifying,haug2025pseudorandom,streltsov2017colloquium} (see also \SM{}~\ref{app:coherence}) and bosonic Gaussianity~\cite{girardi2025gaussian} with only $O(1)$ samples~\cite{montanaro2013survey,gross2021schur,haug2025pseudorandom,girardi2025gaussian}. 
This asymmetry resurfaces at the level of cryptographic pseudoresources. For magic, entanglement, and coherence, pseudorandom states get away with only $\omega(\log n)$ resources, whereas pseudorandom fermionic states require $\Omega(n)$ non-Gaussian gates~\cite{ji2018pseudorandom,schuster2025random}: $\FAF_1$ leaves no room for a pseudo-gap. The mechanism is the efficient estimability of $\FAF_1$ via our Bell protocol: a low-degree covariance functional resolvable at extensive values would distinguish any sub-maximal ensemble from Haar-random states, whereas entropic measures evade efficient estimation there. Whether the testing and pseudoresource gaps reflect a single structural feature of free-fermion resource theory, or finer measures could close one or both, remains an appealing open question.

Beyond these questions, our protocols open experimentally accessible windows on beyond-free-fermion correlations, with natural targets in condensed matter~\cite{Bellomia26,Zavatti26corre,zavatti2026quantummagicstronglycorrelated,aditya2025growthspreadingquantumresources,aditya2025mpembaeffectsquantumcomplexity,ares2026nongaussianityrandomquantumstates}, many-body dynamics~\cite{Abanin19,Fisher23circuits,Defenu24,Sierant25mbl} where Ref.~\cite{Falcao26faf} took a first step, nuclear and high-energy physics~\cite{Robin2026directions,Santra25abelian}, and past many-body experiments~\cite{islam2015measuring,huang2022quantum,bluvstein2024logical}. Technically, the main targets are pushing the single-copy scaling below $O(n^3/\epsilon^4)$, echoing single-copy stabilizer testing~\cite{hinsche2024singlecopy}, and constructing explicit schemes for the higher-order witnesses $\FAF_k$ ($k>1$), which are intrinsically $2k$-copy observables (see \SM{}~\ref{app:higher-faf}).

\emph{Note added.} 
After completing this manuscript, we became aware of independent works addressing single-copy~\cite{poetri2026construct} and two-copy~\cite{tarabunga2026fermionicnongaussianitybellsampling} tests of fermionic non-Gaussianity, and FAF estimation protocol~\cite{leone2026an} that matches the scaling of Eq.~\eqref{eq:nshots}.

\begin{acknowledgments}

\textbf{Acknowledgments.}
We thank Stefano Cusumano, Guglielmo Lami, Lorenzo Leone, Zhenhuan Liu, Salvatore Francesco Emanuele Oliviero, Ingo Roth, Paolo Stornati, Bujiao Wu, Poetri Sonya Tarabunga, and Emanuele Tirrito for enlightening discussions, and M. Walter and F. Witteveen for comments on the early version of this manuscript.
The Python code for our work is available on GitHub~\cite{haug2025fermiontests}.
X.T. acknowledges support from DFG Emmy Noether Programme proposal ``\textit{Digital Quantum Matter Ouf-of-Equilibrium}'' No. 560726973, DFG under Germany's Excellence Strategy – Cluster of Excellence Matter and Light for Quantum Computing (ML4Q) EXC 2004/2 – 390534769, and DFG Collaborative Research Center (CRC) 183 Project No. 277101999 - project B01. 
P.S. acknowledges fellowship within the “Generación D” initiative, Red.es, Ministerio para la Transformación Digital y de la Función Pública, for talent attraction (C005/24-ED CV1), funded by the European Union NextGenerationEU funds, through PRTR.

\end{acknowledgments}

%\bibliography{fermionic_magic_resources}
\input{output.bbl}

%=========================== END MATTER ===========================
\onecolumngrid
\begin{center}
\vspace{0.35cm}
\textbf{\large End Matter}
\end{center}
\twocolumngrid
\setcounter{secnumdepth}{2}
\setcounter{section}{0}
\renewcommand{\thesection}{\Alph{section}}
\setcounter{figure}{1}

\prlsection{Experimental implementation on a noisy quantum processor}
\label{app:experiment}
We experimentally measure our non-Gaussianity witness $W_\text{FAF}$ on the IQM Garnet quantum computer in Fig.~\ref{fig:experiment}. We study a matchgate circuit combined with a non-Gaussian gate parameterized with angle $\theta$ which controls the amount of injected non-Gaussianity, with $\theta=0$ corresponding to a fully matchgate circuit, while $\theta=\pi/2$ will induce the maximal $W_{\text{FAF}}=4$ (see Fig.~\ref{fig:experiment}a). Our experiment in Fig.~\ref{fig:experiment}b matches closely a noisy simulation subject to local depolarizing noise. Notably, the noise of the quantum computer can both increase and reduce non-Gaussianity: For small $\theta$, we observe an increase in $W_{\text{FAF}}$ compared to the noise-free simulation, while for large $\theta$ we observe a lower $W_{\text{FAF}}$.
In fact, in simulations of Fig.~\ref{fig:experiment}c, we find this behavior persists for $\theta\lesssim \pi/5$, where for small $p$, increasing noise yields an increase in  $W_{\text{FAF}}$, while for larger $\theta$ and $p$ we observe a monotonic decay towards non-positive $W_{\text{FAF}}$. This noise-induced non-Gaussianity can also be observed across different noise-models, such as dephasing or amplitude damping for general matchgate circuit constructions, and persist even in the steady-state of very deep circuits (see \SM{}~\ref{app:numerics}).

\begin{figure*}[t!]
	\centering
    \includegraphics[width=0.95\textwidth]{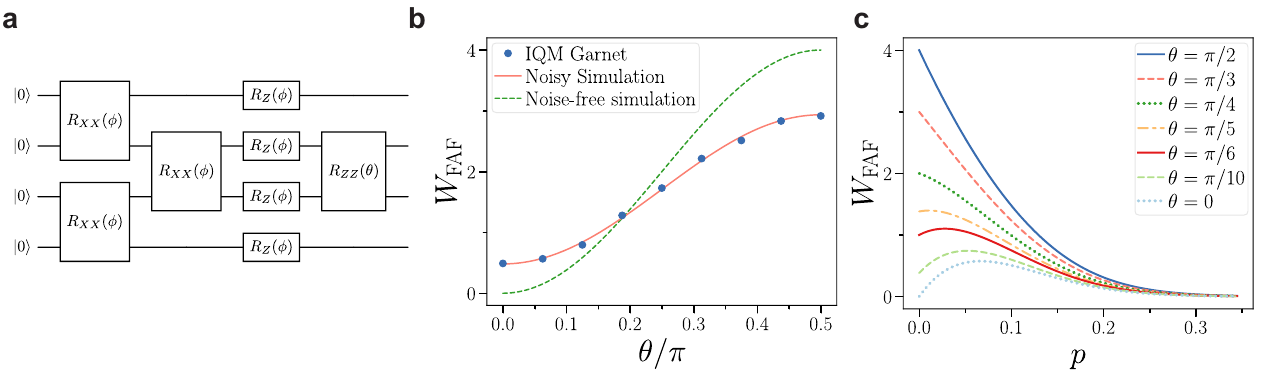}
\vspace{-0.3cm}
\caption{Experimental measurement of fermionic non-Gaussianity witness $W_\text{FAF}$ on IQM Garnet quantum computer. \idg{a} State preparation circuit with $n=4$ qubits of parameterized matchgates $R_{xx}=\exp(-i\phi/2 X\otimes X)$  and $R_{z}=\exp(-i\phi/2 Z)$ with $\phi=\pi/2$ and non-Gaussian gate $R_{zz}=\exp(-i\theta/2 Z\otimes Z)$. \idg{b} $W_\text{FAF}$ against $\theta$ for implementation with IQM Garnet quantum computer (blue), noisy simulation (orange) and noise-free simulation (green). Noisy simulation assumes single-qubit depolarizing noise $\Lambda_p=(1-p)\rho+p\frac{\id}{2}$ applied after every two-qubit gate with $p=0.036$. Each experimental datapoint is extracted from $N_\text{shots}=20000$ Bell measurements.
\idg{c} Simulation of $W_\text{FAF}$ against $p$ for different $\theta$. We find that for small $\theta \lesssim \pi/5$, noise $p$ can yield an increase in $W_\text{FAF}$, while for larger $\theta$ the witness decreases with $p$.
}
	\label{fig:experiment}
\end{figure*}

%\prlsection{FAF estimation and Gaussianity testing}
%\label{app:additional-proofs}
\prlsection{Bell estimator.}
Let $G_a=\gamma_a\otimes\gamma_a$ and $K=\sum_{a=1}^{2n}G_a$.  The operators $G_a$ commute pairwise: for $a\neq b$, the two minus signs from exchanging Majoranas in the two copies cancel.  Moreover $G_a^2=\id$, and hence
\( K^2=2n\,\id+2\sum_{a<b}\gamma_a\gamma_b\otimes\gamma_a\gamma_b .
\)
Denoting $B_{ab}=-i\gamma_a\gamma_b$ and $\Gamma_{ab}=\Tr(\rho B_{ab})$, we have $\Tr(\rho\,\gamma_a\gamma_b)=i\Gamma_{ab}$ for $a<b$.  Therefore
\[
    \Tr(K^2\rho^{\otimes2})
    =
    2n
    -
    2\sum_{a<b}\Gamma_{ab}^2
    =
    2n-\|\Gamma_\rho\|_F^2,
\]
where \(\|\Gamma_\rho\|_F^2\equiv
\sum_{a,b=1}^{2n}(\Gamma_\rho)_{ab}^2
=2\sum_{a<b}(\Gamma_\rho)_{ab}^2\).
Thus, for $\widehat F_1:=K^2/2$,
\[
    \Tr(\widehat F_1\rho^{\otimes2})
    =
    n-\frac12\|\Gamma_\rho\|_F^2
    =
    \FAF_1(\rho).
\]
Hence the Bell observable $\widehat F_1$ gives an unbiased estimator of the $\FAF_1$.

The operators $G_a$ are Hermitian and commute pairwise, and hence admit a common eigenbasis.  In this basis, $G_a\ket{g}=g_a\ket{g}$ with $g_a=\pm1$, where the eigenvalues are $\pm1$ because $G_a^2=\id$.  Therefore $K\ket{g}=(\sum_a g_a)\ket{g}$.  The sum of $2n$ variables $g_a$ is an even integer $2s$, with $s=-n,\ldots,n$, so the possible outcomes of $\widehat F_1$ measurement are $2q^2$, with $q=0,\ldots,n$, resulting in Theorem~\ref{thm:bell-estimator}.

\prlsection{Bell-bit readout.}
The eigenvalues $g_a$ are obtained from qubit-wise Bell measurements.  For mode $j$, after a CNOT from copy $1$ to copy $2$ followed by a Hadamard on copy $1$, let $(u_j,v_j)$ be the two measured bits.  With this convention, the measured eigenvalues of the commuting two-copy Pauli operators
are  $X_j^{(1)}X_j^{(2)} \mapsto (-1)^{u_j}$ and $Z_j^{(1)}Z_j^{(2)} \mapsto (-1)^{v_j}$, while $Y_j^{(1)}Y_j^{(2)} \mapsto -(-1)^{u_j+v_j}$.  Combining this with the Jordan--Wigner strings form~\cite{Jordan1928} gives
\[
    g_{2j-1}
    =
    (-1)^{u_j+\sum_{k<j}v_k},
    \qquad
    g_{2j}
    =
    -(-1)^{u_j+v_j+\sum_{k<j}v_k}.
\]
The measured Bell string therefore determines the Bell-estimator outcome as $\lambda=\frac12(\sum_a g_a)^2$, justifying Proposition~\ref{prop:bell-map}.

\prlsection{Definition of testing.} We give a formal definition of testing~\cite{buhrman2008quantum}:
\begin{definition}[Fermionic Gaussianity testing]
An \((\epsilon,\delta)\)-tester for pure-state fermionic Gaussianity is a
measurement protocol which, given copies of an unknown pure state
\(|\psi\rangle\), satisfies the following two properties:
\begin{align*}
&(i)\quad |\psi\rangle\in\mathcal G_n
&&\Longrightarrow\quad
\Pr[\mathrm{accept}]\ge 1-\delta ,\\
&(ii)\quad \epsilon_G(\ket{\psi})\ge \epsilon
&&\Longrightarrow\quad
\Pr[\mathrm{reject}]\ge 1-\delta .
\end{align*}
\end{definition}

\prlsection{FAF--purity witness.}
For a mixed FGS with covariance singular values $\nu_j\in[0,1]$, the purity and antiflatness are $\Tr(\rho^2)=\prod_j(1+\nu_j^2)/2$ and $\FAF_1=n-\sum_j\nu_j^2$.  The arithmetic--geometric mean inequality applied to the factors
$(1+\nu_j^2)/2$ gives
\[
    \Tr(\rho^2)^{1/n}
    \le
    \frac1n\sum_{j=1}^n\frac{1+\nu_j^2}{2}
    =
    1-\frac{\FAF_1}{2n}.
\]
Equivalently, every mixed Gaussian state satisfies
\(    \FAF_1(\rho)
    \le
    2n\left(1-\Tr(\rho^2)^{1/n}\right).
\)
Thus the quantity $W_{\FAF}(\rho):=\FAF_1(\rho)-2n(1-\Tr(\rho^2)^{1/n})$ is non-positive on all mixed Gaussian states, and any positive value certifies fermionic non-Gaussianity, leading to Theorem~\ref{thm:purity-witness}.

% \vspace{0.2cm}

\prlsection{Single-copy estimator for $\FAF_1$}
\label{app:single-copy-methods}
We describe the single-copy estimator of the antiflatness
\(
    \FAF_1(\rho)
    =
    n-\sum_{1\le a<b\le 2n}\langle B_{ab}\rangle_\rho^2,
\)
denoting $B_{ab}\equiv-i\gamma_a\gamma_b$ .
The key observation is that two distinct bilinears $B_{ab}$ and $B_{cd}$
commute whenever the Majorana pairs $\{a,b\}$ and $\{c,d\}$ are disjoint.
Thus the bilinears can be grouped into commuting measurement settings.
We use the following explicit decomposition.  Let $m=2n-1$,
$[x]_m:=1+((x-1)\bmod m)$.  For
$u\neq v$, write
$\operatorname{ord}(u,v):=(\min\{u,v\},\max\{u,v\})$.  For
$\ell=1,\ldots,m$, define
\begin{equation}
    \mathcal M_\ell
    \!=\!
    \{(\ell,2n)\}
    \cup
    \left\{
        \operatorname{ord}\big([\ell+j]_m,[\ell-j]_m\big) \!:\!
        j\!=\!1,\ldots,n-1
    \right\}.
    \label{eq:measurement-layers}
\end{equation}
Each $\mathcal M_\ell$ contains $n$ disjoint pairs, and the layers form a
partition
$
    \big\{(a,b):1\le a<b\le 2n\big\}
    =
    \bigsqcup_{\ell=1}^{2n-1}\mathcal M_\ell .
    \label{eq:layer-partition}
$
Indeed, pairs involving $2n$ appear as $(\ell,2n)$, while a pair $(a,b)$ with $1\le a<b\le m$ belongs to the unique layer satisfying $2\ell\equiv a+b\pmod m$. 
In a Jordan--Wigner encoding, each $B_{ab}$ is a Pauli string.  Hence each layer $\mathcal M_\ell$ is an abelian Pauli measurement: a Clifford circuit can map the $n$ commuting strings $\{B_e:e\in\mathcal M_\ell\}$ to single-qubit $Z$ measurements, after which the outcomes give the vector $X_r$ used in Eq.~\eqref{eq:Sl-ustat}.  Thus the single-copy protocol requires only Clifford operations and computational-basis measurements.

For each layer $\mathcal M_\ell$, we perform $M$ independent single-copy measurements of the commuting family $\{B_e:e\in\mathcal M_\ell\}$.  Let $X_r=(X_r^e)_{e\in\mathcal M_\ell}\in\{\pm1\}^n$ be the outcome vector in shot $r$, and set $\mu_e:=\langle B_e\rangle_\rho$ and $S_\ell:=\sum_{e\in\mathcal M_\ell}\mu_e^2$.  We estimate $S_\ell$ by
\begin{equation}
    \widehat S_\ell
    =
    \binom{M}{2}^{-1}
    \sum_{1\le r<s\le M}
    \sum_{e\in\mathcal M_\ell}
    X_r^eX_s^e .
    \label{eq:Sl-ustat}
\end{equation}
Since different shots are independent,
$\mathbb E[X_r^eX_s^e]=\mu_e^2$ for $r\neq s$, and therefore
$\mathbb E[\widehat S_\ell]=S_\ell$.  Hence
$
    \widehat{\FAF}_1
    :=
    n-\sum_{\ell=1}^{2n-1}\widehat S_\ell
    \label{eq:F1hat}
$
is an unbiased estimator of $\FAF_1(\rho)$.

We now bound its variance.  For a fixed layer, define $h(X,Y):=\sum_{e\in\mathcal M_\ell}X^eY^e$ for two independent outcome vectors $X,Y$.  In the variance of the pair average \eqref{eq:Sl-ustat}, the only nonzero contributions come from identical shot
pairs and from pairs sharing one shot.  Thus
\(
    \operatorname{Var}(\widehat S_\ell)
    =
    \binom{M}{2}^{-1}\big(2(M-2)\zeta_1+\zeta_2\big),
\)
where
$\zeta_1=\operatorname{Cov}(h(X,Y),h(X,Z))$ and $\zeta_2=\operatorname{Var}(h(X,Y))$, with $X,Y,Z$ independent.  Moreover, since \(Y\) is independent of \(X\) and
\(\mathbb E[Y^e]=\mu_e\), we have \(\mathbb E[h(X,Y)\mid X]=\sum_{e\in\mathcal M_\ell}\mu_e X^e\). Thus, writing \(X,Y,Z\) for independent layer outcomes,
\begin{equation}
\begin{aligned}
    \zeta_1
    &=
    \mathbb E\!\left[h(X,Y)h(X,Z)\right]
    -
    \mathbb E[h(X,Y)]\,\mathbb E[h(X,Z)] \\
    &=
    \mathbb E\!\left[
        \mathbb E[h(X,Y)\mid X]\,
        \mathbb E[h(X,Z)\mid X]
    \right]
    -
    S_\ell^2  \\
    &=
    \mathbb E\!\left[
        \left(\sum_{e\in\mathcal M_\ell}\mu_e X^e\right)^2
    \right]
    -
    S_\ell^2
    \le
    \mathbb E\!\left[
        \left(\sum_{e\in\mathcal M_\ell}\mu_e X^e\right)^2
    \right]
.
\label{eq:inter}
\end{aligned}
\end{equation}
Here the second line uses the conditional independence of \(h(X,Y)\) and \(h(X,Z)\) given \(X\), and using the Cauchy--Schwarz inequality together with \((X^e)^2=1\) to bound the final expectation value in \eqref{eq:inter}, we find $\zeta_1 \le nS_\ell$.
Consequently,
\begin{equation}
    \operatorname{Var}(\widehat S_\ell)
    \lesssim
    \frac{nS_\ell}{M}
    +
    \frac{n^2}{M^2}.
    \label{eq:Sl-var-bound}
\end{equation}

Different layers are measured on independent batches of shots.  Using
$\sum_\ell S_\ell=\sum_{a<b}\langle B_{ab}\rangle_\rho^2
=\|\Gamma_\rho\|_F^2/2\le n$, we obtain
$
    \operatorname{Var}(\widehat{\FAF}_1)
    \lesssim
    \frac{n^2}{M}
    +
    \frac{n^3}{M^2}.
    \label{eq:FAF1-var-bound}
$
Thus root-mean-square additive accuracy $\eta$ is achieved with
$M=O(n^2\eta^{-2}+n^{3/2}\eta^{-1})$ shots per layer, or
\begin{equation}
    N_{\rm shots}
    =
    O\!\left(
        n^3\eta^{-2}
        +
        n^{5/2}\eta^{-1}
    \right)
    \label{eq:Ntot-eta}
\end{equation}
single-copy measurements in total.  For $\eta\lesssim\sqrt n$, the first
term dominates and Eq.~\eqref{eq:nshotsSINGLE} follows.  A standard median-of-independent-runs amplification adds a factor $O(\log(1/\delta))$ for failure probability $\delta$.

For pure-state Gaussianity testing, Lemma~\ref{lem:distance-faf} gives
$\epsilon_G(\psi)^2\le \FAF_1(\psi)/2$.  Hence every
$\epsilon$-far pure state satisfies $\FAF_1(\psi)\ge2\epsilon^2$.  Taking
$\eta=\Theta(\epsilon^2)$ in Eq.~\eqref{eq:Ntot-eta} gives
\begin{equation}
    N_{\rm shots}
    =
    \widetilde O\!\left(
        n^3\epsilon^{-4}
        +
        n^{5/2}\epsilon^{-2}
    \right)
    =
    \widetilde O(n^3\epsilon^{-4}),
    \label{eq:Ntot-eps}
\end{equation}
where logarithmic factors in the failure probability are suppressed. This demonstrates Theorem~\ref{thm:single-copy-tester}.

\let\addcontentsline\oldaddcontentsline

%======================= SUPPLEMENTAL MATERIAL =======================
\appendix

\onecolumngrid

\newpage

%\appendix
\setcounter{secnumdepth}{2}

\renewcommand{\thesection}{\Alph{section}}
\renewcommand{\thesubsection}{\arabic{subsection}}
%\renewcommand*{\theHsection}{\thesection}

%\makeatletter
%\renewcommand{\l@section}[2]{%
%  \@dottedtocline{1}{1.5em}{2.8em}{#1}{#2}%
%}
%\makeatother

%\titleformat{\section}[hang]{\normalfont\bfseries}{\SM~\thesection:}{0.5em}{\centering}

%\setcounter{secnumdepth}{1}

\clearpage
\begin{center}

\textbf{\large Supplemental Information}
\end{center}
%\setcounter{equation}{0}
%\setcounter{figure}{0}
%\setcounter{table}{0}

%\makeatletter
%\renewcommand{\theequation}{S\arabic{equation}}
%\renewcommand{\thefigure}{S\arabic{figure}}
%\renewcommand{\bibnumfmt}[1]{[S#1]}
%\newtheorem{thmS}{Theorem S\ignorespaces}
%\newtheorem{lemmaS}{Lemma S\ignorespaces}

%\newtheorem{claimS}{Claim S\ignorespaces}

%\newtheorem{definitionS}{Definition S\ignorespaces}

In the Supplemental Material, we provide further proofs underlying our main results as well as additional findings.

\tableofcontents

\section{FAF and Gaussian fidelity}
\label{app:faf-fidelity}

We prove the two-sided fidelity bounds and the localized counterexample of the main text. 
Let us recall the bound:
\begin{equation}
\label{eq:faf-fidelity-bound-both_SM}
\frac{1}{4n}\FAF_1(\psi)\le\epsilon_G^2(\psi)\le\frac{1}{2}\,\FAF_1(\psi).
\end{equation}

Let us begin with the right-hand side of Eq.~\eqref{eq:faf-fidelity-bound-both_SM}. Since both \(\epsilon_G(\psi)\) and \(\FAF_1(\psi)\) are invariant under Gaussian unitaries, we may work in a canonical Majorana basis \(\{\widetilde\gamma_a\}\) in which
\begin{equation}
    \Gamma_\psi
    =
    \bigoplus_{j=1}^n
    \begin{pmatrix}
        0 & \nu_j\\
        -\nu_j & 0
    \end{pmatrix},
    \qquad
    0\le \nu_j\le 1 .
\end{equation}
We  define the fermionic annihilation operators
\(c_j=(\widetilde\gamma_{2j-1}+i\widetilde\gamma_{2j})/2\). 
With this convention,
\(-i\widetilde\gamma_{2j-1}\widetilde\gamma_{2j}=1-2c_j^\dagger c_j\), and hence the canonical form of \(\Gamma_\psi\) gives
\begin{equation}
    \langle c_j^\dagger c_j\rangle_\psi=\frac{1-\nu_j}{2},
\end{equation}
where \(\langle O\rangle_\psi:=\langle\psi|O|\psi\rangle\) for any operator $O$. 
Let \(\ket{\Omega}\) be the Gaussian vacuum annihilated by all \(c_j\), and set \(n_j:=c_j^\dagger c_j\). Since the \(n_j\)'s are commuting projectors, they are simultaneously diagonal in the occupation basis. In that basis, \(I-\ket{\Omega}\bra{\Omega}\) is the indicator of the event that at least one mode is occupied, while \(\sum_j n_j\) counts the number of occupied modes.
Hence
\begin{equation}
    I-\ket{\Omega}\bra{\Omega}
    \le
    \sum_{j=1}^n n_j.
\end{equation}
Equivalently, on an occupation string \(\ket{x_1,\ldots,x_n}\), the left-hand side has eigenvalue \(0\) for the vacuum string \(x_1=\cdots=x_n=0\), and eigenvalue \(1\) for every non-vacuum string, whereas the right-hand side has eigenvalue \(\sum_j x_j\).
Taking the expectation value in \(\ket{\psi}\), we obtain
\begin{equation}
    1-|\langle\Omega|\psi\rangle|^2
    =
    \langle I-\ket{\Omega}\bra{\Omega}\rangle_\psi
    \le
    \sum_{j=1}^n \langle c_j^\dagger c_j\rangle_\psi
    =
    \frac12\sum_{j=1}^n(1-\nu_j)
    \le
    \frac12\sum_{j=1}^n(1-\nu_j^2)
    =
    \frac12\FAF_1(\psi).
\end{equation}
Since $\ket{\Omega}$ is one candidate in the maximization defining $\epsilon_G(\psi)$, we have
\[
\epsilon_G^2(\psi)\le 1-|\braket{\Omega|\psi}|^2
\le \frac 12\FAF_1(\psi),
\]
which is the right-hand side of Eq.~\eqref{eq:faf-fidelity-bound-both_SM}. We note that this bound is tighter by a factor of $n$ compared to a bound derived in~\cite{Vershynina14}.

We next prove the converse bound.
Let \(\ket{\phi_G}\) be a Gaussian state realizing the minimum in the definition of \(\epsilon_G(\psi)\). By applying a Gaussian unitary, which leaves both \(\FAF_1\) and \(\epsilon_G\) unchanged, we may choose coordinates in which \(\ket{\phi_G}\) is the Fock vacuum \(\ket{\Omega}\). Thus
\begin{equation}
    \ket{\psi}
    =
    \alpha\ket{\Omega}
    +
    \epsilon_G\ket{\eta},
    \qquad
    \epsilon_G=\epsilon_G(\psi),
    \qquad
    \braket{\Omega|\eta}=0, \qquad
    \braket{\eta|\eta}=1.
\end{equation}
Let \(c_j\) be the corresponding vacuum modes, \(n_j=c_j^\dagger c_j\), and \(p_j:=\langle n_j\rangle_\psi\). Since the total number operator \(N=\sum_j n_j\) satisfies \(0\le N\le nI\) and annihilates \(\ket{\Omega}\),
we have
\begin{equation}
    \sum_{j=1}^n p_j
    =
    \langle N\rangle_\psi
    =
    \epsilon_G^2\langle\eta|N|\eta\rangle
    \le
    n\epsilon_G^2 .
\end{equation}
In this same mode basis, the covariance matrix contains the \(n\) entries
\[
    (\Gamma_\psi)_{2j-1,2j}
    =
    \langle -i\gamma_{2j-1}\gamma_{2j}\rangle_\psi
    =
    1-2p_j .
\]
Therefore,
\begin{align}
    \FAF_1(\psi)
    =
 n-\sum_{a<b}((\Gamma_\psi)_{ab})^2
    \le
    n-\sum_{j=1}^n(1-2p_j)^2 \
    =
    4\sum_{j=1}^n p_j(1-p_j)
    \le
    4\sum_{j=1}^n p_j
    \le
    4n\epsilon_G^2 .
\end{align}
This proves the converse bound.

The constant scaling on the right-hand side of Eq.~\eqref{eq:faf-fidelity-bound-both_SM}  cannot be improved below $4$ in general. For even $n\ge4$, consider the superposition of the vacuum state $\ket{0}^{\otimes n}$ and the fully occupied fermionic Gaussian state $\ket{1}^{\otimes n}$:
\begin{equation}
\ket{\psi_{n,\epsilon}}=\sqrt{1-\epsilon^2}\ket{0}^{\otimes n}+\epsilon\ket{1}^{\otimes n},
\qquad 0\le\epsilon^2\le\frac12.
\end{equation}
In this state, each canonical mode is occupied with probability
\(\epsilon^2\), i.e. \(\langle n_j\rangle_{\psi_{n,\epsilon}}=\epsilon^2\).
Moreover, for \(n\ge4\), the off-diagonal matrix elements
\(\bra{0}^{\otimes n}(-i\gamma_a\gamma_b)\ket{1}^{\otimes n}\) vanish for all
\(a<b\): a quadratic Majorana operator can change the occupation pattern in at
most two modes, whereas \(\ket{0}^{\otimes n}\) and \(\ket{1}^{\otimes n}\)
differ in all \(n\) modes. Hence the two-point covariance matrix receives no
coherent contribution from the superposition between the two components.

It follows that the only nonzero covariance entries are the diagonal canonical pair entries,
\[
    \langle -i\gamma_{2j-1}\gamma_{2j}\rangle_{\psi_{n,\epsilon}}
    =
    1-2\epsilon^2,
\]
so all covariance singular values are \(|1-2\epsilon^2|\).

Moreover, let $\ket{\phi}$ be an arbitrary even-parity pure Gaussian state. By the fermionic Bloch--Messiah decomposition~\cite{bloch1962canonical}, it can be written as
\begin{equation}
    \ket{\phi}
    =
    U_1
    \prod_{k=1}^{n/2}
    \big(
        u_k+v_k c_{2k-1}^\dagger c_{2k}^\dagger
    \big)
    \ket{0}^{\otimes n},
    \qquad
    |u_k|^2+|v_k|^2=1,
\end{equation}
where $U_1$ is a number-conserving Gaussian unitary. Since $U_1$ fixes the vacuum and maps the fully occupied Slater determinant to itself up to a determinant phase, the vacuum and fully occupied amplitudes obey
\begin{equation}
    \left|\bra{0^{\otimes n}}\phi\rangle\right|
    =
    \prod_{k=1}^{n/2}|u_k|,
    \qquad
    \left|\bra{1^{\otimes n}}\phi\rangle\right|
    =
    \prod_{k=1}^{n/2}|v_k|.
\end{equation}
For $n\ge4$, i.e. for at least two BCS pairs, iterated Cauchy--Schwarz gives
\begin{equation}
    \prod_{k=1}^{n/2}|u_k|
    +
    \prod_{k=1}^{n/2}|v_k|
    \le
    1 .
\end{equation}
Indeed, for two pairs this is the ordinary Cauchy--Schwarz inequality,
\[
    |u_1u_2|+|v_1v_2|
    \le
    \sqrt{|u_1|^2+|v_1|^2}
    \sqrt{|u_2|^2+|v_2|^2}
    =
    1,
\]
and the extension to more pairs follows by applying the same argument inductively. Therefore, writing $a=\sqrt{1-\epsilon^2}$ and $b=\epsilon$, with $a\ge b$ for $\epsilon^2\le1/2$, we have
\begin{align}
    \left|\bra{\psi_{n,\epsilon}}\phi\rangle\right|
    \le
    a\left|\bra{0^{\otimes n}}\phi\rangle\right|
    +
    b\left|\bra{1^{\otimes n}}\phi\rangle\right|  
    \le
    a\left(
        \left|\bra{0^{\otimes n}}\phi\rangle\right|
        +
        \left|\bra{1^{\otimes n}}\phi\rangle\right|
    \right)
    \le
    \sqrt{1-\epsilon^2}.
\end{align}
The bound is achieved by the Gaussian vacuum. Hence the closest Gaussian state is the vacuum and $\epsilon_G(\psi_{n,\epsilon})=\epsilon$. Consequently, $\epsilon_G(\psi_{n,\epsilon})=\epsilon$ and
\[
\qquad
\frac{\FAF_1(\psi_{n,\epsilon})}{n \epsilon_G^2(\psi_{n,\epsilon})}=4(1-\epsilon^2),
\]
which tends to $4$ as $\epsilon\downarrow0$.

We next discuss a localized non-Gaussian defect, which shows that the upper bound \(\epsilon_G^2(\psi)\lesssim \FAF_1(\psi)\) cannot, in general, be improved by any factor that scales with the system size. Let
\[
\ket{\chi_4}=\frac{\ket{0000}+\ket{1111}}{\sqrt2}
\]
on four modes and set $\ket{\psi_n}=\ket{\chi_4}\otimes\ket{0}^{\otimes(n-4)}$. In the active four-mode block, all two-point Majorana correlators vanish: the only coherence between $\ket{0000}$ and $\ket{1111}$ is a four-body coherence, and each occupation parity has expectation zero. Thus the four active covariance singular values are zero, while each spectator vacuum mode contributes one singular value equal to one, resulting in 
$\FAF_1(\psi_n)=4$.

It remains to compute the Gaussian fidelity. Consider an arbitrary even pure Gaussian state on four modes, which can be written as
\[
\ket{\phi_Z}=\frac{
\exp\!\left(\frac12\sum_{i,j=1}^4 Z_{ij}c_i^\dagger c_j^\dagger\right)\ket{0}
}{\det(1+Z^\dagger Z)^{1/4}},
\qquad Z=-Z^T .
\]
By the fermionic Bloch--Messiah decomposition, after a number-conserving Gaussian rotation the pairing matrix has two singular values $z_1,z_2$, and the state takes the form
\[
\ket{\phi}=(u_1+v_1 c_1^\dagger c_2^\dagger)(u_2+v_2 c_3^\dagger c_4^\dagger)\ket{0},
\qquad |u_i|^2+|v_i|^2=1 .
\]
The overlap with $\ket{\chi_4}$ is bounded by
\begin{align}
|\braket{\chi_4|\phi}|^2
=\frac12|u_1u_2+v_1v_2|^2 
&\le \frac12 (|u_1u_2|+|v_1v_2|)^2 \\
&\le \frac12(|u_1|^2+|v_1|^2)(|u_2|^2+|v_2|^2)=\frac12,
\end{align}
where the last step is Cauchy--Schwarz inequality. The bound is achieved by the Gaussian vacuum $\ket{0000}$, so $\epsilon_G^2(\chi_4)=1/2$.

Appending Gaussian spectator vacua does not increase the maximum Gaussian overlap with the active block. Indeed, projecting any $n$-mode Gaussian state onto the spectator vacuum produces, after normalization when nonzero, a Gaussian state on the active four modes; hence the best overlap with $\ket{\psi_n}$ is bounded by the best four-mode Gaussian overlap with $\ket{\chi_4}$. Conversely, the product Gaussian $\ket{0000}\otimes\ket{0}^{\otimes(n-4)}$ achieves overlap squared $1/2$. Therefore $\epsilon_G^2(\psi_n)=1/2$. This proves the counter-example and shows that no linear-scaling inequality of the form $\FAF_1\ge cn\epsilon_G^2$ can hold for FAF.

\section{Linear fourth-moment bound for the Bell estimator}
\label{app:fourth-moment}

In this section we prove the fourth-moment bound quoted in the main text, which improves the Bell-estimator variance from $O(n^3)$ to $O(n^2)$ and the Bell-tester sample complexity from $O(n^2/\epsilon^2)$ to $O(n/\epsilon^2)$.  Throughout, $K=\sum_{a=1}^{2n}G_a$ with $G_a=\gamma_a\otimes\gamma_a$ as in the main text, $\widehat F_1=K^2/2$, and we write
\begin{equation}
m_k:=\Tr\!\big[K^k\rho^{\otimes2}\big],\qquad k=2,4,
\end{equation}
so that $m_2=2\,\FAF_1(\rho)$ (End Matter) and a Bell sample $X$ has moments $\E[X]=m_2/2$, $\E[X^2]=m_4/4$.
 
\begin{theorem}[Fourth moment of the Bell observable]
\label{thm:fourth-moment}
For every $n$-mode state $\rho$, pure or mixed,
\begin{equation}
\label{eq:fourth-moment}
\E[X^2]\;=\;\Tr\!\big[\widehat F_1^{\,2}\,\rho^{\otimes2}\big]\;\le\;46\,n\;\Tr\!\big[\widehat F_1\,\rho^{\otimes2}\big]\;=\;46\,n\,\FAF_1(\rho).
\end{equation}
\end{theorem}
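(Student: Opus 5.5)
The plan is to write $\widehat F_1$ as a polynomial in commuting Pauli-type operators and to organise the computation around the deficits $1-\nu_j$. From the End Matter,
$K^2=2n\,\id+2\sum_{a<b}\gamma_a\gamma_b\otimes\gamma_a\gamma_b$. Since $\gamma_a\gamma_b=iB_{ab}$, this equals $2n\,\id-2Q$ with $Q:=\sum_{a<b}B_{ab}\otimes B_{ab}$. Hence $X=n-Q$ as an operator, $\Tr[Q\rho^{\otimes2}]=\sum_{a<b}\Gamma_{ab}^2$, and
\[
\E[X^2]=\Tr[(n\id-Q)^2\rho^{\otimes 2}]=n^2-2n\langle Q\rangle+\langle Q^2\rangle .
\]
Both sides of the claimed inequality depend on $\rho$ only through $\rho^{\otimes 2}$. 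Moreover, $K=\sum_a\gamma_a\otimes\gamma_a$ is invariant under the diagonal action $U_G\otimes U_G$ of any free-fermionic unitary, because the $\gamma_a$ rotate by an orthogonal matrix $O$ and $\sum_a(O\gamma)_a\otimes(O\gamma)_a=\sum_a\gamma_a\otimes\gamma_a$. We may therefore assume throughout that $\Gamma_\rho$ is in the canonical block form with singular values $\nu_1,\dots,\nu_n$.

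The first step is to expand $Q^2=\sum_{\{a,b\},\{c,d\}}B_{ab}B_{cd}\otimes B_{ab}B_{cd}$ according to how the two index pairs overlap.
\begin{itemize}
\item Equal pairs contribute $n(2n-1)\id$.
\item Pairs sharing exactly one index contribute a multiple of $Q$. Indeed, $B_{ab}B_{bc}=\pm iB_{ac}$, so the tensor square is $\mp$-free: it equals $-B_{ac}\otimes B_{ac}$ up to a sign fixed by ordering. Each $(a,c)$ arises from $2n-2$ middle indices, so this class contributes $-\kappa(2n-2)Q$ with an explicit constant $\kappa\in\{1,2\}$.
\item Disjoint pairs contribute a quartic operator $R=\sum_{S}c_S\,\gamma_S\otimes\gamma_S$, summed over $4$-subsets $S$. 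Its expectation is $\langle R\rangle=\sum_S c_S\langle\gamma_S\rangle_\rho^2$ with combinatorial coefficients $c_S$ coming from the three pairings of $S$.
\end{itemize}
Substituting these gives
\[
\E[X^2]=n^2+n(2n-1)-(2n+\kappa(2n-2))\langle Q\rangle+\langle R\rangle .
\]
All terms are $O(n^2)$, so the task is to exhibit the cancellation that leaves at most $46\,n\,(n-\langle Q\rangle)$.

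The key step, which I expect to be the main obstacle, is a lower bound on $\langle R\rangle$ expressed in terms of the deficits. In the canonical basis, write $P_j:=-i\widetilde\gamma_{2j-1}\widetilde\gamma_{2j}$, with $\langle P_j\rangle=\nu_j$. The quartic correlators $\langle P_jP_k\rangle$ then dominate $\langle R\rangle$. Because $(\id-P_j)(\id-P_k)\ge0$ for these commuting involutions, we get $\langle P_jP_k\rangle\ge\nu_j+\nu_k-1$. Squaring (when the right-hand side is nonnegative) gives $\langle P_jP_k\rangle^2\ge 1-2(1-\nu_j)-2(1-\nu_k)$. Summing over $j<k$ yields $\langle R\rangle\gtrsim n^2-O(n)\sum_j(1-\nu_j)$, which is at least $n^2-O(n)\,\FAF_1$ since $1-\nu_j\le1-\nu_j^2$.

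The remaining terms need separate control, in two places:
\begin{itemize}
\item The other pairings and off-canonical $4$-subsets enter $\langle R\rangle$ with signs. I would bound them crudely by $O(n)\,\FAF_1$ using $\sum_{a<b}\Gamma_{ab}^2=n-\FAF_1$. Concretely, in the canonical frame all off-block bilinears vanish, so cross terms are controlled by sums of $(1-\nu_j^2)$.
\item The term $\langle Q\rangle$ is likewise expressed as $n-\FAF_1$.
\end{itemize}
Collecting everything, the $n^2$ terms should cancel exactly, as they must, since the bound is tight at $0$ on Gaussian states, where $X=0$ deterministically. The surviving terms are then of the form $C\,n\,\FAF_1$. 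Tracking the constants from the three overlap classes and the pairing multiplicities should produce a constant such as $46$.

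If the sign bookkeeping in $R$ proves intractable, my fallback is a purely spectral argument. Since $X=2q^2$ and $q=\tfrac12\sum_a g_a$, I would bound $\E[q^4]\le Cn\,\E[q^2]$ by treating $q$ as a sum of $n$ local pair variables $g_{2j-1}+g_{2j}$ in the canonical frame. This would require showing that their joint moments factorise up to deficit corrections, again through inequalities of the type $(\id-P_j)(\id-P_k)\ge0$.
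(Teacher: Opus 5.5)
Your expansion of $Q^2$ is correct, and it is the paper's commuting-involution identity in another form. The one-index-overlap class gives exactly $-2(2n-2)Q$, so $\kappa=2$ with no ordering ambiguity. Every ordered disjoint pair contributes $+\gamma_S\otimes\gamma_S$, so $R=6\sum_{|S|=4}\gamma_S\otimes\gamma_S$ and
\[
\E[X^2]=(6n-4)\,\FAF_1(\rho)+6\Big(A_4-\tbinom n2\Big),\qquad A_4=\sum_{a<b<c<d}\Tr[\rho\,\gamma_a\gamma_b\gamma_c\gamma_d]^2 ,
\]
which is the paper's $m_4=(12n-8)m_2+24(A_4-\binom n2)$. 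After this point the proposal goes wrong in two ways.

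\emph{The key inequality points the wrong way.} $\langle R\rangle=6A_4$ enters $\E[X^2]$ with a plus sign, so you need an \emph{upper} bound on $A_4$. Your lower bound $\langle P_jP_k\rangle^2\ge 1-2(1-\nu_j)-2(1-\nu_k)$ cannot give an upper bound on $\E[X^2]$. The $(2,2)$ quadruples, i.e.\ two complete canonical mode pairs, need only the trivial bound $\langle P_jP_k\rangle^2\le 1$, which cancels the $-6\binom n2$ outright.

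\emph{The hard part is missing.} $\langle R\rangle$ has no signs: it is a sum of nonnegative squares, so nothing cancels. The quadruples of pattern $(2,1,1)$ and $(1,1,1,1)$ must each be bounded from above, and proving
\[
\sum_{(2,1,1),(1,1,1,1)}\Tr[\rho\,\widetilde\gamma_a\widetilde\gamma_b\widetilde\gamma_c\widetilde\gamma_d]^2\le C\,n\,\FAF_1(\rho)
\]
is the entire content of the theorem. Your proposed mechanism does not do this. In the canonical frame, the vanishing of off-block bilinears kills only the Wick products. These correlators are therefore exactly the connected four-point function (the Wick residual), which is not determined by $\Gamma_\rho$ and is not controlled by $\sum_{a<b}\Gamma_{ab}^2$. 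Naive Cauchy--Schwarz against occupations gives bounds such as $\langle\binom N2\rangle^2$ or $\langle\binom N4\rangle$. These are either quadratic in the deficit or exceed $n\,\FAF_1$ by extra powers of $n$.

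Your fallback has the same hole. For distinct modes, $\E[Y_iY_jY_kY_l]$ is precisely a sum of squared $(1,1,1,1)$ correlators. Operator inequalities like $(\id-P_j)(\id-P_k)\ge0$ involve only the commuting mode parities, so they see only the $(2,2)$ sector.

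The missing idea, as in the paper, is a genuine cumulant estimate:
\begin{itemize}
\item the two-body bound $\|K^{(2)}\|_{\rm HS}^2\le 5\,\bar N\,D_b$, which extends Christiansen's Proposition~4 to mixed, number-non-conserving states;
\item applied to every particle--hole-flipped mode family $b^{(\sigma)}$, for which $D_{b^{(\sigma)}}=\FAF_1/4$ and $\E_\sigma\bar N_\sigma=n/2$;
\item combined with the averaging lemma $\E_\sigma P^{(2,2)}_\sigma\succeq\frac38\,\id$ to recover the full residual norm.
\end{itemize}
This gives $\|\kappa\|^2\le\frac{20}{3}\,n\,\FAF_1$, hence $\E[X^2]\le(6n-4+40n)\FAF_1\le 46\,n\,\FAF_1$. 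Without an inequality of this strength, your outline yields only the crude $\E[X^2]\le 2n^2\,\E[X]$.
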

 
This has two immediate consequences.  First, $\operatorname{Var}(X)\le\E[X^2]\le46\,n\,\FAF_1(\rho)\le46\,n^2$, which yields Eq.~\eqref{eq:nshots} of the main text.  Second, since $X\ge0$, Cauchy--Schwarz on the event $\{X>0\}$ gives $\E[X]^2\le\Pr[X\neq0]\,\E[X^2]$, hence
\begin{equation}
\Pr[X\neq0]\;\ge\;\frac{\E[X]}{46\,n}\;=\;\frac{\FAF_1(\rho)}{46\,n},
\end{equation}
which, combined with $\FAF_1\ge2\epsilon_G^2$ for pure states (Lemma~\ref{lem:distance-faf}), proves Theorem~\ref{thm:bell-tester} with $N=\lceil23\,n\,\epsilon^{-2}\log(1/\delta)\rceil$. 
 
\subsection{An exact identity for commuting involutions}
 
\begin{lemma}
\label{lem:inv4}
Let $J_1,\dots,J_N$ be pairwise commuting involutions ($J_a^2=\id$) and $S=\sum_a J_a$.  With $e_4=\sum_{a<b<c<d}J_aJ_bJ_cJ_d$,
\begin{equation}
\label{eq:inv4}
S^4=(6N-8)\,S^2-3N(N-2)\,\id+24\,e_4 .
\end{equation}
\end{lemma}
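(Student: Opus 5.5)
Since the $J_a$ commute and square to the identity, the identity is purely combinatorial. I would prove it in the commutative algebra generated by $J_1,\dots,J_N$ subject to $J_a^2=\id$. A clean route is to work in a joint eigenbasis: on a common eigenvector each $J_a$ acts as a sign $g_a=\pm1$, so it suffices to verify the scalar identity for every sign vector $g\in\{\pm1\}^N$. Let $k$ denote the number of minus signs, so that $s=\sum_a g_a=N-2k$. An operator-algebraic derivation avoids any appeal to diagonalizability and gives the same coefficients, so I would present that version and use the scalar picture only as a check.

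\textbf{Expanding $S^2$.} We have $S^2=N\id+2e_2$ with $e_2=\sum_{a<b}J_aJ_b$, since $J_a^2=\id$. Squaring gives $S^4=N^2\id+4Ne_2+4e_2^2$, so everything reduces to computing $e_2^2=\sum_{a<b}\sum_{c<d}J_aJ_bJ_cJ_d$. I would split this sum according to the overlap $|\{a,b\}\cap\{c,d\}|$:
\begin{itemize}[leftmargin=*]
\item When the pairs coincide, each of the $\binom N2$ terms gives $\id$.
\item When the pairs share exactly one index, the term reduces to $J_xJ_y$ for the two unshared indices. Each unordered pair $\{x,y\}$ arises from $N-2$ choices of the shared index and $2$ orderings of which original pair contains $x$, so these terms contribute $2(N-2)e_2$.
\item When the pairs are disjoint, each 4-set $\{a,b,c,d\}$ arises from $3$ pairings and $2$ orderings, giving $6e_4$.
\end{itemize}
Hence $e_2^2=\binom N2\id+2(N-2)e_2+6e_4$.

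\textbf{Assembling.} Substituting, $S^4=\bigl(N^2+2N(N-1)\bigr)\id+\bigl(4N+8(N-2)\bigr)e_2+24e_4$. Eliminating $e_2$ through $2e_2=S^2-N\id$ gives
\[
S^4=(6N-8)S^2-\bigl(N(6N-8)-3N^2+2N\bigr)\id+24e_4 .
\]
The constant simplifies to $-3N(N-2)$, which matches \eqref{eq:inv4}. The only real obstacle is getting the multiplicities in the overlap count right, which is the bookkeeping of the middle case. I would sanity-check with $N=1$: the left side is $S^4=\id$, and since $e_4=0$ the right side is $-2\id+3\id=\id$. A further check is $N=2$ with $J_1=J_2$: the left side is $16\id$, and the right side is $4\cdot4\id-0=16\id$.
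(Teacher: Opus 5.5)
Your proof is correct and follows the paper's argument essentially verbatim: write $S^2=N\id+2e_2$, obtain $e_2^2=\binom N2\id+2(N-2)e_2+6e_4$ by classifying pairs of pairs by their overlap, and substitute $e_2=(S^2-N\id)/2$ into $S^4=(N\id+2e_2)^2$. Your multiplicity counts $2(N-2)$ and $6$, the simplification of the constant term to $-3N(N-2)$, and both sanity checks are all correct.
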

 
\begin{proof}
$S^2=N\,\id+2e_2$ with $e_2=\sum_{a<b}J_aJ_b$.  Squaring $e_2$ and classifying ordered pairs of pairs by their overlap (identical pairs: $\binom N2$ terms equal to $\id$; pairs sharing one index: each product $J_bJ_c$ arises $2(N-2)$ times; disjoint pairs: each 4-subset arises $6$ times) gives $e_2^2=\binom N2\,\id+2(N-2)\,e_2+6\,e_4$.  Substituting $e_2=(S^2-N)/2$ into $S^4=(N+2e_2)^2$ yields Eq.~\eqref{eq:inv4}.
\end{proof}
 
Applying Lemma~\ref{lem:inv4} with $N=2n$ to the commuting involutions $J_a=G_a$ and taking $\Tr[\,\cdot\,\rho^{\otimes2}]$, note that $\Tr[G_aG_bG_cG_d\,\rho^{\otimes2}]=\big(\Tr[\rho\,\gamma_a\gamma_b\gamma_c\gamma_d]\big)^2$, and that the ordered product of four distinct Majoranas is Hermitian, so these squares are non-negative.  Defining $A_4:=\sum_{a<b<c<d}\big(\Tr[\rho\,\gamma_a\gamma_b\gamma_c\gamma_d]\big)^2$, we get
\begin{equation}
\label{eq:exact-m4}
m_4=(12n-8)\,m_2+24\Big(A_4-\tbinom n2\Big).
\end{equation}
Theorem~\ref{thm:fourth-moment} therefore follows once we show $A_4-\binom n2\le\tfrac{10}{3}\,n\,m_2$, since then $m_4\le(92n-8)m_2\le92\,n\,m_2$ (note $m_2=\Tr[K^2\rho^{\otimes2}]\ge0$).
 
\subsection{Canonical coordinates}
 
As in \SM{}~\ref{app:faf-fidelity}, a real orthogonal transformation of the Majorana basis brings the covariance matrix to the canonical form $\Gamma_\rho=\bigoplus_{j=1}^n(1-2p_j)\left(\begin{smallmatrix}0&1\\-1&0\end{smallmatrix}\right)$. We write $\widetilde\gamma_a$ for the canonical Majoranas and $c_j=(\widetilde\gamma_{2j-1}+i\widetilde\gamma_{2j})/2$ for the canonical modes, $p_j=\langle c_j^\dagger c_j\rangle_\rho\in[0,1]$ are their occupations.  All quantities below are invariant under this transformation.  From $m_2=2n-\|\Gamma_\rho\|_F^2$,
\begin{equation}
\label{eq:m2-8D}
m_2=2\sum_{j=1}^n\big(1-(1-2p_j)^2\big)=8\sum_{j=1}^n p_j(1-p_j)=2\,\FAF_1(\rho).
\end{equation}
In the canonical basis all cross-mode two-point functions vanish, normal ($\langle c_i^\dagger c_j\rangle_\rho=p_i\delta_{ij}$) and anomalous ($\langle c_ic_j\rangle_\rho=0$) alike.
 
\subsection{Upper bound for $A_4$}
 
For $z\in\mathbb C^{2n}$, we write $\gamma(z)=\sum_a z_a\widetilde\gamma_a$ and denote by $W(z,z')=\Tr[\rho\,\gamma(z)\gamma(z')]$ the two-point kernel, bilinear in both arguments. The canonical anticommutation relations give $W(z,z')+W(z',z)=2z^{\mathsf T}z'$.  We define the residual
\begin{equation}
\label{eq:kappa-def}
\kappa(z_1,z_2,z_3,z_4)=\Tr[\rho\,\gamma(z_1)\gamma(z_2)\gamma(z_3)\gamma(z_4)]-W_{12}W_{34}+W_{13}W_{24}-W_{14}W_{23},
\end{equation}
with $W_{pq}=W(z_p,z_q)$.  A direct check using the anticommutation relations shows that $\kappa$ vanishes under symmetrization in any adjacent pair of arguments. Hence, $\kappa$ is a fully antisymmetric multilinear form, completely specified by the coefficients $\kappa(e_a,e_b,e_c,e_d)$ with $a<b<c<d$ in any orthonormal basis $\{e_a\}$ of $\mathbb{C}^{2n}$, and we set $\|\kappa\|^2=\sum_{a<b<c<d}|\kappa(e_a,e_b,e_c,e_d)|^2$. This quantity is the same in every orthonormal basis: because $\kappa$ is linear, with no complex conjugation, in each of its four arguments, a change of basis $e_a\to Ue_a$ mixes the coefficients through the matrix $(U^{\mathsf T})^{\otimes 4}$, which is unitary whenever $U$ is, so the sum of squared moduli is unchanged.

For a Gaussian state Wick's theorem gives $\kappa\equiv0$, so $\|\kappa\|$ measures the failure of Wick factorization. We now show that $\kappa$ controls $A_4$.  First, since $\sum_a\widetilde\gamma_a\otimes\widetilde\gamma_a=\sum_a\gamma_a\otimes\gamma_a$ for any real orthogonal change of Majorana basis, the moments $m_2,m_4$, and hence $A_4$, which is determined by them through Eq.~\eqref{eq:exact-m4}, are the same in every Majorana basis, and we may evaluate $A_4$ in the canonical one, $A_4=\sum_{a<b<c<d}\big(\Tr[\rho\,\widetilde\gamma_a\widetilde\gamma_b\widetilde\gamma_c\widetilde\gamma_d]\big)^2$.  Second, evaluating Eq.~\eqref{eq:kappa-def} on the basis vectors $e_a$ attached to the canonical Majoranas and rearranging, each summand of $A_4$ is a coefficient of $\kappa$ up to Wick products,
\begin{equation}
\label{eq:A4-summand}
\Tr[\rho\,\widetilde\gamma_a\widetilde\gamma_b\widetilde\gamma_c\widetilde\gamma_d]
=\kappa(e_a,e_b,e_c,e_d)+W_{ab}W_{cd}-W_{ac}W_{bd}+W_{ad}W_{bc},
\qquad
W_{ab}:=W(e_a,e_b)=\Tr[\rho\,\widetilde\gamma_a\widetilde\gamma_b],
\end{equation}
and in the canonical basis $W_{ab}$ vanishes whenever the two indices belong to different mode pairs $\{2j-1,2j\}$.

To proceed, we classify the quadruples $a<b<c<d$ according to how their four indices are distributed over the mode pairs $\{1,2\},\{3,4\},\dots,\{2n-1,2n\}$.  Since each mode pair can contribute at most two indices, only three patterns occur, which we label by the number of indices drawn from each participating pair: $(2,2)$, the four indices form two complete mode pairs; $(2,1,1)$, one complete mode pair plus one index from each of two further pairs; and $(1,1,1,1)$, one index from each of four different pairs.  In the patterns $(2,1,1)$ and $(1,1,1,1)$, each of the three Wick products in Eq.~\eqref{eq:A4-summand} contains at least one factor connecting two different mode pairs and therefore vanishes, so the $A_4$ summand equals $|\kappa(e_a,e_b,e_c,e_d)|^2$.  In the pattern $(2,2)$---two complete mode pairs $\{2i-1,2i\}$ and $\{2j-1,2j\}$ with $i<j$, of which there are exactly $\binom n2$---the summand is instead bounded by $1$, because $\widetilde\gamma_{2i-1}\widetilde\gamma_{2i}\widetilde\gamma_{2j-1}\widetilde\gamma_{2j}$ is a Hermitian involution and hence has expectation in $[-1,1]$.  Therefore
\begin{equation}
\label{eq:A4-kappa}
A_4\;\le\;\tbinom n2\;+\;\sum_{(2,1,1),(1,1,1,1)}|\kappa(e_a,e_b,e_c,e_d)|^2\;\le\;\tbinom n2+\|\kappa\|^2 ,
\end{equation}
where the last step simply adds the non-negative $(2,2)$-coefficients of $\kappa$ to complete the norm.  The $\binom n2$ term is cancelled exactly by the offset in Eq.~\eqref{eq:exact-m4}, so it remains to prove $\|\kappa\|^2\le\tfrac{10}{3}\,n\,m_2$.

\subsection{A two-body estimate}
 
\begin{theorem}
\label{thm:2body}
Let $b_1,\dots,b_n$ be any family satisfying the canonical anticommutation relations on the Hilbert space of $\rho$, and set $\xi_{ij}=\Tr[\rho\,b_j^\dagger b_i]$, $\bar N=\Tr\xi$, $D_b=\Tr[\xi(\id-\xi)]$, and
\begin{equation}
\label{eq:K2-def}
K^{(2)}_{k\ell,ij}=\Tr[\rho\,b_i^\dagger b_j^\dagger b_\ell b_k]-\xi_{ki}\xi_{\ell j}+\xi_{\ell i}\xi_{kj}.
\end{equation}
Then $\|K^{(2)}\|_{\rm HS}^2\le5\,\bar N\,D_b$.
\end{theorem}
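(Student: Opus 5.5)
The plan is to reduce everything to the natural orbitals of $\xi$ and then split the two-body cumulant $K^{(2)}$ into blocks according to whether the mode indices are ``mostly occupied'' or ``mostly empty''. I have not verified that the block estimates close with constant exactly $5$.

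\emph{Step 1: diagonalize $\xi$.} A unitary change of modes $b_i\to\sum_j U_{ij}b_j$ preserves the canonical anticommutation relations. It also preserves $\bar N$, $D_b$ and $\|K^{(2)}\|_{\rm HS}$, because $K^{(2)}$ transforms by $U\otimes U$ on one index pair and by $\bar U\otimes\bar U$ on the other. So we may assume $\xi=\operatorname{diag}(\lambda_1,\dots,\lambda_n)$ with $\lambda_i\in[0,1]$. Then $\bar N=\sum_i\lambda_i$ and $D_b=\sum_i\lambda_i(1-\lambda_i)$.

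Let $\Gamma^{(2)}_{k\ell,ij}=\Tr[\rho\,b_i^\dagger b_j^\dagger b_\ell b_k]$ be the (unnormalized) two-body density matrix. It is a positive semidefinite operator on the antisymmetric two-particle space, with
\[
\Tr\Gamma^{(2)}=\langle N(N-1)\rangle ,
\qquad
\sum_j\Gamma^{(2)}_{kj,ij}=\langle b_i^\dagger (N-1) b_k\rangle .
\]
Step 3 will also use the Pauli/Yang bound $\|\Gamma^{(2)}\|_{\rm op}\le \bar N$.

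\emph{Step 2: particle--hole split.} Let $O=\{i:\lambda_i>1/2\}$ and $E=\{i:\lambda_i\le 1/2\}$, and write $h_i=\min(\lambda_i,1-\lambda_i)$, so that $D_b\ge \frac12\sum_i h_i$. Split $K^{(2)}$ into blocks according to how many of the four indices $k,\ell,i,j$ lie in $O$.
\begin{itemize}[leftmargin=*,nosep]
\item \emph{Creation and annihilation indices all in $E$.} Use $K^{(2)}=\Gamma^{(2)}-\xi\wedge\xi$ with $\Gamma^{(2)}\ge0$. Cauchy--Schwarz gives $|\Gamma^{(2)}_{k\ell,ij}|^2\le\Gamma^{(2)}_{ij,ij}\Gamma^{(2)}_{k\ell,k\ell}$ and $\Gamma^{(2)}_{ij,ij}\le\langle n_in_j\rangle\le\min(\lambda_i,\lambda_j)$. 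Together with the operator-norm bound, this block should be bounded by $\bar N\sum_{i\in E}\lambda_i\le \bar N\cdot 2D_b$. The $\xi\wedge\xi$ part is bounded directly by $(\sum_E\lambda_i^2)\cdot\bar N$-type terms.
\item \emph{Indices in $O$.} Conjugate $b_i\leftrightarrow b_i^\dagger$ for $i\in O$. This is again a CAR family, and it maps the occupations of these modes to holes $1-\lambda_i=h_i$. The corresponding matrix elements become two-hole or particle--hole correlators. Each is bounded via Cauchy--Schwarz by $\langle n_i\rangle\langle 1-n_k\rangle$-type products, which contain a factor $h$ for every index in $O$.
\end{itemize}
For mixed blocks I would use the contracted identity from Step 1. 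With $\xi$ diagonal it gives
\[
\sum_j K^{(2)}_{kj,ij}=\langle b_i^\dagger(N-\bar N)b_k\rangle-(\xi(\id-\xi))_{ki}.
\]
This shows that the partial traces of $K^{(2)}$ are of order $D_b$, and I would use it to control the rows in which exactly one index is occupied.

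\emph{Step 3: assemble.} Summing the blocks should give $\|K^{(2)}\|_{\rm HS}^2\le c_1\bar N\sum_E\lambda_i+c_2\bar N\sum_O(1-\lambda_i)$. With $\sum_i h_i\le 2D_b$, this yields a bound of the form $C\bar N D_b$. The bookkeeping of constants is what should produce the $5$.

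The main obstacle is the mixed $O$/$E$ blocks. There the naive Cauchy--Schwarz bound produces a factor $\lambda_i\approx1$ instead of a small $h_i$, so $\bar N$ and $D_b$ do not appear naturally as a product. At that point I would first try the positivity of the particle--hole matrix $\langle (b_k^\dagger b_\ell)^\dagger\, b_i^\dagger b_j\rangle\ge0$, the $G$-condition, in place of $\Gamma^{(2)}\ge0$. Its trace is computable in terms of $\bar N$ and $D_b$, which is where I expect the $\bar N D_b$ product to come from.
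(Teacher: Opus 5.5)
Your proposal has a genuine gap, and it already shows up in the block you treat as routine.

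\textbf{The all-$E$ block.} Write $\Gamma^{(2)}_E$ for the block with all four indices in $E$, and $\hat N_E=\sum_{i\in E}b_i^\dagger b_i$. The tools you invoke are positivity of $\Gamma^{(2)}$, entrywise Cauchy--Schwarz, $\Gamma^{(2)}_{ij,ij}=\langle n_in_j\rangle$, and Yang's bound $\|\Gamma^{(2)}\|_{\rm op}\le\bar N$. At best they give $\|\Gamma^{(2)}_E\|_{\rm HS}^2\le\|\Gamma^{(2)}\|_{\rm op}\Tr\Gamma^{(2)}_E\le\bar N\langle\hat N_E(\hat N_E-1)\rangle$.
\begin{itemize}
\item Take any $N$-particle state whose natural occupations are all $\le1/2$, so $O=\emptyset$. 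The bound above equals $N^2(N-1)$, whereas your target $\bar N\sum_E\lambda_i$ is $N^2$. Summing $\Gamma^{(2)}_{ij,ij}\le\min(\lambda_i,\lambda_j)$ over $i,j$ is worse still.
\item This is not a matter of constants. Pair-condensed (AGP) states of $N$ fermions in $M\ge2N$ modes have occupations $N/M\le1/2$, attain $\|\Gamma^{(2)}\|_{\rm op}=N(M-N+2)/M\ge N/2$, and have $\Tr\Gamma^{(2)}=N(N-1)$. Yet $\|\Gamma^{(2)}\|_{\rm HS}^2=O(N^2)$, as the theorem itself implies.
\item So gaining this factor of $\bar N$ over ``operator norm times trace'' is the entire content of the statement.
\end{itemize}

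\textbf{Why your proposed repairs do not close it.}
\begin{itemize}
\item Passing to the cumulant does not help: $K^{(2)}$ is not positive semidefinite, and its trace norm is not $O(D_b)$. For $\sqrt{1-\epsilon}\ket{0}+\sqrt{\epsilon}\ket{\Phi_m}$, with $\ket{\Phi_m}$ a Slater determinant on $m\ge2$ modes, one gets $K^{(2)}=\epsilon(1-\epsilon)(\id-\mathrm{SWAP})$ on the occupied modes and $\|K^{(2)}\|_1=(m-1)D_b$.
\item For the same reason, knowing the partial traces of $K^{(2)}$ cannot control the blocks.
\item The $G$-condition will not rescue the mixed $O/E$ blocks you flag. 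The trace of the particle--hole matrix is $\langle\hat N(n+1-\hat N)\rangle$, which depends on the number of modes and does not vanish on Slater determinants. Positivity again yields only $\|\cdot\|_{\rm op}\times\Tr$ bounds, with the same defect.
\end{itemize}

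\textbf{What is missing.} You need a mechanism that makes $D_b$ and $\bar N$ appear as the two factors of a single Cauchy--Schwarz inequality. This is how the paper argues, without diagonalizing $\xi$ or splitting into blocks.
\begin{itemize}
\item Work in the Hilbert--Schmidt (GNS) representation with cyclic vector $\rho^{1/2}$. Anticommute $b(\varphi_2)$ leftward through $b^\dagger(\psi_1)b^\dagger(\psi_2)$ and split $b(\varphi_2)=b(\xi\varphi_2)+b((\id-\xi)\varphi_2)$. The Wick terms then cancel exactly, leaving
\[
\langle\varphi_1\otimes\varphi_2,K^{(2)}(\psi_1\otimes\psi_2)\rangle=\langle b^\dagger(\xi\varphi_2)\rho^{1/2},\mathcal O\rho^{1/2}\rangle_2-\langle\mathcal O^\dagger\rho^{1/2},b((\id-\xi)\varphi_2)\rho^{1/2}\rangle_2,\qquad \mathcal O=b^\dagger(\psi_1)b^\dagger(\psi_2)b(\varphi_1).
\]
\item Test against $A$ with $\|A\|_{\rm HS}=1$ and apply Cauchy--Schwarz. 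The one-fermion factor is exactly $\Tr[\xi^2(\id-\xi)]+\Tr[\xi(\id-\xi)^2]=D_b$.
\item The three-fermion factor is $\Tr[\rho\,T_{A^\dagger}]$, where $T_A=\sum_r\{L_r(A)^\dagger,L_r(A)\}$ and $L_r(A)=\sum_{k,\ell,s}\bar A_{k\ell,sr}b_s^\dagger b_\ell b_k$.
\item The decisive input is Christiansen's operator inequality $P_NT_AP_N\preceq5N\|A\|_{\rm HS}^2P_N$ on each $N$-particle sector. Since $T_A$ commutes with $\hat N_b$, summing over sectors gives $T_A\preceq5\|A\|_{\rm HS}^2\hat N_b$, hence $\Tr[\rho\,T_{A^\dagger}]\le5\bar N$ for arbitrary $\rho$, including states that are not number conserving.
\end{itemize}
This linear-in-$\bar N$ bound on three-fermion operators is exactly the gain your positivity arguments cannot supply. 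The constant $5$ comes from it, not from block bookkeeping.
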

 
This extends the fixed-particle-number, pure-state estimate of Christiansen~\cite{christiansen2024hilbert} to arbitrary number-coherent mixed states.
 
\begin{proof}
(i) For an $n^2\times n^2$ matrix $A$ define the lowering maps $L_r(A)=\sum_{k,\ell,s}\bar A_{k\ell,sr}\,b_s^\dagger b_\ell b_k$ and $T_A=\sum_r\{L_r(A)^\dagger,L_r(A)\}$.  Proposition~4 of Ref.~\cite{christiansen2024hilbert} states that $P_NT_AP_N\preceq5N\|A\|_{\rm HS}^2P_N$ on each $N$-particle sector.  Each $L_r(A)$ lowers the number operator $\widehat N_b=\sum_jb_j^\dagger b_j$ by one, so $[T_A,\widehat N_b]=0$, and summing the sector inequalities gives $T_A\preceq5\|A\|_{\rm HS}^2\,\widehat N_b$ on the full Fock space; hence $\Tr[\rho\,T_A]\le5\,\bar N\,\|A\|_{\rm HS}^2$.
 
(ii)  We work in the Hilbert--Schmidt inner product $\langle R_1,R_2\rangle_2=\Tr[R_1^\dagger R_2]$, on which operators act by left multiplication (a representation of the canonical anticommutation relations), and use the cyclic vector $\rho^{1/2}$.  Then $\|C\rho^{1/2}\|_2^2=\Tr[\rho\,C^\dagger C]$ for every operator $C$, and $\Tr[\rho\{C^\dagger,C\}]=\|C\rho^{1/2}\|_2^2+\|C^\dagger\rho^{1/2}\|_2^2$.  With $b(u)=\sum_s\bar u_sb_s$ and $b^\dagger(v)=\sum_sv_sb_s^\dagger$ for $u,v\in\mathbb C^n$, one has $\|b(u)\rho^{1/2}\|_2^2=\langle u,\xi u\rangle$ and $\|b^\dagger(v)\rho^{1/2}\|_2^2=\|v\|^2-\langle v,\xi v\rangle$.
 
(iii) For all $\varphi_1,\varphi_2,\psi_1,\psi_2\in\mathbb C^n$, with $\mathcal O:=b^\dagger(\psi_1)b^\dagger(\psi_2)b(\varphi_1)$,
\begin{equation}
\label{eq:carhs}
\langle\varphi_1\otimes\varphi_2,\,K^{(2)}(\psi_1\otimes\psi_2)\rangle
=\big\langle b^\dagger(\xi\varphi_2)\rho^{1/2},\;\mathcal O\,\rho^{1/2}\big\rangle_2
-\big\langle \mathcal O^\dagger\rho^{1/2},\;b\big((\id-\xi)\varphi_2\big)\rho^{1/2}\big\rangle_2 .
\end{equation}
Indeed, anticommuting $b(\varphi_2)$ leftward through $b^\dagger(\psi_1)b^\dagger(\psi_2)$,
\begin{equation}
\Tr[\rho\,b^\dagger(\psi_1)b^\dagger(\psi_2)b(\varphi_2)b(\varphi_1)]
=\langle\varphi_2,\psi_2\rangle\langle\varphi_1,\xi\psi_1\rangle-\langle\varphi_2,\psi_1\rangle\langle\varphi_1,\xi\psi_2\rangle
+\Tr[\rho\,b(\varphi_2)\,\mathcal O],
\end{equation}
so that the left-hand side of Eq.~\eqref{eq:carhs} equals $\Tr[\rho\,b(\varphi_2)\mathcal O]+\langle\varphi_1,\xi\psi_1\rangle\langle\varphi_2,(\id-\xi)\psi_2\rangle-\langle\varphi_1,\xi\psi_2\rangle\langle\varphi_2,(\id-\xi)\psi_1\rangle$.  Splitting $b(\varphi_2)=b(\xi\varphi_2)+b((\id-\xi)\varphi_2)$ and using, with $v=(\id-\xi)\varphi_2$,
\begin{equation}
\{b(v),\mathcal O\}=\langle v,\psi_1\rangle\,b^\dagger(\psi_2)b(\varphi_1)-\langle v,\psi_2\rangle\,b^\dagger(\psi_1)b(\varphi_1),
\end{equation}
whose $\rho$-expectation exactly cancels the two residual products, one arrives at $\Tr[\rho\,b(\xi\varphi_2)\mathcal O]-\Tr[\rho\,\mathcal O\,b((\id-\xi)\varphi_2)]$, which is Eq.~\eqref{eq:carhs}.
 
(iv) Contracting Eq.~\eqref{eq:carhs} with the entries of $A^\dagger$ gives
\begin{equation}
\Tr[A^\dagger K^{(2)}]=\sum_{r=1}^n\Big[\big\langle b^\dagger(\xi e_r)\rho^{1/2},\,L_r(A^\dagger)^\dagger\rho^{1/2}\big\rangle_2-\big\langle L_r(A^\dagger)\rho^{1/2},\,b\big((\id-\xi)e_r\big)\rho^{1/2}\big\rangle_2\Big],
\end{equation}
where $\{e_r\}_{r=1}^n$ is the standard basis of $\mathbb C^n$.  Using $|\langle \Phi_1,\Phi_2\rangle-\langle \Phi_3,\Phi_4\rangle|\le(\|\Phi_1\|^2+\|\Phi_4\|^2)^{1/2}(\|\Phi_2\|^2+\|\Phi_3\|^2)^{1/2}$ for the direct sums over $r$: the first factor is $\sum_r\big(\|b^\dagger(\xi e_r)\rho^{1/2}\|_2^2+\|b((\id-\xi)e_r)\rho^{1/2}\|_2^2\big)=\Tr[\xi^2(\id-\xi)]+\Tr[\xi(\id-\xi)^2]=D_b$ by (ii), while the second factor is $\Tr[\rho\,T_{A^\dagger}]\le5\bar N\|A\|_{\rm HS}^2$ by (i).  Taking the supremum over $\|A\|_{\rm HS}=1$ yields $\|K^{(2)}\|_{\rm HS}^2\le5\bar N D_b$.
\end{proof}
 
\subsection{Closing the bound}
 
For $\sigma\in\{0,1\}^n$, we define the family of operators $b_j^{(\sigma)}=c_j$ if $\sigma_j=0$ and $b_j^{(\sigma)}=c_j^\dagger$ if $\sigma_j=1$, built from the canonical modes $c_j$.  In the canonical basis, the matrix $\xi_\sigma$ of Theorem~\ref{thm:2body} is diagonal with $j$-th entry $p_j$ (if $\sigma_j=0$) or $1-p_j$ (if $\sigma_j=1$), since all cross-mode and anomalous two-point functions vanish.  Consequently, using Eq.~\eqref{eq:m2-8D},
\begin{equation}
D_{b^{(\sigma)}}=\sum_{j=1}^np_j(1-p_j)=\frac{\FAF_1(\rho)}{4}=\frac{m_2}{8}\quad\text{for every }\sigma,\qquad \E_\sigma\bar N_\sigma=\frac n2\quad\text{for uniform i.i.d.\ }\sigma .
\end{equation}
Now, we introduce an orthonormal basis $w_j=(e_{2j-1}+ie_{2j})/\sqrt2$, $\bar w_j=(e_{2j-1}-ie_{2j})/\sqrt2$ of $\mathbb C^{2n}$, with $\{e_a\}_{a=1}^{2n}$ the standard basis attached to the canonical Majoranas, so that $\gamma(w_j)=\sqrt2\,c_j$ and $\gamma(\bar w_j)=\sqrt2\,c_j^\dagger$; the flip $\sigma_j=1$ exchanges the roles of $w_j$ and $\bar w_j$.  Writing $w_j^{\sigma},\bar w_j^{\sigma}$ for the annihilation/creation vectors of $b^{(\sigma)}$, a direct evaluation of Eq.~\eqref{eq:kappa-def} gives, for $i<j$ and $k<\ell$,
\begin{equation}
\kappa\big(\bar w_i^{\sigma},\bar w_j^{\sigma},w_\ell^{\sigma},w_k^{\sigma}\big)=4\,\big(K^{(2)}_\sigma\big)_{k\ell,ij},
\end{equation}
with $K^{(2)}_\sigma$ built from $b^{(\sigma)}$ as in Eq.~\eqref{eq:K2-def}.  We denote by $P^{(2,2)}_\sigma$ the orthogonal projection, in the wedge basis generated by $\{w_j,\bar w_j\}$, onto basis $4$-forms containing exactly two creation and two annihilation vectors of the polarization $\sigma$.  Summing squares and using the antisymmetry of $K^{(2)}_\sigma$ in both index pairs, and using Theorem~\ref{thm:2body}, we find
\begin{equation}
\big\|P^{(2,2)}_\sigma\kappa\big\|^2=4\,\big\|K^{(2)}_\sigma\big\|_{\rm HS}^2\;\le\;20\,\bar N_\sigma\,\frac{m_2}{8}.
\end{equation}

\begin{lemma}
\label{lem:vis}
$\E_\sigma P^{(2,2)}_\sigma\succeq\frac38\,\id$ on $\Lambda^4\mathbb C^{2n}$.
\end{lemma}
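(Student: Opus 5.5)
The plan is to show that $\E_\sigma P^{(2,2)}_\sigma$ is diagonal in the wedge basis generated by $\{w_j,\bar w_j\}_{j=1}^n$, and then to bound each diagonal entry by $3/8$ using a short counting argument.

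\emph{Step 1 (diagonal structure).} The vectors $\{w_j,\bar w_j\}_{j=1}^n$ form an orthonormal basis of $\mathbb C^{2n}$. Hence the wedge products $v_1\wedge v_2\wedge v_3\wedge v_4$ of four distinct basis vectors form an orthonormal basis of $\Lambda^4\mathbb C^{2n}$. This basis does not depend on $\sigma$. Flipping $\sigma_j$ only relabels which of $w_j,\bar w_j$ counts as the creation vector and which as the annihilation vector. So for each $\sigma$, $P^{(2,2)}_\sigma$ is the orthogonal projection onto the span of a \emph{subset} of this fixed basis, namely the basis forms with exactly two creation and two annihilation vectors of polarization $\sigma$. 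Each $P^{(2,2)}_\sigma$ is therefore diagonal with entries in $\{0,1\}$, and so is the average. It follows that
\begin{equation*}
\E_\sigma P^{(2,2)}_\sigma=\sum_{\omega}\Pr_\sigma\big[\omega\text{ is of type }(2,2)\text{ for }\sigma\big]\,\ket{\omega}\!\bra{\omega},
\end{equation*}
where $\omega$ runs over the basis $4$-forms. The lemma then reduces to showing that every basis $4$-form $\omega$ is balanced (two creation vectors, two annihilation vectors) with probability at least $3/8$ under uniform i.i.d.\ $\sigma$.

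\emph{Step 2 (counting by mode content).} Fix $\omega$ and sort its four vectors by mode. A mode $j$ can contribute both $w_j$ and $\bar w_j$; call this a double. By construction of $b^{(\sigma)}$, a double always supplies exactly one creation and one annihilation vector, whatever $\sigma_j$ is. A mode can also contribute a single vector, $w_j$ or $\bar w_j$. Whether that vector is a creation or an annihilation vector is then determined by $\sigma_j$, so it is a fair coin, independent across distinct modes. As in the classification used for $A_4$, only three patterns occur.
\begin{itemize}
\item Two doubles: $\omega$ is always balanced, so the probability is $1$.
\item One double and two singles: $\omega$ is balanced exactly when the two singles have opposite type, which has probability $1/2$.
\item Four singles: $\omega$ is balanced exactly when two of four fair independent coins are creations, which has probability $\binom42/2^4=3/8$.
\end{itemize}
The minimum over the three patterns is $3/8$, which gives $\E_\sigma P^{(2,2)}_\sigma\succeq\frac38\,\id$.

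\emph{Main point to check.} The only step needing care is Step 1. There I must confirm that the independent flips really act as coin tosses on the singles while leaving doubles balanced, and that the polarized wedge basis is exactly the basis used to define $P^{(2,2)}_\sigma$. Once that is settled, the rest is the elementary count above. For later use: combining the lemma with the bound on $\|P^{(2,2)}_\sigma\kappa\|^2$ obtained just before it and with $\E_\sigma\bar N_\sigma=n/2$ gives
\begin{equation*}
\tfrac38\|\kappa\|^2\le\E_\sigma\big\|P^{(2,2)}_\sigma\kappa\big\|^2\le\tfrac{20}{8}\cdot\tfrac n2\,m_2,
\end{equation*}
that is, $\|\kappa\|^2\le\frac{10}{3}\,n\,m_2$, which is the remaining estimate needed for Theorem~\ref{thm:fourth-moment}.
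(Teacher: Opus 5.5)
Your proof is correct and follows the paper's argument essentially verbatim: both observe that every $P^{(2,2)}_\sigma$ is diagonal in the fixed wedge basis built from $\{w_j,\bar w_j\}$ because a flip only exchanges $w_j\leftrightarrow\bar w_j$, and both compute the balancing probabilities $1$, $\tfrac12$, $\tfrac38$ for the patterns $(2,2)$, $(2,1,1)$, $(1,1,1,1)$, whose minimum is $\tfrac38$. Your closing step $\|\kappa\|^2\le\frac{10}{3}\,n\,m_2$ also matches how the paper uses the lemma.
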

 
\begin{proof}
Every $P^{(2,2)}_\sigma$ is diagonal in the same fixed wedge basis, because a flip only permutes the set $\{w_j,\bar w_j\}$.  A basis $4$-form drawing its vectors from the planes $\operatorname{span}(w_j,\bar w_j)$ in pattern $(2,2)$ contains one creation and one annihilation vector per complete plane, hence exactly two creation vectors with probability $1$; in pattern $(2,1,1)$ the two singleton labels are independent fair coins, giving probability $\tfrac12$; in pattern $(1,1,1,1)$ the probability is $\binom42/2^4=\tfrac38$.  The minimum is $\tfrac38$.
\end{proof}
 
Since the $P^{(2,2)}_\sigma$ are orthogonal projections, $\|\kappa\|^2\le\frac83\,\E_\sigma\|P^{(2,2)}_\sigma\kappa\|^2\le\frac83\cdot\frac{20\,m_2}{8}\,\E_\sigma\bar N_\sigma=\frac{10}{3}\,n\,m_2$.  Together with Eqs.~\eqref{eq:exact-m4} and~\eqref{eq:A4-kappa} this gives $m_4\le(12n-8)m_2+80\,n\,m_2\le92\,n\,m_2$, i.e.\ $\E[X^2]\le46\,n\,\E[X]$, proving Theorem~\ref{thm:fourth-moment}.

\section{Global depolarizing noise and the FAF--purity witness}
\label{app:depol-witness}

We analyze the FAF--purity witness for a pure state subjected to global depolarizing noise,
\begin{equation}
\label{eq:global-depol-sm}
\rho_p
=
(1-p)\ket{\psi}\!\bra{\psi}
+
p\frac{\id}{2^n},
\qquad
0\le p\le 1 .
\end{equation}
We denote
$
x=(1-p)^2,
$
and $
r_\psi
=\|\Gamma_\psi\|_F^2/2
=
n-\FAF_1(\psi).$
We write the witness as
\begin{equation}
\label{eq:wfaf-def-sm}
W_{\FAF}(\rho)
=
\FAF_1(\rho)
-
2n\left(1-\Tr(\rho^2)^{1/n}\right).
\end{equation}

First, the purity of \(\rho_p\) is
\begin{align}
\Tr(\rho_p^2)
=
(1-p)^2
+
2(1-p)p\,2^{-n}
+
p^2\,2^{-n}  
=
x+\frac{1-x}{2^n}
=
\frac{1+(2^n-1)x}{2^n}.
\label{eq:depol-purity-sm}
\end{align}
Since the maximally mixed state has zero covariance and covariance is linear in the state,
$
\Gamma_{\rho_p}
=
(1-p)\Gamma_\psi.
$
Therefore
\begin{equation}
\FAF_1(\rho_p)
=
n-(1-p)^2\frac{\|\Gamma_\psi\|_F^2}{2}
=
n-xr_\psi .
\label{eq:depol-faf-sm}
\end{equation}
Moreover,
\begin{equation}
\Tr(\rho_p^2)^{1/n}
=
\frac{\left(1+(2^n-1)x\right)^{1/n}}{2}.
\end{equation}
Combining this with \eqref{eq:wfaf-def-sm} and \eqref{eq:depol-faf-sm}, we obtain the exact expression
\begin{equation}
\label{eq:depol-witness-sm}
W_{\FAF}(\rho_p)
=
n\left(1+(2^n-1)x\right)^{1/n}-n-xr_\psi .
\end{equation}

This expression admits a simple lower bound. For \(0\le x\le1\),
\begin{equation}
\label{eq:chord-bound-sm}
(1+x)^n
\le
1+(2^n-1)x .
\end{equation}
Indeed, this is the statement that the graph of the convex function \(f(x)=(1+x)^n\) lies below the chord joining its values at \(x=0\) and \(x=1\). Taking \(n\)-th roots in \eqref{eq:chord-bound-sm} gives
\[
\left(1+(2^n-1)x\right)^{1/n}
\ge
1+x .
\]
Substituting this into \eqref{eq:depol-witness-sm} yields
\begin{equation}
\label{eq:depol-lower-sm}
W_{\FAF}(\rho_p)
\ge
x(n-r_\psi)
=
(1-p)^2\FAF_1(\psi).
\end{equation}
Consequently, whenever the pure input state has \(\FAF_1(\psi)>0\), the ideal witness remains positive for every \(p<1\). In particular, for pure non-Gaussian states that are detected by \(\FAF_1\), global depolarizing noise does not introduce a finite ideal-witness threshold; the witness can only vanish at the maximally mixed endpoint \(p=1\).

We next evaluate the typical size of the signal for Haar-random pure states. We fix \(D=2^n\) and consider the Majorana bilinears
\(
B_{ab}=-\ii\gamma_a\gamma_b,
\)
for $a<b$.
Each such \(B_{ab}\) is traceless and satisfies \(B_{ab}^2=\id\). For a Haar-random pure state in the full Fock space,
\begin{equation}
\mathbb E_\psi
\left[
\left|
\bra{\psi}B_{ab}\ket{\psi}
\right|^2
\right]
=
\frac{\Tr(B_{ab}^2)}{D(D+1)}
=
\frac{1}{D+1}.
\end{equation}
Since
$r_\psi
=
\sum_{a<b}
\left|
\bra{\psi}B_{ab}\ket{\psi}
\right|^2$
and there are
\(
\binom{2n}{2}=n(2n-1)
\)
quadratic Majorana observables, we obtain for Haar random states
\begin{equation}
\label{eq:haar-faf-expectation-sm}
\mathbb E_\psi\,r_\psi
=
n\frac{2n-1}{2^n+1},
\qquad
\mathbb E_\psi\,\FAF_1(\psi)
=
n-n\frac{2n-1}{2^n+1}.
\end{equation}
Thus Haar-random states are almost maximally FAF-non-Gaussian~\cite{faf2025}.
For such states, \(r_\psi=O(n^2 2^{-n})\) typically. 
Near the maximally mixed endpoint, write \(1-p=\alpha\), so \(x=\alpha^2\).
In the linear-response regime \((2^n-1)\alpha^2\ll1\), expanding
\eqref{eq:depol-witness-sm} gives
\begin{equation}
    W_{\FAF}(\rho_p)
    =
    \alpha^2\left[(2^n-1)-r_\psi\right]
    +
    O\!\left((2^n\alpha^2)^2\right).
\end{equation}
For Haar-random \(\psi\), where \(r_\psi=O(n^2 2^{-n})\) typically, this becomes
\begin{equation}
 \label{eq:depol-small-signal-sm}
    W_{\FAF}(\rho_p)
    \approx
    2^n(1-p)^2
\end{equation}
within this linear-response regime.
Hence the ideal witness remains positive for every \(p<1\), but finite-sample detection close to \(p=1\) requires resolving a signal of order \eqref{eq:depol-small-signal-sm}.

Finally, we compare this witness threshold with exact mixed-state Gaussianity. For \(0<p<1\), the spectrum of \(\rho_p\) consists of one eigenvalue
\begin{equation}
\lambda_1
=
1-p+\frac{p}{2^n},
\end{equation}
and one eigenvalue
\begin{equation}
\lambda_0
=
\frac{p}{2^n}
\end{equation}
with multiplicity \(2^n-1\). On the other hand, a full-rank FGS is unitarily equivalent, by a Gaussian change of modes, to a product thermal state
\begin{equation}
\rho_G
=
\bigotimes_{j=1}^n
\begin{pmatrix}
1-q_j & 0 \\
0 & q_j
\end{pmatrix}.
\end{equation}
Its eigenvalues are
\begin{equation}
\lambda_{\mathbf x}
=
\prod_{j=1}^n
q_j^{x_j}(1-q_j)^{1-x_j},
\qquad
\mathbf x\in\{0,1\}^n .
\end{equation}
Such a product spectrum can have the multiplicity pattern \(1,2^n-1\) only when all \(q_j=1/2\), in which case all eigenvalues are equal. Therefore, for \(n\ge2\), a state of the form \eqref{eq:global-depol-sm} is Gaussian for \(0<p<1\) only in the maximally mixed case, which occurs exactly at \(p=1\). At \(p=0\), Gaussianity is equivalent to \(\ket{\psi}\) being a pure Gaussian state.

Thus, for Haar-random \(\ket{\psi}\), the depolarized state \(\rho_p\) is non-Gaussian for every \(p<1\) and becomes Gaussian at \(p=1\). The ideal FAF--purity witness has the same endpoint threshold, although the detectable signal can become small near \(p=1\).

\section{Non-Gaussianity witness in noisy matchgate circuits}
\label{app:numerics}

\begin{figure*}[htpb]
	\centering	
\includegraphics[width=0.99\textwidth]{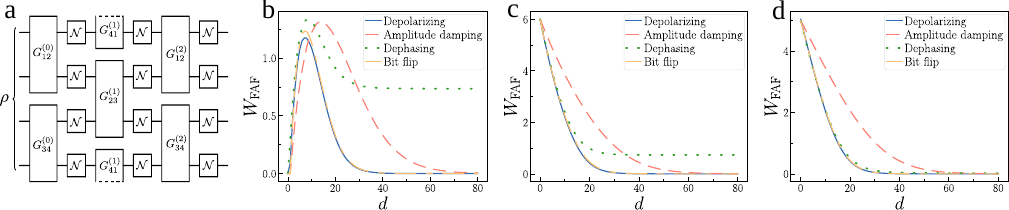}
\vspace{-0.3cm}
\caption{
Fermionic non-Gaussianity witness $W_{\mathrm{FAF}}$ in noisy matchgate (MG) circuits.  \idg{a} Circuit sketch for $n=4$ qubits and depth $d=3$, initialized in $\rho$.  Each layer consists of random matchgates $G_{ij}$ followed by local noise  $\mathcal N$.  \idg{b}--\idg{d} Witness dynamics after $d$ layers for the initial states  \idg{b} $\ket{0}^{\otimes n}$, \idg{c} the GHZ state, and \idg{d} Haar-random states.  We consider local depolarizing noise, amplitude damping, dephasing, and bit-flip noise. Data are averaged over 20 random circuit instances.}
	\label{fig:witness}
\end{figure*}

Here, we perform numerical experiments to probe the FAF--purity witness under noisy free-fermion dynamics.  The goal is to study a simple setting in which the coherent part of the dynamics is free-fermionic, while the noise can drive the state away from the mixed Gaussian manifold.  We consider an \(n\)-qubit system and track the evolution of the witness $W_{\FAF}(\rho)$ as a function of circuit depth $d$.

\prlsection{Initial states.}
We use three representative classes of pure initial states.  The computational basis states $\ket{0}^{\otimes n}$ and $\ket{1}^{\otimes n}$ are pure FGS; initially they satisfy $\FAF_1=W_{\FAF}=0$.  Since matchgate unitaries preserve Gaussianity, any nonzero value of  $W_{\FAF}$ generated from these states is caused by the noise channel together with the intervening Gaussian scrambling, rather than by the coherent dynamics alone.  As non-Gaussian initial states we use the parity-cat state $(\ket{0}^{\otimes n}+\ket{1}^{\otimes n})/\sqrt2$ and Haar-random pure states.  

\prlsection{Noisy matchgate circuit.}
The coherent part of the evolution is a random nearest-neighbor matchgate circuit with periodic boundary conditions.  We use a staggered brickwork pattern, as shown in Fig.~\ref{fig:witness}a, alternating between the pairs $(0,1),(2,3),\ldots$ and $(1,2),(3,4),\ldots$.  On each pair $(i,j)$ we draw an independent two-mode matchgate $G_{ij}=\exp(-iH_{ij})$, where $H_{ij}=(g/\sqrt6)\sum_{a<b\in\mathcal M_{ij}}\theta_{ab} (-i\gamma_a\gamma_b)$, $\mathcal M_{ij}=\{2i-1,2i,2j-1,2j\}$, $\theta_{ab}\sim\mathcal N(0,1)$, and we set $g=1$.  The factor $\sqrt6$ normalizes over the six quadratic Majorana bilinears on two modes.
In the absence of noise, the full depth-$d$ circuit is a fermionic Gaussian unitary that preserves $\mathcal{G}_n$.

\prlsection{Noise models.}
After each matchgate layer we apply an identical one-mode channel
$\mathcal N_p$ to every mode,
\(
    \rho_{\ell+1}
    =
    \mathcal N_p^{\otimes n}
    \!\left(U_\ell\rho_\ell U_\ell^\dagger\right).
\)
In the Jordan--Wigner representation we consider four standard local qubit channels: depolarizing noise, amplitude damping, dephasing, and bit-flip noise.  Depolarizing noise is unital and drives the state toward the maximally mixed Gaussian state.  Amplitude damping is non-unital and drives the system toward the Gaussian vacuum.  Dephasing corresponds to random local parity operations; each branch maps Gaussian states to Gaussian states, but their mixture need not be Gaussian because the mixed Gaussian set is not convex.  Bit-flip noise changes local occupations and, when combined with matchgate scrambling, tends to wash the state toward high-entropy occupation-basis mixtures.

\prlsection{Results.}
The results are shown in Fig.~\ref{fig:witness}.  For Gaussian occupation states, $W_{\FAF}$ is initially zero.  Noise can nevertheless generate positive values of $W_{\FAF}$  after matchgate scrambling, followed by relaxation at larger depths.  The late-time behavior depends strongly on the noise channel:
dephasing can stabilize a finite positive witness, consistent with
non-Gaussian mixtures of Gaussian branches, whereas depolarizing noise and bit-flip noise tend to suppress the witness toward a nearly Gaussian high-entropy state.  Amplitude damping instead relaxes the system toward the Gaussian vacuum.

For initially non-Gaussian states, such as the parity-cat and Haar-random states, the witness typically decreases with depth as local noise degrades the coherent non-Gaussian structure.  The limiting value is not universal and depends on both the initial state and the noise channel.  In particular, dephasing can leave a finite witness for some initial states, whereas the unital and dissipative channels generally drive $W_{\FAF}$ close to zero.  A negative value of $W_{\FAF}$ should not be interpreted as evidence of Gaussianity; it only indicates that this sufficient witness no longer
certifies non-Gaussianity.

\section{Quantum pseudorandomness and pseudo non-Gaussianity}
\label{app:pseudo-nongaussianity}
First, let us recall the subset phase state~\cite{aaronson2022quantum}
\begin{equation}
\label{eq:subset-phase-supp}
    \ket{\psi_{r,S}}
    =
    \frac{1}{\sqrt{M}}
    \sum_{x\in S}
    (-1)^{r(x)}
    \ket{x},
    \qquad
    M=|S|=2^q ,
\end{equation}
where $S\subseteq\{0,1\}^n$ and $r:S\to\{0,1\}$ is a phase function. 

\subsection{FAF of random subset phase states}

\begin{proposition}[FAF of random subset phase states]
\label{prop:pseudorandom-faf}
Let $S$ be a uniformly random subset of $\{0,1\}^n$ of size $M$ and let $r$ be uniformly random, independent of $S$. Then
\begin{equation}
\label{eq:subset-phase-faf-bound-main}
\mathbb E_{S,r}\!\left[\FAF_1(\psi_{r,S})\right]
\ge
n-\frac{n(2n-1)}{M}.
\end{equation}
Consequently, for every $\eta>0$,
\begin{equation}
\Pr_{S,r}\!\left[\FAF_1(\psi_{r,S})\le n-\eta\right]
\le
\frac{n(2n-1)}{M\eta}.
\end{equation}
In particular, if $M=2^q\gg n$, then $\FAF_1(\psi_{r,S})=n-o(n)$ with high probability.
\end{proposition}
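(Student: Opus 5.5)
We use $\FAF_1(\psi)=n-\sum_{a<b}\langle B_{ab}\rangle_\psi^2$ with $B_{ab}=-i\gamma_a\gamma_b$, and show that each of the $\binom{2n}{2}=n(2n-1)$ squared bilinear expectations has average at most $1/M$ over $(S,r)$. This gives the first claim. The second follows from Markov's inequality applied to the nonnegative variable $n-\FAF_1(\psi_{r,S})=\sum_{a<b}\langle B_{ab}\rangle^2$: its mean is at most $n(2n-1)/M$, so $\Pr[n-\FAF_1\ge\eta]\le n(2n-1)/(M\eta)$. Taking $\eta=\sqrt{n(2n-1)n/M}\cdot$(slowly growing factor), or simply $\eta=o(n)$ with $n^2/(M\eta)\to0$, gives the final statement whenever $M\gg n$.

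\emph{Computing one term.} Under Jordan--Wigner, each $B_{ab}$ is a Pauli string $P=\pm X^{\mathbf s}Z^{\mathbf t}$ up to a phase making it Hermitian. Write $\mathbf s\in\{0,1\}^n$ for its $X$-part, which is nonzero unless $a,b$ is a canonical pair $(2j-1,2j)$. Then
\[
\langle\psi_{r,S}|P|\psi_{r,S}\rangle=\frac1M\sum_{x,y\in S,\;y=x\oplus\mathbf s}\omega(x)(-1)^{r(x)+r(y)},
\]
where $\omega(x)$ is a unit-modulus phase. We split into two cases.

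\textbf{Case $\mathbf s\neq0$ (off-diagonal Pauli).} Every contributing pair has $x\ne y$. Averaging over the uniformly random $r$ (independent of $S$), cross terms vanish unless the unordered pairs coincide. Thus $\E_r|\langle P\rangle|^2=\frac{1}{M^2}\cdot 2\,\#\{\{x,x\oplus\mathbf s\}\subseteq S\}$, up to whether the two orientations combine with equal or opposite phases; in either case the value is at most $\frac{2}{M^2}\#\{\text{pairs}\}$. Averaging over $S$, each of the $2^{n-1}$ pairs lies in $S$ with probability $\frac{M(M-1)}{N(N-1)}$, where $N=2^n$. This gives at most $\frac{2}{M^2}\cdot\frac{N}{2}\cdot\frac{M(M-1)}{N(N-1)}\le\frac{M-1}{M(N-1)}\le\frac1M$.

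\textbf{Case $\mathbf s=0$ (diagonal $Z_j$).} Here $\langle Z_j\rangle=\frac1M\sum_{x\in S}(-1)^{x_j}$, independent of $r$. This is a sampling-without-replacement mean of $\pm1$ values with mean zero, so its variance is $\frac1M\cdot\frac{N-M}{N-1}\le\frac1M$.

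Summing the $n(2n-1)$ terms, each bounded by $1/M$, yields the first claim. The only delicate step is the phase bookkeeping in the off-diagonal case when $r$ is averaged. The crude bound of counting each surviving diagonal term with modulus one avoids tracking $\omega$ entirely, so this step is routine, and no earlier result beyond the FAF formula is needed.
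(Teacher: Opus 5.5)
Your skeleton is the same as the paper's: diagonal $Z_j$ terms by sampling without replacement, off-diagonal terms by averaging the random signs, then Markov. The diagonal case and the Markov and ``in particular'' steps are fine. The gap is in the off-diagonal case, at the step you call routine.

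After averaging over $r$, the sign $(-1)^{r(x)+r(x\oplus\mathbf s)}$ is shared by both orientations of a pair. So the within-pair cross terms $\omega(x)\overline{\omega(x\oplus\mathbf s)}$ survive along with the diagonal ones:
\[
\E_r|\langle P\rangle|^2=\frac{1}{M^2}\sum_{\{x,x\oplus\mathbf s\}\subseteq S}\bigl|\omega(x)+\omega(x\oplus\mathbf s)\bigr|^2 .
\]
This equals $\frac{4}{M^2}\#\{\text{pairs}\}$ when the phases are equal and $0$ when they are opposite. Your claim that it is at most $\frac{2}{M^2}\#\{\text{pairs}\}$ ``in either case'' is therefore off by a factor of $2$.

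The per-term bound $\E\langle B_{ab}\rangle^2\le 1/M$, on which your whole proof rests, is false in general. Take $n=2$ and $M=N=4$. The bilinear $B_{23}=X_1X_2$ has $\langle X_1X_2\rangle=\frac12(\epsilon_1+\epsilon_2)$ with independent uniform signs $\epsilon_1,\epsilon_2$, so its mean square is $1/2>1/M$. With the correct factor, your $S$-average gives $\frac{2(M-1)}{M(N-1)}$ per bilinear. That is at most $1/M$ only when $M\le (N+1)/2$. The proposition is stated for arbitrary $M$, and the paper explicitly uses $M=2^n$, so the argument as written does not prove it.

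The missing ingredient is the phase bookkeeping you tried to avoid, which is the observation the paper uses. The state $\psi_{r,S}$ has real amplitudes, and each $B_{ab}$ is a Pauli string with a real sign.
\begin{itemize}
\item Strings with an even number of $Y$'s are real symmetric, so $\omega(x\oplus\mathbf s)=\omega(x)$ and the two orientations add.
\item Strings with an odd number of $Y$'s are imaginary antisymmetric, so their expectation on a real state is exactly zero.
\end{itemize}
For each mode pair $j<k$, exactly two of the four bilinears with mask $e_j\oplus e_k$ are of each kind. So your $\frac{2}{M^2}$ per bilinear is correct only on average over the four. Their combined contribution is
\[
\frac{8}{M^2}\,\E_S\#\{\text{pairs}\}=\frac{4(M-1)}{M(N-1)}\le\frac{4}{M}.
\]
Together with the diagonal contribution $n/M$, this gives $n(2n-1)/M$ for every $M$.

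Once this fix is in place, your route is slightly sharper than the paper's. The paper bounds $|A_{jk}|\le M$ conditionally on $S$, getting $2/M$ per real bilinear. Averaging over $S$, as you do, makes the off-diagonal contribution of order $n^2/N$ rather than $n^2/M$.
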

\begin{proof}

We fix $\ket{\psi_{r,S}}$ via Eq.~\eqref{eq:subset-phase-supp}.
For $a<b$, we consider the Hermitian Majorana bilinear
$
B_{ab}=-\ii\gamma_a\gamma_b,
$
so that $\Gamma_{ab}=\bra{\psi}B_{ab}\ket{\psi}$. The normalized covariance weight is
\begin{equation}
n-\FAF_1(\psi)=\frac{\|\Gamma_\psi\|_F^2}{2}
=\sum_{a<b}|\bra{\psi}B_{ab}\ket{\psi}|^2.
\label{eq:cov-weight-subset-proof}
\end{equation}
There are $n(2n-1)$ terms in the sum.

We first bound the diagonal mode bilinears. For each mode $j$, $B_{2j-1,2j}$ is, up to sign, the computational-basis Pauli $Z_j$. Hence
\[
\bra{\psi_{r,S}}B_{2j-1,2j}\ket{\psi_{r,S}}
=\pm\frac1M\sum_{x\in S}(-1)^{x_j}.
\]
If $S$ is a uniformly random subset of size $M$ from $N=2^n$ bit strings, then sampling without replacement gives
\begin{equation}
\mathbb E_S\left|\frac1M\sum_{x\in S}(-1)^{x_j}\right|^2
=
\frac{N-M}{M(N-1)}
\le \frac1M.
\label{eq:diag-subset-bound}
\end{equation}
This bound is independent of the phases.

For a pair of distinct modes \(j<k\), the four Majorana bilinears connecting these modes have the same bit-flip mask \(s_{jk}\), which flips the occupations of modes \(j\) and \(k\). Since \(\ket{\psi_{r,S}}\) has real amplitudes in the computational basis, the two corresponding Pauli strings with an odd number of
\(Y\)'s have zero expectation value. It remains  to bound the two real strings.

Fix one of these real bilinears and write \(A_{jk}=S\cap(S\oplus s_{jk})\). Partition \(A_{jk}\) into unordered pairs \(\{x,x\oplus s_{jk}\}\). For each such pair, the two directed contributions carry the same random sign \((-1)^{r(x)+r(x\oplus s_{jk})}\). Averaging over the random phases removes cross terms between distinct unordered pairs, and each pair contributes at most \(4\). Hence, conditional on \(S\),
\begin{equation}
    \mathbb E_r
    \left|
    \bra{\psi_{r,S}}B_{ab}\ket{\psi_{r,S}}
    \right|^2
    \le
    \frac{2|A_{jk}|}{M^2}
    \le
    \frac{2}{M}.
\end{equation}
Thus the four off-diagonal bilinears associated with a fixed pair of modes contribute at most \(4/M\) in total. Summing over the \(\binom n2\) mode pairs gives an off-diagonal contribution bounded by \(2n(n-1)/M\). Together with the diagonal contribution \(n/M\), this yields
\begin{equation}
    \mathbb E_{S,r}[n-\FAF_1(\psi_{r,S})]
    \le
    \frac{n(2n-1)}{M}.
\end{equation}
\end{proof}

\subsection{Quantum pseudorandom states}

\begin{definition}[Quantum pseudorandom state ensemble~\cite{ji2018pseudorandom}]
\label{def:prs}
An efficiently preparable ensemble of \(n\)-qubit pure states
\[
\mathcal E_n=\{|\psi_k\rangle:k\in\mathcal K_n\}
\]
with key $k$ from keyspace $\mathcal K=\{0,1\}^{\text{poly}(n)}$ is a quantum pseudorandom state ensemble if, for every quantum polynomial-time
algorithm \(\mathcal D\) and every polynomially bounded number of copies
\(t=\text{poly}(n)\),
\[
\left|
\Pr_{k\leftarrow\mathcal K_n}
\!\left[
\mathcal D\!\left(|\psi_k\rangle^{\otimes t}\right)=1
\right]
-
\Pr_{|\phi\rangle\leftarrow{\rm Haar}}
\!\left[
\mathcal D\!\left(|\phi\rangle^{\otimes t}\right)=1
\right]
\right|
\le
{\rm negl}(n).
\]
Here \({\rm negl}(n)\) denotes a function smaller than \(1/p(n)\) for every
polynomial \(p\), and \(|\phi\rangle\leftarrow{\rm Haar}\) denotes a Haar-random
pure state on Hilbert space \(\mathcal H_N\).
\end{definition}

Equivalently, no efficient observer with access to polynomially many copies
can distinguish the pseudorandom ensemble from the Haar ensemble with more than
negligible advantage.  The definition is operational: any property that can be
estimated efficiently from polynomially many copies and that has a sharply
different value on Haar-random states must also hold, up to negligible error,
for pseudorandom states.

\subsection{Pseudo non-Gaussianity gaps}

The two-copy Bell/FAF measurement estimates \(\FAF_1\) efficiently.  Therefore,
if a pseudorandom ensemble had noticeably smaller \(\FAF_1\) than Haar-random
states, the Bell/FAF measurement would give an efficient distinguisher.  This
motivates the following definition, following similar definitions for other quantum resources~\cite{aaronson2022quantum,haug2025pseudorandom,gu2023pseudomagic}:

\begin{definition}[Pseudo non-Gaussianity]
\label{def:pseudo-nongaussianity-gap}
Let \(f(n)\) and \(g(n)\) be two functions with \(0\le g(n)<f(n)\le n\).  Two
efficiently preparable ensembles $\{\psi_k\}_k$, $\{\phi_k\}_k$ exhibit a pseudo
non-Gaussianity gap \((f_n,g_n)\) with respect to a non-Gaussianity measure
\(\mathcal M\) if:
\begin{enumerate}
    \item Every state in the ensemble can be prepared by a polynomial-size
    quantum circuit
    \item The two ensembles are computationally indistinguishable from Haar-random
    states using polynomially many copies
    \item States in the first ensemble have large non-Gaussianity,
    \[
    \mathcal M(\psi_k)= f(n)
    \]
    with overwhelming probability, whereas the second ensemble has low non-Gaussianity
    \[
    \mathcal M(\phi_k)= g(n) .
    \]
\end{enumerate}

\end{definition}
As FAF can be efficiently estimated to $1/\text{poly}(n)$ precision, this forces the low non-Gaussianity ensemble to be 
\[
g(n)=\mathbb E[\FAF_{1}(\phi_k)]_k=n-\text{negl}(n),
\]
for all but a negligible fraction of keys.  Otherwise, estimating \(\FAF_1\)
with the Bell/FAF measurement to inverse-polynomial precision would separate
this  ensemble from a highly non-Gaussian ensemble such as Haar-random states, which have (see~\eqref{eq:haar-faf-expectation-sm})
\[
f(n)=\mathbb E[\FAF_{1}(\psi_k)]_k=n-\Theta(2^{-n}).
\]
Instead of Haar random states, the highly non-Gaussian ensemble can also be efficiently preparable subset phase states with $M =2^n$, which have similar non-Gaussianity as Haar random states.

\subsection{Preparation lower bound from local non-Gaussian gates}

The FAF also gives a simple circuit lower bound in a
restricted preparation model.  Consider circuits that start from a fermionic
Gaussian state and interleave arbitrary fermionic Gaussian unitaries with
\(t\) non-Gaussian gates, each supported on at most \(m\) fermionic modes.
We recall that Gaussian unitaries preserve \(\FAF_{1}\).

A gate supported on \(m\) modes can change only covariance entries involving
those \(m\) modes.  
Since any physical covariance matrix has singular values at
most one, the Frobenius weight of the affected covariance rows and columns is
\(O(m)\).  

More formally, let $A\subset\{1,\ldots,2n\}$ be the set of Majorana indices belonging to a subset of $m$ fermionic modes and $\vert A\vert=2m$. Let $B=A^c$ be the complementary set of Majorana indices. We write the covariance matrix for any intermediate state $\rho$ in block form as
\[
\Gamma
=
\begin{pmatrix}
\Gamma_{AA} & \Gamma_{AB}\\
-\Gamma_{AB}^T & \Gamma_{BB}
\end{pmatrix},
\]
and define the covariance weight involving the subsystem $A$ by
\[
S_A(\Gamma)
:=
\sum_{\substack{a<b\\ a\in A\ \mathrm{or}\ b\in A}}
\Gamma_{ab}^2 =
\sum_{\substack{a<b\\ a,b\in A}}\Gamma_{ab}^2
+
\sum_{\substack{a\in A\\ b\in B}}\Gamma_{ab}^2
=
\frac12\|\Gamma_{AA}\|_F^2+\|\Gamma_{AB}\|_F^2 .
\]
We first record a simple bound on $S_A(\Gamma)$. For any physical covariance matrix, all singular values of $\Gamma$ are bounded by one, implying $\Gamma\Gamma^T\le I_{2n}$.
Let $P_A$ denote the projector onto the Majorana indices in $A$. Then
\[
\|P_A\Gamma\|_F^2=\mathrm{tr}(P_A\Gamma\Gamma^TP_A)\le\mathrm{tr}(P_A)
=|A|=2m .
\]
On the other hand,
\[
\|P_A\Gamma\|_F^2
=\|\Gamma_{AA}\|_F^2+\|\Gamma_{AB}\|_F^2
=2\sum_{\substack{a<b\\a,b\in A}}\Gamma_{ab}^2+\sum_{\substack{a\in A\\b\in B}}\Gamma_{ab}^2 .
\]
Since all terms are nonnegative, this implies
\[
S_A(\Gamma)
=
\sum_{\substack{a<b\\a,b\in A}}\Gamma_{ab}^2
+
\sum_{\substack{a\in A\\b\in B}}\Gamma_{ab}^2
\le
\|P_A\Gamma\|_F^2
\le
2m .
\]
Now let $U_A$ be an arbitrary unitary supported only on the $m$ fermionic modes corresponding to $A$, and define
\[
\rho' = U_A\rho U_A^\dagger,
\qquad
\Gamma'=\Gamma_{\rho'} .
\]
Since $U_A$ acts trivially on the complement $B$, every bilinear supported entirely on $B$ is unchanged and thus $\Gamma'_{BB}=\Gamma_{BB}$.
Therefore the change in $\mathrm{FAF}_1$ can only come from covariance entries involving at least one Majorana index in $A$. 
From the definition of $\FAF_1$, we obtain
\[
\mathrm{FAF}_1(\rho')
-
\mathrm{FAF}_1(\rho)
=
-\Bigl(S_A(\Gamma')-S_A(\Gamma)\Bigr).
\]
Thus
\[
\left|
\mathrm{FAF}_1(\rho')
-
\mathrm{FAF}_1(\rho)
\right|
\le
S_A(\Gamma')+S_A(\Gamma).
\]
Applying the bound $S_A(\Gamma)\le 2m$ to both $\Gamma$ and $\Gamma'$ gives
\begin{equation}
\left|\mathrm{FAF}_1(U_A\rho U_A^\dagger)
-\mathrm{FAF}_1(\rho)\right|\le4m.
\label{eq:fafLocChange}
\end{equation}
Therefore an arbitrary unitary acting on $m$ fermionic modes can change $\mathrm{FAF}_1$ by at most $O(m)$.
In passing, we note that the inequality~\eqref{eq:fafLocChange} generalizes a similar result in Sec.~IVa of \cite{faf2025} that was specialized to $\rho \in \mathcal{G}_n$.

We can now apply above considerations to circuits made from Gaussian unitaries and local non-Gaussian gates. Fermionic Gaussian unitaries 
preserve $\mathrm{FAF}_1$:
\(
\mathrm{FAF}_1(U_G\rho U_G^\dagger)
=
\mathrm{FAF}_1(\rho).
\)
Suppose a circuit starts from a pure fermionic Gaussian state $\rho_0$, so that
\[
\mathrm{FAF}_1(\rho_0)=0,
\]
and then applies arbitrary Gaussian unitaries interleaved with $t$ non-Gaussian gates, each supported on at most $m$ fermionic modes. Since Gaussian gates do not change $\mathrm{FAF}_1$, and each local non-Gaussian gate can increase $\mathrm{FAF}_1$ by at most $4m$, we have
\[
\mathrm{FAF}_1(\rho_{\mathrm{out}})
\le
4mt .
\]
Consequently, if a target family of states satisfies
\[
\mathrm{FAF}_1(\rho_{\mathrm{out}})\ge c n
\]
for some constant $c>0$, then any such preparation circuit must contain at least
\(
t\ge \frac{c}{4}\frac{n}{m}
\)
local non-Gaussian gates. In particular, for constant-size non-Gaussian gates, $m=O(1)$, this implies the bound
\[
t=\Omega(n).
\]
Combining this with the minimal pseudoresource gap $g(n)$, any
pseudorandom state ensemble generated in this model must satisfy
\(
O(mt)\ge n-\text{negl}(n),
\)
and hence
\(
t=\Omega(n/m).
\)
In particular, if the non-Gaussian gates have constant support with $m=O(1)$, then the count of non-Gaussian gates to prepare a pseudo non-Gaussian ensemble is
\(
t=\Omega(n).
\)

\section{Fock-basis coherence from Bell measurement}
\label{app:coherence}

Here, we show that addition to fermionic non-Gaussianity, one can use the same two-copy Bell measurement data to estimate a simple quadratic coherence functional in the Fock basis.
Here coherence~\cite{streltsov2017colloquium} is defined with respect to the Fock (or computational) basis
\(
\{|x\rangle:x\in\{0,1\}^n\}.
\)
Let \(\Delta\) denote complete dephasing in this basis,
\[
\Delta(\rho)
=
\sum_{x\in\{0,1\}^n}
|x\rangle\!\langle x|\rho|x\rangle\!\langle x|.
\]
A natural quadratic coherence measure is
\begin{equation}
\label{eq:C2-coherence-def}
C_2(\rho)
=
\Tr(\rho^2)-\Tr(\Delta(\rho)^2).
\end{equation}
This quantity is nonnegative, basis dependent, and vanishes for every Fock-basis diagonal
state.  For a pure state
\(
|\psi\rangle
=
\sum_x \psi_x |x\rangle ,
\)
one has \(\Tr(|\psi\rangle\!\langle\psi|^2)=1\), and therefore
\begin{equation}
\label{eq:C2-pure}
C_2(\psi)
=
1-
\sum_x |\psi_x|^4 .
\end{equation}
Thus \(C_2(\psi)\) is one minus the collision probability of the Fock-basis measurement distribution~\cite{Dalzell22anti, lami2025,Turkeshi24hilbert,p8dn-glcw,lkwg-4dbt,aditya2025growthspreadingquantumresources,magni2025anticoncentrationstatedesigndoped,clayes}.

Consider two copies of a pure state,
\[
|\psi\rangle\otimes|\psi\rangle .
\]
The Bell measurement is implemented by applying,
for each mode \(j=1,\ldots,n\), a CNOT gate from copy \(A\) to copy \(B\),
followed by a Hadamard gate on copy \(A\), and then measuring both copies in the occupation basis.  Let
$
u\in\{0,1\}^n
$
denote the measured bit string on the first copy and
$
v\in\{0,1\}^n
$
the measured bit string on the second copy after the Bell basis change.

The second-register string \(v\) records the bitwise difference between the two
pre-Bell occupation strings.  In particular,
$
v=0^n
$
occurs exactly when the two original occupation-basis samples coincide.  More
explicitly, writing
$
|\psi\rangle=\sum_x \psi_x |x\rangle ,
$
the two-copy state before the Bell basis change is
\[
|\psi\rangle^{\otimes2}
=
\sum_{x,y}\psi_x\psi_y\,|x\rangle_A|y\rangle_B .
\]
After the CNOTs \(A\to B\), the second register contains \(x\oplus y\):
\[
|x\rangle_A|y\rangle_B
\mapsto
|x\rangle_A|x\oplus y\rangle_B .
\]
Hence the event \(v=0^n\) receives contributions only from \(x=y\).  The
subsequent Hadamards on the first register do not change the marginal
probability of the second register.  Therefore
\begin{equation}
\label{eq:bell-vzero-collision}
\Pr[v=0^n]
=
\sum_x |\psi_x|^4
=
\Tr(\Delta(|\psi\rangle\!\langle\psi|)^2).
\end{equation}
The same reasoning has a direct mixed-state extension. For an arbitrary mixed state \(\rho\), the same event estimates the dephased purity:
\begin{equation}
    \Pr[v=0^n]
    =
    \Tr(\Delta(\rho)^2).
\end{equation}
Indeed, the condition \(v=0^n\) projects the two copies onto equal computational-basis strings. The same Bell outcome also gives the eigenvalue of the global swap operator,
\begin{equation}
    X_{\rm pur}
    =
    (-1)^{\sum_{j=1}^n u_jv_j},
    \qquad
    \mathbb E[X_{\rm pur}]
    =
    \Tr(\rho^2).
\end{equation}
Therefore,
\begin{equation}
    X_{\rm pur}-\mathbf 1[v=0^n]
\end{equation}
is an unbiased single-shot estimator of the quadratic Fock-basis coherence
\begin{equation}
    C_2(\rho)
    =
    \Tr(\rho^2)-\Tr(\Delta(\rho)^2).
\end{equation}
For pure states, \(X_{\rm pur}=1\) deterministically, and the estimator reduces
to the Bernoulli variable
\begin{equation}
    Y=\mathbf 1[v\neq0^n],
    \qquad
    \mathbb E[Y]=C_2(\psi).
\end{equation}
This estimator is the destructive Bell-measurement analogue of the \(q=2\) computational-basis inverse participation ratio measurement of Ref.~\cite{Liu25parti}: the event \(v=0^n\) directly estimates the collision probability \(I_2^Z=\sum_x|\psi_x|^4\), so that \(C_2(\psi)=1-I_2^Z(\psi)\).

For pure states, let \(Y_1,\ldots,Y_N\) be the outcomes from \(N\) independent
Bell shots, with \(Y_r=\mathbf 1[v_r\neq0^n]\). The estimator
\begin{equation}
    \widehat C_2
    =
    \frac1N\sum_{r=1}^N Y_r
\end{equation}
is unbiased. Since each \(Y_r\in\{0,1\}\), Hoeffding's inequality gives
\begin{equation}
    \Pr\!\left[
    |\widehat C_2-C_2(\psi)|\ge \eta
    \right]
    \le
    2\exp(-2N\eta^2).
\end{equation}
Consequently, \(N\ge (2\eta^2)^{-1}\log(2/\delta)\) Bell shots suffice to
estimate \(C_2(\psi)\) to additive accuracy \(\eta\) with failure probability at
most \(\delta\).
Thus the sample complexity is
\[
N
=
O\!\left(
\eta^{-2}\log\frac1\delta
\right),
\]
i.e. is independent of qubit number $n$.

\section{Other Gaussianity tests}\label{app:tests}

In this section, we compare different single and multi-copy  fermionic non-Gaussianity tests from the literature.

\subsection{Random purification test}
Ref.~\cite{walter2025random} introduced the random-purification-based Gaussianity test with sample complexity $O(n^2/\epsilon^2)$. The central idea of this method is to reduce mixed-state fermionic Gaussian tomography and testing to the pure-state case: given many copies of a mixed state, a random purification channel produces copies of a randomly chosen purification which is itself fermionic Gaussian whenever the input state is fermionic Gaussian. One can then apply pure-state Gaussian tomography or testing to the purified state. Both learning and testing scales as $O(n^2)$. 
However, the protocol requires implementing the fermionic random purification channel over $O(n^2)$ copies of the state, which does not seem to have a direct Clifford implementation.

\subsection{Matchgate shadow and joint measurement tests}

One can test fermionic non-Gaussianity by first learning the covariance matrix and then evaluating $\FAF_1$. For pure states, Lemma~\ref{lem:distance-faf} implies that every state with $\epsilon_G(\psi)\ge\epsilon$ satisfies $\FAF_1(\psi)\ge 2\epsilon^2$.  It is therefore enough to estimate $\FAF_1$ to additive precision $\eta=\Theta(\epsilon^2)$. Matchgate-shadow measurements estimate quadratic Majorana correlators with variance parameter \(O(n)\). Hence all \(O(n^2)\) correlators \(\mu_{ab}=\langle B_{ab}\rangle\) can be learned to additive accuracy \(\alpha\), uniformly over \(a<b\), using \(O(n\alpha^{-2}\log(n/\delta))\) samples. Denoting the corresponding estimates by \(\widehat\mu_{ab}\), we have, on this high-probability event,
\[
    \left|
        \sum_{a<b}\widehat\mu_{ab}^{\,2}
        -
        \sum_{a<b}\mu_{ab}^2
    \right|
    \le
    O(n^{3/2}\alpha+n^2\alpha^2),
\]
where we used \(\sum_{a<b}\mu_{ab}^2\le n\).
Hence, it suffices to take
$\alpha=\Theta(\eta/n^{3/2})$.  This gives
\[
    N_{\rm MG}
    =
    O\!\left(
        n^4\eta^{-2}\log(n/\delta)
    \right)
\]
samples for additive estimation of $\FAF_1$. 
We note that for the joint measurement test, Ref.~\cite{majsak2025simple} notes that one achieves the same measurement complexity as matchgate shadows. 

Then, from the Lemma~\ref{lem:distance-faf} bounding the Gaussian fidelity and $\FAF_1$, we get
\[
    N_{\rm MG}^{\rm test}
    =
    O\!\left(
        n^4\epsilon^{-4}\log(n/\delta)
    \right)
\]
for pure-state Gaussianity testing for both matchgate shadow and joint measurement testers.

\subsection{Covariance tomography test}

A natural covariance-matrix spectrum baseline, closely related to the covariance-matrix approach of Bittel \emph{et al.}~\cite{bittel2025optimal}; see also Ref.~\cite{mele2025efficient}, is to estimate the full covariance matrix, compute its normal eigenvalues, and accept only if all of them are close to one.
Using commuting measurements for the covariance matrix, this gives a pure-state Gaussianity tester with sample complexity $O(n^5\epsilon^{-4}\log(n/\delta))$. 
The worse scaling compared with the \(\FAF_1\)-based estimators comes from resolving each covariance singular value to accuracy \(O(\epsilon^2/n)\). By contrast, \(\FAF_1=\sum_j(1-\nu_j^2)\) is a single scalar gap and only needs to be estimated to accuracy \(\Theta(\epsilon^2)\).

\subsection{Randomized FAF estimation protocol}
Here, for comparison with the single-copy $\FAF_1$ estimator from Theorem~\ref{thm:single-copy-tester}, we
consider the most direct randomized estimator of
\[
    \FAF_1(\rho)
    =
    n-\sum_{e\in\mathcal E}\langle B_e\rangle_\rho^2,
    \qquad
    \mathcal E=\{(a,b):1\le a<b\le 2n\}.
\]
Let $N_b:=|\mathcal E|=n(2n-1)$.  In one trial, choose
$e\in\mathcal E$ uniformly at random, measure $B_e$ on two independent
preparations of $\rho$, and multiply the outcomes $X,Y\in\{\pm1\}$.  Since
the two measurements are independent,
$
    \mathbb E[XY\mid e]=\langle B_e\rangle_\rho^2
$.
Therefore the rescaled variable $Z:=N_bXY$ satisfies
\(
    \mathbb E[Z]
    =
    \sum_{e\in\mathcal E}\langle B_e\rangle_\rho^2 ,
\)
and hence
\(
    \widehat{\FAF}_1
    =
    n-\frac1N\sum_{r=1}^N Z_r
\)
is an unbiased estimator of $\FAF_1(\rho)$.

This estimator is simple but inefficient: each trial probes only one of the
$N_b=O(n^2)$ bilinears, and $|Z|=N_b$.  Thus
$\operatorname{Var}(Z)\le N_b^2=O(n^4)$, so additive root-mean-square
accuracy $\eta$ requires $N=O(n^4\eta^{-2})$ trials.  With standard
success-probability amplification this becomes
$O(n^4\eta^{-2}\log(1/\delta))$.  For pure-state testing,
$\eta=\Theta(\epsilon^2)$, giving
\[
    N
    =
    O\!\left(
        n^4\epsilon^{-4}\log(1/\delta)
    \right).
\]
The matching estimator of Theorem~\ref{thm:single-copy-tester} improves the leading
scaling by one power of $n$ by measuring $n$ commuting bilinears in each
setting, rather than sampling them one at a time.

\subsection{Convolution test}
The convolution test of Ref.~\cite{lyu2024convolution,coffman2025measuring} applies the balanced fermionic convolution to two copies and SWAP-tests the result against a third copy. The test is shown in Fig.~\ref{fig:convolution-circuit-paper}.
Its rejection probability is given by
\begin{equation}
q_{\boxplus}(\psi)
=
\frac12\left[
1-\Tr\bigl(\rho(\rho\boxplus\rho)\bigr)
\right],
\qquad
\rho=\ket{\psi}\!\bra{\psi}.
\end{equation}
where $\boxplus$ denotes the fermionic convolution given by a balanced fermionic beam splitter
\begin{equation}
\label{eq:fermionic-bs-main}
\mathcal B_{\mathrm F}=\exp\!\left[\frac{\pi}{4}\sum_{j=1}^n(a_j^\dagger b_j-b_j^\dagger a_j)\right],
\end{equation}
with
\[
\mathcal B_{\mathrm F}^\dagger a_j\mathcal B_{\mathrm F}=\frac{a_j+b_j}{\sqrt2},
\qquad
\mathcal B_{\mathrm F}^\dagger b_j\mathcal B_{\mathrm F}=\frac{b_j-a_j}{\sqrt2}.
\]
$q_{\boxplus}(\psi)$ is exact for pure states where we have $q_{\boxplus}=0$ if and only if $\ket{\psi}$ is pure Gaussian~\cite{lyu2024convolution,coffman2025measuring}. 

This can be seen as a Bernoulli test with rejection probability
\[
q_{\boxplus}(\psi)=
\frac12\left(1-\operatorname{tr}[\rho(\rho\boxplus\rho)]\right).
\]
which requires \(N_\text{shot}=O(q_{\boxplus}^{-1}\log(1/\delta))\) shots.
A \(\epsilon\)-far testing bound would require a stability estimate encapsulated by function $s_n(\epsilon)$ with
\begin{equation}
    q_{\boxplus}(\psi)\ge s_n(\epsilon) \quad \forall \ket{\psi} \, \mathrm{with} \, D_{\rm tr}(\psi,\mathcal G_n)\ge\epsilon.
\end{equation}
as function of $n$, however no such bound is known to us.

\begin{figure}[htbp]
\centering
\includegraphics[width=0.33\textwidth]{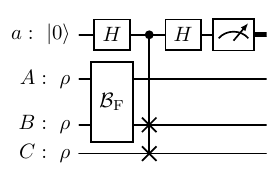}
\caption{Three-copy convolution test~\cite{lyu2024convolution,coffman2025measuring}. Two copies are mixed by the balanced fermionic beam splitter $\mathcal B_{\mathrm F}=\exp\!\left[\frac{\pi}{4}\sum_{j=1}^n(a_j^\dagger b_j-b_j^\dagger a_j)\right]$, which realizes $\rho\boxplus\rho$ after discarding the upper arm. A SWAP test with a fresh copy estimates $\Tr[\rho(\rho\boxplus\rho)]$, giving the test rejection probability $q_{\boxplus}(\rho)=\frac12(1-\Tr[\rho(\rho\boxplus\rho)])$.}
\label{fig:convolution-circuit-paper}
\end{figure}

\section{Higher-order FAF as a $2k$-copy observable}
\label{app:higher-faf}

We consider the Majorana bilinears
$
B_{ab}=-\ii\gamma_a\gamma_b,
\qquad a,b=1,\dots,2n,
$
with the convention $B_{aa}=0$, so that
$
\Tr(\rho B_{ab})=(\Gamma_\rho)_{ab}.
$
For $k\ge1$, define the $2k$-copy observable
\begin{equation}
\label{eq:higher-faf-observable}
\widehat T_k
=
\frac{(-1)^k}{2}
\sum_{a_1,\dots,a_{2k}=1}^{2n}
B_{a_1a_2}^{(1)}B_{a_2a_3}^{(2)}\cdots
B_{a_{2k}a_1}^{(2k)},
\end{equation}
where superscripts denote the copy on which an operator acts and indices are cyclically identified. Then
\begin{align}
\Tr(\widehat T_k\rho^{\otimes 2k})=
\frac{(-1)^k}{2}
\sum_{a_1,\dots,a_{2k}}
\Gamma_{a_1a_2}\Gamma_{a_2a_3}\cdots\Gamma_{a_{2k}a_1}=
\frac{1}{2}\Tr[(-\Gamma^2)^k].
\end{align}
The last equality follows from
$
\Tr(\Gamma^{2k})=
\sum_{a_1,\dots,a_{2k}}
\Gamma_{a_1a_2}\Gamma_{a_2a_3}\cdots\Gamma_{a_{2k}a_1}
$
and $(-\Gamma^2)^k=(-1)^k\Gamma^{2k}$. Therefore
\[
\FAF_k(\rho)
=
n-\Tr(\widehat T_k\rho^{\otimes 2k}).
\]
This proves the claimed $2k$-copy linearization of $\FAF_k$ (see also~\cite{faf2025}). Operationally, one may estimate \eqref{eq:higher-faf-observable} by sampling the indices $(a_1,\dots,a_{2k})$ and measuring one quadratic Majorana observable on each of the $2k$ copies, or by designing a collective commutant measurement. The $k=1$ observable is precisely the $\FAF_1$ observable.

\section{Wick-violation witnesses compared to FAF witness}
\label{app:wick-vs-faf}

We briefly review how Wick's theorem gives a direct way of witnessing fermionic non-Gaussianity~\cite{pachos2022quantifying,coffman2025measuring}, and then discuss a simple family of states for which low-order Wick tests are ineffective although the covariance--purity FAF witness directly detects non-Gaussianity.

For an ordered subset \(A=\{a_1<\cdots<a_\ell\}\subseteq [2n]\), let us write 
\[ \gamma_A = i^{\ell(\ell-1)/2} \gamma_{a_1}\cdots \gamma_{a_\ell}. \]
Then, Wick's theorem states that every even Majorana correlator is fixed by its covariance matrix:
\[
    \operatorname{tr}\!\left(\rho\,\gamma_{a_1}\cdots\gamma_{a_{2q}}\right)
    =
    \operatorname{Pf}\!\left(\Gamma_A\right),
    \qquad
    A=\{a_1,\ldots,a_{2q}\},
\]
where \(\Gamma_A\) is the \(2q\times 2q\) principal submatrix of \(\Gamma\) restricted to the indices in \(A\). Thus a Wick residual
\[
    \Delta_A(\rho)
    :=
    \operatorname{tr}(\rho\,\gamma_A)
    -
    \operatorname{Pf}(\Gamma_A)
\]
is a non-Gaussianity witness: if \(\Delta_A(\rho)\neq 0\) for some even subset \(A\), then \(\rho\) is not a fermionic Gaussian state.

The simplest nontrivial case is the four-point Wick residual. For four distinct indices \(i,j,k,\ell\), Wick's theorem predicts
\[
\begin{aligned}
    \operatorname{tr}(\rho\,\gamma_i\gamma_j\gamma_k\gamma_\ell)
    =
    \operatorname{tr}(\rho\,\gamma_i\gamma_j)
    \operatorname{tr}(\rho\,\gamma_k\gamma_\ell)
    -
    \operatorname{tr}(\rho\,\gamma_i\gamma_k)
    \operatorname{tr}(\rho\,\gamma_j\gamma_\ell)  
    +
    \operatorname{tr}(\rho\,\gamma_i\gamma_\ell)
    \operatorname{tr}(\rho\,\gamma_j\gamma_k),
\end{aligned}
\]
or, equivalently, in covariance-matrix notation,
\[
    \Delta_{ijkl}(\rho)
    =
    \operatorname{tr}(\rho\,\gamma_i\gamma_j\gamma_k\gamma_\ell)
    -
    \operatorname{Pf}(\Gamma_{\{i,j,k,\ell\}}).
\]
A nonzero \(\Delta_{ijkl}\) certifies non-Gaussianity. Checking all four-point residuals requires $\binom{2n}{4}$ correlators, which is polynomial in \(n\). More generally, checking all Wick identities up to order \(2K\) requires    $\sum_{q=2}^{K}\binom{2n}{2q}$
correlators, which is polynomial for fixed \(K\), but becomes exponential if one wants a complete Wick test over all even orders:
\[
    \sum_{q=2}^{n}\binom{2n}{2q}
    =
    2^{2n-1}-1-\binom{2n}{2}.
\]
Thus Wick witnesses are conceptually complete but experimentally expensive if no structure is assumed.

We now give a simple example illustrating this limitation. Define the fermionic parity operator
\[
    P
    =
    i^n \gamma_1\gamma_2\cdots\gamma_{2n},
\]
and consider the mixed state
\[
    \rho_\alpha
    =
    \frac{1}{2^n}
    \left(\mathbb{I}+\alpha P\right),
    \qquad
    0<|\alpha|\le 1.
    \label{eq:parity-biased-state}
\]
This is a valid density matrix because the eigenvalues of \(P\) are \(\pm1\), so the eigenvalues of \(\rho_\alpha\) are \((1\pm\alpha)/2^n\).

The state \(\rho_\alpha\) has a particularly simple Majorana moment structure. Since every non-identity Majorana monomial is traceless, and since \(P\gamma_A\) is again a non-identity Majorana monomial whenever \(0<|A|<2n\), we have
\[
    \operatorname{tr}(\rho_\alpha \gamma_A)=0,
    \qquad
    0<|A|<2n.
\]
For the Gaussian state with the same covariance matrix, namely the maximally mixed Gaussian state
\[
    \rho_{\mathrm{G}}
    =
    \frac{\id}{2^n},
\]
Wick's theorem predicts that every nonempty even correlator vanishes. Therefore \(\rho_\alpha\) satisfies every Wick identity at every order strictly below \(2n\), of which there are exponentially many. The only nonzero nontrivial Majorana moment is the full parity moment:
\[
    \operatorname{tr}(\rho_\alpha P)=\alpha,
\]
or equivalently
\[
    \operatorname{tr}\!\left(\rho_\alpha\,\gamma_1\gamma_2\cdots\gamma_{2n}\right)
    =
    (-i)^n\alpha.
\]
Since \(\operatorname{Pf}(\Gamma)=0\) for \(\Gamma=0\), the only independent Wick violation is the full \(2n\)-point residual
\[
    \Delta_{[2n]}(\rho_\alpha)
    =
    \operatorname{tr}\!\left(\rho_\alpha\,\gamma_1\gamma_2\cdots\gamma_{2n}\right)
    -
    \operatorname{Pf}(\Gamma)
    =
    (-i)^n\alpha.
\]
Hence this state is non-Gaussian, but its non-Gaussianity is hidden entirely in a top-order Majorana correlator.

This example is problematic for several natural Wick-based strategies.

First, measuring all Wick residuals up to a fixed order \(2K<2n\) fails deterministically. Indeed, all correlators of order below \(2n\) vanish and agree with the Wick prediction for \(\Gamma=0\). Thus any fixed-order Wick hierarchy is blind to the state \(\rho_\alpha\) once \(n>K\).

Second, random sampling of Wick residuals is also inefficient, as only one Wick-term is violated. As such, there is only an exponentially small probability in \(n\) to find a violation.

The parity-biased state is, however, detected immediately by the purity-corrected FAF witness. Since $\operatorname{tr}(\rho_\alpha^2) = \frac{1+\alpha^2}{2^n}$ and $\FAF_1(\rho_\alpha)=n$, we have
\begin{equation}
    W_{\text{FAF}}(\rho_\alpha)=n-2n\left[
    1-\frac{(1+\alpha^2)^{1/n}}{2}
    \right]
\end{equation}
Since \((1+\alpha^2)^{1/n}>1\) for every \(\alpha\neq 0\), we have $W_{\text{FAF}}(\rho_\alpha)>0$. Hence the purity-corrected FAF witness certifies that \(\rho_\alpha\) is not a mixed fermionic Gaussian state.

This example highlights the complementary nature of the two approaches. Wick witnesses are microscopic and complete if all even orders are measured: they identify the specific cumulants that violate Gaussian factorization. However, low-order Wick tests can miss non-Gaussianity that is stored in high-order correlations. The FAF witness is much more compressed. It does not identify the violating Wick residual, but it can certify that the covariance matrix and purity are incompatible with any mixed Gaussian state. In the example above, this compression is advantageous: the non-Gaussianity is invisible to all fixed-order Wick tests, yet it is seen directly from the FAF witness.

\end{document}

%% file: output.bbl
%apsrev4-2.bst 2019-01-14 (MD) hand-edited version of apsrev4-1.bst
%Control: key (0)
%Control: author (8) initials jnrlst
%Control: editor formatted (1) identically to author
%Control: production of article title (0) allowed
%Control: page (0) single
%Control: year (1) truncated
%Control: production of eprint (0) enabled
%